\newcommand{\appref}[1]{{\hyperref[proof:#1]{\appsymb}}}
\newcommand{\appendixsection}[1]{%
}
\newcommand{\toappendix}[1]{%
    #1
}
\newcommand{\appendixproof}[2]{%
    #2
}
\definecolor{italyGreen}{RGB}{0, 146, 70}
\definecolor{italyRed}{RGB}{206, 43, 55}
\newcommand{\problemtitle}[1]{\gdef\@problemtitle{#1}}%
\newcommand{\probleminput}[1]{\gdef\@probleminput{#1}}%
\newcommand{\problemquestion}[1]{\gdef\@problemquestion{#1}}%
\newcommand{\problemoutput}[1]{\gdef\@problemoutput{#1}}%
  \par\addvspace{.5\baselineskip}
  \par\addvspace{.5\baselineskip}
  \par\addvspace{.5\baselineskip}
  \par\addvspace{.5\baselineskip}
\newtheorem{rrule}[theorem]{Data Reduction Rule}
\crefname{rrule}{Data Reduction Rule}{Data Reduction Rules}
\newcommand{\NN}{\mathds{N}}
\newcommand{\densestEdgeDeletion}{\textsc{Bounded-Density Edge  Deletion}}
\newcommand{\densestEdgeDeletionMinK}{\textsc{Min Edge Deletion Bounded-Density}}
\newcommand{\densestEdgeDeletionMinRho}{\textsc{Min Density Edge Deletion}}
\newcommand{\densestVertexDeletion}{\textsc{Bounded-Density Vertex Deletion}}
\newcommand{\densestVertexDeletionMinK}{\textsc{Min Vertex Deletion Bounded-Density}}
\newcommand{\densestVertexDeletionMinRho}{\textsc{Min Density  Vertex Deletion}}
\DeclareMathOperator{\cost}{cost}
\DeclareMathOperator{\ob}{ob}
\DeclareMathOperator{\mad}{mad}
\title{Destroying Densest Subgraphs is Hard}
\author{Cristina Bazgan}{Universit\'e   Paris-Dauphine, PSL Research University, CNRS, UMR  7243, LAMSADE, Paris, France}{cristina.bazgan@dauphine.fr}{https://orcid.org/0000-0002-5460-6222}{}
\author{André {Nichterlein}}{Algorithmics and Computational Complexity, Technische Universität Berlin, Germany}{andre.nichterlein@tu-berlin.de}{https://orcid.org/0000-0001-7451-9401}{}
\author{Sofia {Vazquez Alferez}}{Universit\'e   Paris-Dauphine, PSL Research University, CNRS, UMR  7243, LAMSADE, Paris, France}{sofia.vazquez-alferez@lamsade.dauphine.fr}{https://orcid.org/0000-0002-1541-8683}{}
\authorrunning{C.\ Bazgan and A.\ Nichterlein and S.\ Vazquez Alferez}
\keywords{Graph modification problems, NP-hardness, fixed-parameter tractability, W-hardness, special graph classes}
\begin{document}

\maketitle

\begin{abstract}
	We analyze the computational complexity of the following computational problems called \densestEdgeDeletion{} and \densestVertexDeletion{}:
	Given a graph~$G$, a budget~$k$ and a target density~$\tau_\rho$, are there~$k$ edges ($k$ vertices) whose removal from~$G$ results in a graph where the densest subgraph has density at most~$\tau_\rho$?
	Here, the density of a graph is the number of its edges divided by the number of its vertices.
	We prove that both problems are polynomial-time solvable on trees and cliques but are NP-complete on planar bipartite graphs and split graphs.
	From a parameterized point of view, we show that both problems are fixed-parameter tractable with respect to the vertex cover number but W[1]-hard with respect to the solution size.
	Furthermore, we prove that \densestEdgeDeletion{} is W[1]-hard with respect to the feedback edge number, demonstrating that the problem remains hard on very sparse graphs.
\end{abstract}

\newpage

\section{Introduction}

Finding a densest subgraph is a central problem with applications ranging from social network analysis to bioinformatics to finance~\cite{LMFB23}.
There is a rich literature on this topic with the first polynomial-time algorithms given more than 40 years ago~\cite{G1984,PQ82}.
In this work, we study the robustness of densest subgraphs under perturbations of the input graph. %
More precisely, we study the (parameterized) complexity of the following two computational problems called \densestEdgeDeletion{} and \densestVertexDeletion{}: 
Given a graph~$G$, a budget~$k$ and a target density~$\tau_\rho$, the questions are whether there are~$k$ edges, respectively, $k$ vertices, whose removal from~$G$ results in a graph where the densest subgraph has density at most~$\tau_\rho$?
Here, the density~$\rho(G)$ of a graph~$G$ is defined as the ratio between its number~$m$ of edges and number~$n$ of vertices, that is, $\rho(G)=m/n$, which is equal to half the average degree of~$G$.
Thus, we contribute to the literature on graph modification problems with degree constraints~\cite{FNN16,MS12,Nic15}.
More broadly, our work fits into parameterized algorithmics on graph modification problems---a line of research with a plethora of results. See Crespelle et al.~\cite{CDFG23} for a recent survey focusing on edge modification problems.

Denote with $\rho^*(G)$ the density of a densest subgraph of~$G$.
Note that cycles have density exactly one and forests a density below one.
Thus, it is easy to see that a graph~$G$ is a forest if and only if~$\rho^*(G) < 1$.
Hence, our problems contain the \NP-hard \textsc{Feedback Vertex Set} and the polynomial-time solvable \textsc{Feedback Edge Set}\footnote{Given a graph~$G$ and an integer~$k$, \textsc{Feedback Vertex Set} (\textsc{Feedback Edge Set}) asks if there is a set of~$k$ vertices ($k$ edges) whose removal makes~$G$ acyclic.}, respectively, as special cases.
For target densities smaller than one not only are cycles to be destroyed, but also a bound on the size of the remaining connected components is implied. 
For example, for~$\tau_\rho = 2/3$ ($= 1/2$ or $=0$) each connected component in the resulting graph can have at most~$2$ edges ($1$ edge for~$\tau_\rho=1/2$ or $0$ edges for~$\tau_\rho=0$).
Consequently, \densestVertexDeletion{} generalizes \textsc{Dissociation Set}
($\tau_\rho = 1/2$) and \textsc{Vertex Cover} ($\tau_\rho = 0$).
\densestEdgeDeletion{} generalizes \textsc{Maximum Cardinality Matching} ($\tau_\rho = 1/2$) and the NP-hard \textsc{Maximum $P_3$-packing} ($\tau_\rho = 2/3$) where the non-deleted edges form the matching and $P_3$-packing, respectively.

\subparagraph*{Our contributions.}
We refer to \cref{tab:overview} for an overview of our results.
\begin{table}
	\caption{Our results for \densestEdgeDeletion{} / \textsc{Vertex Deletion}.}
	\label{tab:overview}
	\begin{tabularx}{\textwidth}{llXX}
		\toprule
		& & \textsc{Edge Deletion} & \textsc{Vertex Deletion} \\
		\midrule
		& Trees & $O(n^3)$ (\cref{thm:EdgeDel-poly-on-trees}) & $O(n)$ (\cref{thm:vertexDel-Trees}) \\
		& Cliques & $O(n^2)$ (\cref{thm:EdgeDel-poly-on-cliques}) & $O(n^2)$ (trivial) \\
		& Split & NP-complete (\cref{thm:edgeDeletion-NP-hard-on-Split}) & NP-complete (\cref{thm:vertex-del-split-graph}) \\
		& Planar Bipartite & NP-complete (\cref{thm:hardness-densest-edge-bipartite}) & NP-complete (\cref{thm:VertexDeletion_NP_Hard-bipartite})\\
		\midrule
		& Solution Size~$k$ & W[1]-hard (\cref{thm:edge-del-w-hard}) & W[2]-hard (\cref{thm:vertex-del-w2})\\
		& Vertex Cover Number & FPT (\cref{thm:edge-deletion-FPT-vertex-cover-number}) & FPT (\cref{thm:VertexDeletion_FPT_VC})\\
		& Feedback Edge Number & W[1]-hard (\cref{thm:edge-del-w-hard}) & ? \\
		\bottomrule
	\end{tabularx}
\end{table}
Given the above connections to known computational problems, we start with the seemingly easier of the two problems: \densestEdgeDeletion{}.
We provide polynomial-time algorithms for specific target densities below one (see \cref{fig:density-edge-deletion}) or when the input is a clique or tree (see \cref{sec:poly-time-edge-del}).
However, beyond these cases, the problem turns out to be surprisingly hard.
There are target densities above or below one for which it is NP-hard, see \cref{fig:density-edge-deletion} for an overview.
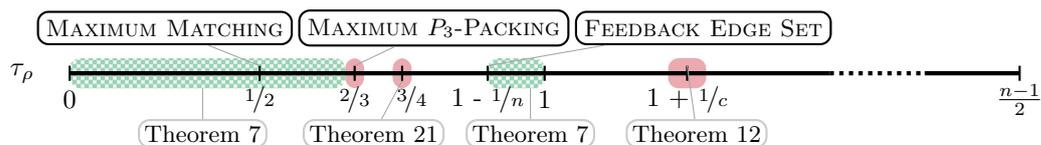
\begin{figure}
	\begin{tikzpicture}[xscale=1.25,node distance = 2pt]
		\def\probh{0.6}
		\def\halfPos{2}
		\def\twoThirdPos{3}
		\def\ThreeQuarterPos{3.5}
		\def\almostOnePos{4.4}
		\def\OnePos{5}
		\def\OnePlus{6.5}

		\tikzset{prob/.style={rectangle, rounded corners, draw=black, thick, inner sep=3pt}}
		\tikzset{thm/.style={rectangle, rounded corners, draw=black!20, thick, inner sep=2pt}}
		\tikzset{poly/.style={pattern={Hatch[angle=45,distance={3pt/sqrt(2)},xshift=.1pt,line width=1pt]}, pattern color=italyGreen!40,rounded corners}}
		\tikzset{nph/.style={fill=italyRed!40,rounded corners}}
		
		\fill[poly] (0,0.2) rectangle (\twoThirdPos - 0.1,-0.2);

		\fill[nph] (\twoThirdPos - 0.1,0.2) rectangle (\twoThirdPos + 0.1,-0.2);
		\fill[nph] (\ThreeQuarterPos - 0.1,0.2) rectangle (\ThreeQuarterPos + 0.1,-0.2);

		\fill[poly] (\almostOnePos,0.2) rectangle (\OnePos,-0.2);

		\fill[nph] (\OnePlus - 0.2,0.2) rectangle (\OnePlus + 0.2,-0.2);

		\node at  (-0.5,0)   {$\tau_\rho$};
		
		\draw[ultra thick] (0,0) -- (8,0);
		\draw[ultra thick,dotted] (8,0) -- (9,0);
		\draw[ultra thick] (9,0) -- (10,0);

		\draw[thick] (0,-0.1) -- (0,0.1);
		\node at  (0,-0.35)   {0};

		\node[] (triv1) at  (-0.5, \probh) {}; %

		\draw[thick] (\halfPos,-0.1) -- (\halfPos,0.1);
		\node at  (\halfPos,-0.35)   {\nicefrac{1}{2}};

		\node[prob,right=of triv1] (MM) {\small \textsc{Maximum Matching}};
		\draw[black!35] (\halfPos,0.1) -- (MM);

		\draw[thick] (\twoThirdPos,-0.1) -- (\twoThirdPos,0.1);
		\node at  (\twoThirdPos,-0.35)   {\nicefrac{2}{3}};

		\node[thm] at (\halfPos - 0.6, -0.8) (poly) {\small \cref{thm:edge-del-poly-density-intervals}};
		\draw[black!35] (\halfPos - 0.6,-0.2) -- (poly);

		\node[thm] at (\almostOnePos + 0.45, -0.8) (poly) {\small \cref{thm:edge-del-poly-density-intervals}};
		\draw[black!35] (\almostOnePos + 0.45,-0.2) -- (poly);

		\node[thm] at (\ThreeQuarterPos - 0.3, -0.8) (split) {\small \cref{thm:edgeDeletion-NP-hard-on-Split}};
		\draw[black!35] (\ThreeQuarterPos,-0.1) -- (split);

		\draw[thick] (\ThreeQuarterPos,-0.1) -- (\ThreeQuarterPos,0.1);
		\node at  (\ThreeQuarterPos + 0.1,-0.35)   {\nicefrac{3}{4}};

		\node[prob,right=of MM] (P3) {\small \textsc{Maximum $P_3$-Packing}};
		\draw[black!35] (\twoThirdPos,0.1) -- (P3);

		\draw[thick] (\almostOnePos,-0.1) -- (\almostOnePos,0.1);
		\node at (\almostOnePos,-0.35)   {1 - \nicefrac{1}{$n$}};

		\node[prob,minimum width=2.5cm,right=of P3] (FES) {\small \textsc{Feedback Edge Set}};
		\draw[black!35] (\almostOnePos,0.1) -- (FES);

		\draw[thick] (\OnePos,-0.1) -- (\OnePos,0.1);
		\node at (\OnePos,-0.35)   {1};

		\draw[thick] (\OnePlus,-0.1) -- (\OnePlus,0.1);
		\node at (\OnePlus,-0.35)   {1 + \nicefrac{1}{$c$}};

		\node[thm] (NP-h) at (\OnePlus + 0.1,-0.8) {\small \cref{thm:edgeDel-NP-Hard-rho>1}};
		\draw[black!35] (\OnePlus,0.1) -- (NP-h);
		
		\draw[thick] (10,-0.1) -- (10,0.1);
		\node at  (10,-0.35)   {$\frac{n-1}{2}$};

	\end{tikzpicture}
	\caption{
		The computational complexity and special cases of \densestEdgeDeletion{} for specific values of the target density~$\tau_\rho$, see \cref{sec:poly-time-edge-del,sec:np-hardness-edge-del} for the details.
		Green (hatched) boxes indicate polynomial-time solvable cases while red (solid) boxes denote NP-hard cases.
		The~$c$ in $1+1/c$ can be any constant larger than 24.
		The complexity for larger values of~$\tau_\rho$ remains open.
	}

 \label{fig:density-edge-deletion}
\end{figure}
We show that \densestEdgeDeletion{} remains NP-hard on claw-free cubic planar, planar bipartite, and split graphs (see \cref{sec:np-hardness-edge-del}).
Moreover, we prove W[1]-hardness with respect to the combined parameter~$k$ and feedback edge number.
This implies that the problem remains hard even on very sparse graphs as the feedback edge number in a connected graph is~$m - n + 1$, despite being polynomial-time solvable on trees.
Note that this also implies W[1]-hardness with respect to prominent parameters like treewidth.
Moreover, our employed reduction shows W[1]-hardness for $T_{h+1}$-\textsc{Free Edge Deletion}\footnote{Given a graph and an integer~$k$, the question is whether~$k$ edges can be removed so that no connected component has more than~$h$ vertices?} with respect to the treewidth, thus answering an open question by Enright and Meeks~\cite{EM2018}.
On the positive side, using integer linear programming, we classify the problem as fixed-parameter tractable with respect to the vertex cover number (see \cref{sec:edge-del-parameterized}).

Turning to \densestVertexDeletion{}, we derive NP-hardness for all~$\tau_\rho \in [0,n^{1-1/c}]$ for any constant~$c$.
Note that the density of a graph is between~$0$ and~$(n-1)/2$ and the case~$\tau_\rho = (n-1)/2$ is trivial.
Moreover, we show NP-hardness on planar bipartite graphs of maximum degree four, line graphs of planar bipartite graphs, and split graphs (see \cref{sec:vertex-del-NP-hardness}) as well as a polynomial-time algorithm for trees (see \cref{sec:vertex-del-poly-time}).
Furthermore, we prove W[2]-hardness with respect to~$k$ and fixed-parameter tractability with respect to the vertex cover number (see \cref{sec:vertex-del-parameterized}).
Notably, the latter algorithm is easier than in the edge deletion setting; in particular it does not rely on integer linear programming.

\subparagraph*{Further related work.}

The density as defined above is related to a variety of useful concepts. 
It belongs to a family of functions of the form $f(G,a,b,c)=a|E(G)|/(b|V(G)|-c)$ with $a,b,c\in \mathbb{Q}$. 
Depending on the values of~$a$, $b$ and $c$, the function has been used to study a variety of network properties~\cite{KHLL2009}. 
For instance $\rho(G)=f(G,1,1,0)$ is used in the study of random graphs~\cite{B1998}, whilst $f(G,1,1,1)$ comes up in the study of vulnerability of networks~\cite{C1985}, and $f(G,1,3,6)$ is used to study rigid frameworks~\cite{L1970}. 
A more general class of functions can be studied, where $a, b, c$ are not rational numbers but functions. 
For instance, Hobbs~\cite{H1991} studies the vulnerability of a graph $G$ by finding the subgraph $H$ of $G$ that maximizes $|E(H)|/(|V(H)|-\omega(H))$ with $\omega(H)$ being the number of connected components in $H$. The interpretation being that attacking such a subgraph would lead to the maximum number of connected components being created per unit of effort spent on the attack, where the effort is proportional to the number of edges being targeted by the attacker.

The maximum average degree $\mad(G)$ of a graph $G$ is the maximum of the average degrees of all subgraphs of~$G$. 
Note that~$\mad(G) = 2\rho^*(G)$. %
Recently, Nadara and Smulewicz~\cite{NS2022} showed that for every graph $G$ and positive integer $k$ such that $\mad(G) > k$, there exists a polynomial-time algorithm to compute a subset of vertices $S\subseteq V(G)$ such that $\mad(G-S) \leq \mad(G)- k$ and every subgraph of $G[S]$ has minimum degree at most $k-1$. 
Though no guarantees are given that $S$ has minimum size for subsets $S$ that achieve $\mad(G-S) \leq \mad(G)-k$.

Modifying a graph to bound its maximum average degree can be of use because of a variety of results on the colorability of graphs with bounded $\mad$. 
In general, $\mad$ can be used to give a bound on the chromatic number $\chi(G)$ of $G$, as $\chi(G)\leq \lfloor \mad(G)\rfloor+1$ \cite{NM2012}. 
It is well-known that for any planar graph $G$, the maximum average degree is related to the girth $g(G)$ of $G$ in the following way: $(\mad(G)-2)(g(G)-2)<4$~\cite{NS2022}.
Several results are known for variations of coloring problems and~$\mad$~\cite{BLP2014,BKY2013,BK2011,KY2017}.

\section{Preliminaries} 
\subparagraph*{Notation.}\label{sec:Notation}
For~$n \in \NN$ we set~$[n] = \{1,2,\ldots,n\}$. 
Let $G$ be a simple, undirected, and unweighted graph. 
We denote the set of vertices of $G$ by $V(G)$ and the set of edges of $G$ by $E(G)$. 
We set~$n_G = |V(G)|$ and $m_G = |E(G)|$. 
We denote the degree of a vertex~$v\in V(G)$ by~$\deg_G(v)$.
If the graph is clear from context, then we drop the subscript.
The minimum degree of $G$ is denoted by~$\delta(G)$, and the maximum degree of $G$ is denoted by~$\Delta(G)$.
We denote with~$H \subseteq G$ that~$H$ is a subgraph of~$G$.
The \emph{density of $G$} is~$\rho(G)=m/n$. 
We define the density of the empty graph as zero.
We denote by~$\rho^*(G)$ the density of the densest subgraph of $G$, that is, $\rho^*(G)=\max_{H \subseteq G}\rho(H)$. 
For a subset of vertices $W\subseteq V(G)$, we denote with $G[W]$ the subgraph \emph{induced} by $W$. %
For two subsets of vertices $W,U\subseteq V(G)$ we set $E(W,U)$ to be the set of edges with one endpoint in $W$ and another in $U$.

We denote by $P_n$ the path on $n$ vertices, by $K_n$ the complete graph on $n$ vertices (also called a clique of size $n$), and by $K_{a,b}$ the complete bipartite graph with $a$ and $b$ the size of its two vertex sets.
A graph $G$ is \emph{$r$-regular} if $\deg(v)=r$ for every vertex $v\in G$.
A \emph{perfect $P_3$-packing of G} is a partition of $V(G)$ into sets $V_1,V_2,\ldots,V_{n/3}$ such that for all $i\in [n/3]$ the graph $G[V_i]$ is isomorphic to $P_3$. 

A graph $G$ is \emph{balanced} if $\rho(G')\leq \rho(G)$ for every subgraph $G'\subseteq G$. %
Let $\rho'(G)=\frac{m}{n-1}$ for $1 \leq n-1$ and define $\rho'$ to be 0 for the empty graph and one-vertex graph. 
A graph $G$ is \emph{strongly balanced} if $\rho'(G')\leq \rho'(G)$ for every subgraph $G'\subseteq G$. 
Ruciński and Vince~\cite[page 252]{RV1986} point out that every strongly balanced graph is also balanced, though the converse is not true.

\subparagraph*{Problem Definitions.}
The problem definition for the edge deletion variant is as follows (the definition for vertex deletion is analogous):

\begin{problem}
    \problemtitle{\densestEdgeDeletion}
    \probleminput{A graph $G$, an integer $k\geq 0$ and a rational number $\tau_\rho \ge 0$.}
    \problemquestion{Is there a subset $F\subseteq E(G)$ with $|F|\leq k$ such that~$\rho^*(G- F)\leq\tau_\rho$?}
\end{problem}

There are two natural optimization problems associated to \densestEdgeDeletion{} which we call \densestEdgeDeletionMinRho{} (given~$k$ minimize~$\tau_\rho$) and \densestEdgeDeletionMinK{} (given~$\tau_\rho$ minimize the number of edge deletions~$k$). 

We emphasize that all problems for vertex deletion are defined and named analogously.

\subparagraph*{Useful Observations.}
We often compare the ratio of vertices to edges in different induced subgraphs. 
To this end, the following basic result is useful.
\begin{lemma}
	\label{lem:fractions} 
	$\frac{a}{b}\leq \frac{a+c}{b+d} \iff  \frac{a}{b}\leq \frac{c}{d}$  and $\frac{a}{b} = \frac{a+c}{b+d} \iff \frac{a}{b} = \frac{c}{d}$.
\end{lemma}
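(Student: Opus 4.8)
The plan is to reduce both equivalences to the single cross-multiplied statement $ad \le bc$ (respectively $ad = bc$), which is the symmetric pivot connecting all three fractions. Throughout I would assume the denominators $b$ and $d$ — and hence $b+d$ — are positive; this is exactly the situation in all of our applications, where they count vertices (or vertices minus one). I would record this hypothesis explicitly at the outset, since without it (for instance $b+d=0$) the statement is vacuous or false.

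First I would treat the inequality. Because $b>0$ and $b+d>0$, multiplying $\frac{a}{b}\le\frac{a+c}{b+d}$ through by the positive quantity $b(b+d)$ preserves the direction of the inequality and gives $a(b+d)\le b(a+c)$; expanding both sides and cancelling the common term $ab$ leaves precisely $ad\le bc$. Dividing by $bd>0$ then yields $\frac{a}{b}\le\frac{c}{d}$. Every one of these steps is an equivalence (each multiplication/division is by a strictly positive quantity, and cancellation of $ab$ is reversible), so chaining them proves the first biconditional.

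For the equality I would run the identical computation with $=$ in place of $\le$: $\frac{a}{b}=\frac{a+c}{b+d}$ is equivalent, after multiplying by $b(b+d)$ and cancelling $ab$, to $ad=bc$, which in turn is equivalent to $\frac{a}{b}=\frac{c}{d}$. Alternatively, one can simply invoke the already-proved inequality case in both directions, combining $\frac{a}{b}\le\frac{c}{d}$ with $\frac{c}{d}\le\frac{a}{b}$ (and likewise on the mediant side) to obtain equality from antisymmetry.

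I do not anticipate any real obstacle: the argument is a two-line cross-multiplication. The only subtlety worth flagging is the positivity assumption on the denominators, so I would make sure the lemma statement or its proof makes that standing convention clear rather than leaving it implicit.
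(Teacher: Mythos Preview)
Your argument is correct and is essentially identical to the paper's own proof, which also cross-multiplies to reduce both sides to $ad \le bc$ (and remarks that equality is analogous). The only addition you make is to spell out the positivity assumption on the denominators, which the paper leaves implicit.
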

\appendixproof{lem:fractions}
{
\begin{proof}
	For the inequality:
	$$\frac{a}{b} \leq \frac{a+c}{b+d} \iff a(b+d) \leq b(a+c) \iff ad \leq cb \iff \frac{a}{b} \leq \frac{c}{d}$$
	The proof for equality is analogous.
\end{proof}
}
The following is a collection of easy observations that can be obtained with \cref{lem:fractions}.

\begin{lemma}
	\label{lem:nice-props} 
	Let~$G^*$ be a densest subgraph of~$G$ with~$\rho(G^*) = \rho^*(G)$.
	Then:
	\begin{enumerate}
		\item If~$G^*$ is not connected, then each connected component~$C$ of~$G^*$ has density~$\rho^*(G)$. \label{niceprop-densest-subgraph-connected}
		\item If~$\rho(G^*) = a / b$ for $a,b \in \NN$ and~$a < b$, then~$a = b-1$ and~$G^*$ is a tree on~$b$ vertices or a forest where each tree is on~$b$ vertices. \label{niceprop-trees-density}
		\item Any vertex~$v \notin V(G^*)$ has at most~$\lfloor\rho(G^*)\rfloor$ neighbors in~$V(G^*)$. \label{niceprop-maxneighbors-in-densest}
		\item The minimum degree in~$G^*$ is at least $\lceil\rho(G^*)\rceil$. This is tight for trees. \label{niceprop-minneighbors-within-densest}
	\end{enumerate}
\end{lemma}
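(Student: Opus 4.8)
The plan is to derive all four items from \cref{lem:fractions}, used as a ``mediant'' bound: for positive $b,d$ the fraction $\frac{a+c}{b+d}$ lies between $\frac ab$ and $\frac cd$, with equality to one of them only if $\frac ab=\frac cd$. The other ingredient is the maximality of $\rho(G^*)=\rho^*(G)$: every subgraph obtained from $G^*$ by deleting a connected component, deleting a vertex, or attaching one new vertex is still a subgraph of $G$, hence has density at most $\rho(G^*)$. Each item then follows by choosing the right decomposition and reading \cref{lem:fractions} in the appropriate direction; I will dispose of the degenerate cases $n_{G^*}\le 1$ and $\rho(G^*)=0$ up front, since they are immediate.

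For \cref{niceprop-densest-subgraph-connected}, write $G^*$ as the disjoint union of its components $C_1,\dots,C_t$ with $t\ge 2$. Iterating the mediant bound gives $\min_i\rho(C_i)\le\rho(G^*)\le\max_i\rho(C_i)$; since each $C_i\subseteq G$ we have $\rho(C_i)\le\rho(G^*)$, so $\max_i\rho(C_i)=\rho(G^*)$. If some $C_j$ had $\rho(C_j)<\rho(G^*)$, then \cref{lem:fractions} with $a=m_{C_j}$, $b=n_{C_j}$, $c=m_{G^*}-m_{C_j}$, $d=n_{G^*}-n_{C_j}$ yields $\rho(G^*-C_j)=\frac cd>\frac{a+c}{b+d}=\rho(G^*)$, contradicting maximality (note $G^*-C_j$ is nonempty, as it still contains a component of density $\rho(G^*)$). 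Hence every component has density exactly $\rho(G^*)$. For \cref{niceprop-trees-density}, if $\rho(G^*)<1$ then each component $C$ of $G^*$ has density $<1$ (by \cref{niceprop-densest-subgraph-connected} when $G^*$ is disconnected, trivially otherwise), so $m_C<n_C$; combined with $m_C\ge n_C-1$ (connectedness) this forces $m_C=n_C-1$, i.e.\ $C$ is a tree. If $G^*$ is disconnected, \cref{niceprop-densest-subgraph-connected} forces all tree-components to share density $(n_C-1)/n_C$, and since $x\mapsto(x-1)/x$ is injective they share a common vertex count $b$; thus $\rho(G^*)=(b-1)/b$, which written in lowest terms is precisely ``$a=b-1$'', and $G^*$ is a forest of trees on $b$ vertices (a single such tree if connected).

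For \cref{niceprop-maxneighbors-in-densest} and \cref{niceprop-minneighbors-within-densest} I may assume $G^*$ is induced, since adding the missing edges among $V(G^*)$ only raises the density. If $v\notin V(G^*)$ has $\ell$ neighbors in $V(G^*)$, then $G[V(G^*)\cup\{v\}]$ has $m_{G^*}+\ell$ edges and $n_{G^*}+1$ vertices, so $\frac{m_{G^*}+\ell}{n_{G^*}+1}\le\rho(G^*)$; \cref{lem:fractions} with $(a,b,c,d)=(m_{G^*},n_{G^*},\ell,1)$ turns this into $\ell\le\rho(G^*)$, hence $\ell\le\lfloor\rho(G^*)\rfloor$. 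Dually, pick $v\in V(G^*)$ with $\deg_{G^*}(v)=\delta(G^*)=\ell$ (assuming $n_{G^*}\ge 2$); then $G^*-v$ has $m_{G^*}-\ell$ edges and $n_{G^*}-1$ vertices, so $\frac{m_{G^*}-\ell}{n_{G^*}-1}\le\rho(G^*)$, and \cref{lem:fractions} with $(a,b,c,d)=(m_{G^*}-\ell,\,n_{G^*}-1,\,\ell,\,1)$ gives $\rho(G^*)\le\ell$, i.e.\ $\lceil\rho(G^*)\rceil\le\delta(G^*)$. Tightness is witnessed by any tree on at least two vertices, which is its own densest subgraph, has minimum degree $1$, and satisfies $\lceil(n-1)/n\rceil=1$.

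I do not anticipate a real obstacle: the whole proof is short bookkeeping layered on \cref{lem:fractions}. The only points needing care are the convention in \cref{niceprop-trees-density} (the statement ``$a=b-1$'' is clean only for the reduced fraction, which the injectivity argument supplies) and the trivial handling of the small cases $n_{G^*}\le 1$ and $\rho(G^*)=0$, which I would state explicitly at the outset.
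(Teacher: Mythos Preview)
Your proposal is correct and follows essentially the same approach as the paper: each item is obtained by applying \cref{lem:fractions} to the natural decomposition (split off a component, add an outside vertex, delete a minimum-degree vertex) and invoking maximality of $\rho(G^*)$. Your treatment is in fact slightly more careful than the paper's---you explicitly note the reduced-fraction caveat in \cref{niceprop-trees-density}, the assumption that $G^*$ may be taken induced, and the degenerate cases $n_{G^*}\le 1$---but the underlying argument is identical.
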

\appendixproof{lem:nice-props}
{
\begin{proof}
	\begin{enumerate}
		\item Let~$C$ be a connected component of~$G^*$ with~$a$ edges and~$b$ vertices and let~$c = m_{G^*}-a$ and~$d=n_{G^*}-b$. %
			If~$a/b > \rho(G^*)$, then~$G^*$ was not the densest subgraph, a contradiction.
			If~$a/b < \rho(G^*) = (a+c)/(b+d)$, then~$G[V(G^*) \setminus C]$ has, by \cref{lem:fractions}, density~$c/d > \rho(G^*)$.
			Again, $G^*$ was not the densest subgraph, a contradiction.
			Thus, $a/b = \rho(G^*)$.
		\item 
			If $\rho(G^*) = a / b$ for $a,b \in \NN$ and~$a < b$, then $G^*$ cannot contain a cycle, as any cycle has density $1$. 
			Therefore, $G^*$ is a forest, and so is every subgraph of $G$. 
			A tree on $n$ vertices has density $(n-1)/n$. 
			Since $f(n)=(n-1)/n$ is an increasing function, the largest tree in $G^*$ has density exactly $a/b$. 
			Therefore $a=b-1$. 
			Moreover, by \cref{lem:nice-props} (\ref{niceprop-densest-subgraph-connected}.) no other tree in $G^*$ can have less than $a$ edges and $b$ vertices. 
        \item 
			Suppose the statement is false.
			Let~$\rho(G^*) = a/b$ for some~$a,b \in \NN$.
			Then, by \cref{lem:fractions},
			\[\rho(G^*) = \frac{a}{b} < \frac{a + |E(\{v\},V(G^*))|}{b + 1} = \rho(G^*\cup\{v\}).\]
			Thus, $G^*$ was not the densest subgraph, a contradiction.
        \item 
			If there exists a vertex $v\in V(G^*)$ with $\deg_{G^*}(v)<\rho(G^*)$ then, similar to the previous case, $\rho(G^*\backslash\{v\})>\rho(G^*)$ which is a contradiction. Therefore, $\delta(G^*)\geq \rho(G^*)$ and since the minimum degree is an integer, $\delta(G^*)\geq \lceil\rho(G^*)\rceil$. This is tight for trees. 
    \end{enumerate} 
\end{proof}
}

Note that \cref{lem:nice-props} (\ref{niceprop-minneighbors-within-densest}.) implies that we can remove vertices with degree less than our desired target density~$\tau_\rho$.

\begin{rrule}\label{rr:low-degree}
	Let~$v$ be a vertex with degree~$\deg(v) < \tau_\rho$. Then delete~$v$.
\end{rrule}

The following two observations imply that there are only a polynomial number of ``interesting'' values for the target density~$\tau_\rho$.
Thus, if we have an algorithm for the decision problem, then, using binary search, one can solve the optimization problems with little overhead in running time.

\begin{observation}\label{obs:possible_density_on_n_vertices}
	The density of a graph $G$ on $n$ vertices can have values between $0$ and $(n-1)/2$ in intervals of $1/n$:
	$\rho(G) \in \{ 0, 1/n, 2/n, \ldots, \binom{n}{2}/n = (n-1)/2 \}$.
\end{observation}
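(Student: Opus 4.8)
The plan is to observe that the density $\rho(G) = m/n$ is completely determined by the edge count $m = |E(G)|$, since $n$ is fixed, and then to enumerate the possible values of $m$.

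First I would note that $G$ is a simple graph, so the number of edges is an integer satisfying $0 \le m \le \binom{n}{2}$, the bound on the right being the number of edges of the complete graph $K_n$; moreover every integer in this range is realized by some graph on $n$ vertices (e.g.\ take $K_n$ and delete edges one at a time). Dividing through by $n$, the density takes exactly the values $m/n$ for $m \in \{0, 1, \ldots, \binom{n}{2}\}$, which is the claimed set $\{0, 1/n, 2/n, \ldots, \binom{n}{2}/n\}$, with consecutive values differing by exactly $1/n$.

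Finally I would simplify the largest value: $\binom{n}{2}/n = \tfrac{n(n-1)/2}{n} = (n-1)/2$, confirming that the density ranges from $0$ up to $(n-1)/2$ in steps of $1/n$.

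There is no real obstacle here; the only point requiring any care is the trivial remark that every integer edge count between $0$ and $\binom{n}{2}$ is attainable, so that the stated set is exactly the set of possible densities rather than merely a superset.
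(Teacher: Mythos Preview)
Your argument is correct and is exactly the obvious one; the paper itself states this as an observation without proof, so there is nothing to compare.
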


\begin{observation}[\cite{G1984}]\label{obs:min_distance_between_densities}
    The maximum density of a subgraph of $G$ can take only a finite number of values: $\rho^*(G)\in \{m' / n'\mid 0\leq m'\leq m, 1 \leq n'\leq n\}$. 
    Moreover, the minimum distance between two different possible values of $\rho^*(G)$ is at least $1/(n(n-1))$.
\end{observation}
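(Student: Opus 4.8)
The plan is to establish the two assertions separately; both are elementary facts about fractions with bounded denominators.

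The first assertion is immediate. Every nonempty subgraph $H \subseteq G$ has density $\rho(H) = m_H/n_H$ with $0 \le m_H \le m$ and $1 \le n_H \le n$, and the empty graph has density $0 = 0/1$. Since $\rho^*(G) = \max_{H \subseteq G} \rho(H)$ by definition, the value $\rho^*(G)$ lies in the finite set $\{m'/n' \mid 0 \le m' \le m,\ 1 \le n' \le n\}$. (One could additionally invoke the first item of \cref{lem:nice-props} to assume the maximizing $H$ is connected, but this is not needed here.)

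For the second assertion, I would take two distinct candidate values $m_1/n_1 \ne m_2/n_2$ from the above set, so $1 \le n_1, n_2 \le n$, and put their difference over a common denominator:
\[
\left| \frac{m_1}{n_1} - \frac{m_2}{n_2} \right| \;=\; \frac{|m_1 n_2 - m_2 n_1|}{n_1 n_2},
\]
where the numerator is a nonzero integer---nonzero exactly because the two fractions differ---hence at least $1$. This already gives the bound $1/(n_1 n_2) \ge 1/n^2$; the point is to sharpen it to $1/(n(n-1))$, which I would do with a short case distinction. If $n_1 = n_2 = n$, then $m_1 \ne m_2$, so the two values differ by $|m_1 - m_2|/n \ge 1/n \ge 1/(n(n-1))$ (assuming $n \ge 2$; for $n \le 1$ there is only one possible value and the claim is vacuous). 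Otherwise at least one of $n_1, n_2$ is at most $n-1$, whence $n_1 n_2 \le n(n-1)$ and the difference is again at least $1/(n(n-1))$. Since every value actually attained by $\rho^*$ lies in the candidate set, this lower bound also applies to gaps between attainable values.

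I do not foresee a genuine obstacle. The only step requiring a moment's care is the refinement from the naive $1/n^2$ bound, and the case split above is precisely what captures it: a subgraph using all $n$ vertices realizes only a single density $m/n$, so two distinct ``full'' densities already differ by $1/n$, while any other pair of distinct candidate densities involves a denominator strictly below $n$.
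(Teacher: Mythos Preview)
Your argument is correct. The paper does not actually prove this observation; it simply states it with a citation to Goldberg~\cite{G1984}, so there is no proof in the paper to compare against. Your elementary fraction argument (common denominator plus the case split on whether both denominators equal $n$) is the standard way to obtain the $1/(n(n-1))$ gap and is exactly what Goldberg's argument amounts to.

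One minor wording issue in your final paragraph: the sentence ``a subgraph using all $n$ vertices realizes only a single density $m/n$, so two distinct `full' densities already differ by $1/n$'' reads oddly, since if only one density is realized there are not two distinct ones. What you mean (and what your formal Case~1 correctly handles) is that two distinct \emph{candidate} values with denominator $n$ differ by at least $1/n$; the uniqueness of the actual full-graph density is irrelevant here. I would drop or rephrase that sentence, but the mathematics preceding it is already complete.
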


\appendixsection{sec:Notation}

\toappendix{

\begin{proposition}\label{prop:equiv} 
\densestEdgeDeletionMinRho{} is polynomial-time solvable if and only if  \densestEdgeDeletionMinK{}  is polynomial-time solvable. The same equivalence holds between \densestVertexDeletionMinRho{} and  \densestVertexDeletionMinK{}.
\end{proposition}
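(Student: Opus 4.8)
The plan is to prove both equivalences by exhibiting, in each case, a polynomial-time Turing reduction between the two optimization problems based on binary search. The key enabling fact is \cref{obs:possible_density_on_n_vertices,obs:min_distance_between_densities}: for a graph~$G$ with~$n$ vertices and~$m$ edges, and for any edge set~$F$ (or vertex set~$S$), the quantity~$\rho^*(G-F)$ is the density of a subgraph of~$G$ and hence lies in~$D(G) := \{m'/n' \mid 0 \le m' \le m,\ 1 \le n' \le n\}$, a set of~$O(n^3)$ elements that can be computed and sorted in polynomial time; moreover every element of~$D(G)$ is a rational with polynomially bounded numerator and denominator, so comparisons are exact.

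\textbf{\densestEdgeDeletionMinK{} solvable $\Rightarrow$ \densestEdgeDeletionMinRho{} solvable.} Given an instance~$(G,k)$ of \densestEdgeDeletionMinRho{}, define~$\kappa(\tau)$ to be the minimum size of an edge set~$F$ with~$\rho^*(G-F) \le \tau$; this equals the value returned by \densestEdgeDeletionMinK{} on input~$(G,\tau)$, it is finite for every~$\tau \ge 0$ (delete all edges), and it is non-increasing in~$\tau$. The answer to \densestEdgeDeletionMinRho{} on~$(G,k)$ is the smallest~$\tau \in D(G)$ with~$\kappa(\tau) \le k$. Since the predicate ``$\kappa(\tau) \le k$'' is monotone along the sorted list~$D(G)$, this~$\tau$ is found with~$O(\log n)$ calls to \densestEdgeDeletionMinK{}, and a witnessing edge set (if desired) is the one returned by the call at the located~$\tau$. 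All remaining bookkeeping is polynomial.

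\textbf{\densestEdgeDeletionMinRho{} solvable $\Rightarrow$ \densestEdgeDeletionMinK{} solvable.} Given an instance~$(G,\tau_\rho)$ of \densestEdgeDeletionMinK{}, define for~$0 \le j \le m$ the value~$\gamma(j) := \min\{\rho^*(G-F) \mid F \subseteq E(G),\ |F| \le j\}$; this equals the value returned by \densestEdgeDeletionMinRho{} on input~$(G,j)$, and it is non-increasing in~$j$ with~$\gamma(m) = 0$. The answer to \densestEdgeDeletionMinK{} is the smallest~$j$ with~$\gamma(j) \le \tau_\rho$, which we locate by binary search over~$\{0,1,\dots,m\}$ using~$O(\log m)$ calls to \densestEdgeDeletionMinRho{}; each returned value lies in~$D(G)$, so its comparison with~$\tau_\rho$ is exact, and a witness comes from the call at the located~$j$. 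For the two vertex-deletion statements one repeats both arguments verbatim with~$E(G)$ and~$m$ replaced by~$V(G)$ and~$n$ (note that~$\rho^*(G-S)$ is still the density of a subgraph of~$G$, so~$D(G)$ is unchanged). I do not expect a genuine obstacle here; the only point requiring care is that the binary searches terminate after polynomially many calls, which is exactly what \cref{obs:possible_density_on_n_vertices,obs:min_distance_between_densities} guarantee by bounding the number of relevant target densities.
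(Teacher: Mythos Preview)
Your proof is correct and follows essentially the same approach as the paper: both directions use binary search (over the finite set~$D(G)$ of possible densities, respectively over~$\{0,\dots,m\}$), relying on \cref{obs:min_distance_between_densities} to bound the number of oracle calls. Your write-up is in fact somewhat more careful than the paper's, explicitly noting monotonicity of~$\kappa$ and~$\gamma$ and the exactness of rational comparisons.
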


\begin{proof}
To see this, note that if \densestEdgeDeletionMinK{} is polynomial time solvable then we can solve an instance $\mathcal{I}=(G,k)$ of \densestEdgeDeletionMinRho{} by running the algorithm for \densestEdgeDeletionMinK{} for instances $(G,\rho_i)$ using binary search with $\rho_i\in  \{m'/n'\mid 0\leq m'\leq m_G, 1 \leq n'\leq n_G\}$   (set of values of from \cref{obs:min_distance_between_densities}) %
then returning the smallest $\rho_i$ for which the algorithm for \densestEdgeDeletionMinK{} returned a subset of size at most $k$. The  other direction is analogous: we need to solve $\log(m_G + 1)$ instances of \densestEdgeDeletionMinRho{}, to do binary search on $k\in \{0,1,\ldots,m_G\}$, and return the smallest $k$ for which \densestEdgeDeletionMinRho{} returned a density of at most $\tau_\rho$.  
\end{proof}
}

\section{\densestEdgeDeletion{}}\label{Sec:Critical_k_edge_deletion}

In this section we provide our results for \densestEdgeDeletion{}, starting with the polynomial-time algorithms, continuing with the NP-hard cases, and finishing with our parameterized results.

\subsection{Polynomial-time solvable cases}\label{sec:poly-time-edge-del}

\subparagraph*{Specific Density Intervals.}

We show that if the density~$\tau_\rho$ falls within one of two intervals, then \densestEdgeDeletionMinK{} boils down to computing maximum matchings or spanning trees.

\begin{theorem}\label{thm:edge-del-poly-density-intervals}
	\densestEdgeDeletionMinK{} can be solved in time $O(m\sqrt{n})$ if~$0 \le \tau_\rho < 2/3$ and in time~$O(n+m)$ if $1-1/n\leq \tau_\rho\le1$.
\end{theorem}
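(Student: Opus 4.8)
The plan is to handle the two density ranges separately, in each case reducing the problem to a classical polynomial-time solvable one.

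\emph{The range $1 - 1/n \le \tau_\rho \le 1$.} Here I would first observe that if $\tau_\rho \ge 1$ the answer is determined by \textsc{Feedback Edge Set}: we must destroy all cycles, since any cycle has density exactly one, and conversely a forest has $\rho^* < 1 \le \tau_\rho$; so the minimum $k$ is just the feedback edge number $m - n + c$ (with $c$ the number of connected components), computable in $O(n+m)$ time by spanning-forest computation. For $1 - 1/n \le \tau_\rho < 1$, by \cref{lem:nice-props}(\ref{niceprop-trees-density}) any subgraph of density exceeding $\tau_\rho$ but below $1$ would have to be a tree (or forest of trees) on at most $n-1$ vertices; but a tree on $\le n-1$ vertices has density $\le (n-2)/(n-1) < 1 - 1/n \le \tau_\rho$, a contradiction. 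Hence for any graph on at most $n$ vertices, $\rho^*(G') \le \tau_\rho$ iff $G'$ is a forest, so again the answer is the feedback edge number; the same $O(n+m)$ spanning-forest algorithm works. I should be slightly careful that "$n$" in the bound $1-1/n$ refers to the number of vertices of the input graph and that deletions only shrink the vertex set, so the argument is monotone.

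\emph{The range $0 \le \tau_\rho < 2/3$.} By \cref{lem:nice-props}(\ref{niceprop-trees-density}) together with the density list of \cref{obs:possible_density_on_n_vertices}, a connected graph with density strictly less than $2/3$ and density a ratio $a/b$ must have $a/b \in \{0, 1/b, \dots\}$ with $a \le b-1$; in fact the only achievable densities below $2/3$ for a connected graph are $0$ (single vertex), $1/2$ (single edge $K_2$), and ratios of trees, the largest of which below $2/3$ is $1/2$ — a path on $3$ vertices already has density $2/3$. So $\rho^*(G') < 2/3$ holds iff every connected component of $G'$ has at most one edge, i.e. the edges of $G'$ form a matching. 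Therefore the edges we keep must be a matching, and we want to delete as few edges as possible, i.e. keep as many as possible: the minimum $k$ equals $m$ minus the size of a maximum matching of $G$, computed in $O(m\sqrt n)$ by the Hopcroft–Karp / Micali–Vazirani algorithm. One subtlety: for $\tau_\rho = 1/2$ exactly, a single edge has density exactly $1/2 = \tau_\rho$, which is allowed, so the "matching" characterization is unchanged; but the statement only claims $\tau_\rho < 2/3$, and at $\tau_\rho \in [1/2, 2/3)$ the same matching argument applies, while for $\tau_\rho < 1/2$ we would need the edge set empty, giving $k = m$ — I'd note this is subsumed by the matching bound since a maximum matching could be empty only if $m=0$, so for $\tau_\rho<1/2$ I'd instead directly return $k=m$; actually the cleanest phrasing is to split at $1/2$.

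\emph{Main obstacle.} The only real work is the structural characterization: proving that "$\rho^*(G') \le \tau_\rho$" collapses to the combinatorial conditions "forest" (resp. "matching") for these special $\tau_\rho$. This is where \cref{lem:nice-props} does the heavy lifting, but I would need to be careful about boundary values ($\tau_\rho = 1$, $\tau_\rho = 1/2$, very small $\tau_\rho$) and about the fact that the relevant bound $1-1/n$ depends on the vertex count while deletions reduce it — these are the places an off-by-one error could slip in. The running-time claims are then immediate from the cited matching and spanning-forest algorithms, plus \cref{prop:equiv}/binary search remarks already established for passing between the decision and min-$k$ formulations.
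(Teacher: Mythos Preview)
Your treatment of the ranges $0 \le \tau_\rho < 2/3$ and $1 - 1/n \le \tau_\rho < 1$ is essentially the same as the paper's (split at $1/2$ and reduce to maximum matching, respectively reduce to \textsc{Feedback Edge Set}), and is correct.

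However, your handling of the boundary case $\tau_\rho = 1$ is wrong. You write that ``we must destroy all cycles, since any cycle has density exactly one,'' but a cycle of density exactly $1$ satisfies $\rho^* \le 1 = \tau_\rho$ and is therefore \emph{allowed}. Concretely, if $G$ is a single cycle $C_n$, your algorithm returns $k=1$ (the feedback edge number), whereas the correct answer is $k=0$. The right characterization at $\tau_\rho = 1$ is that $G-F$ must be a \emph{pseudoforest}: each connected component may contain at most one cycle. A component with two or more cycles has more edges than vertices and hence density strictly above $1$; conversely any connected graph with at most one cycle has at most as many edges as vertices, so density at most $1$. The optimal solution is therefore, for each connected component, to delete down to a spanning tree and then (if the component originally contained a cycle) reinsert one deleted edge. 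This still runs in $O(n+m)$ time, but in general deletes strictly fewer edges than your feedback-edge-set solution. This is the one missing idea; once you fix it, your proof matches the paper's.
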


\begin{proof}
	The proof is by case distinction on~$\tau_\rho$.
	
	Note that if~$\tau_\rho < 1/2$, then no edge can remain in the graph as a~$K_2$ has density~$1/2$.
	Similarly, if $1/2 \le \tau_\rho < 2/3$, then no connected component can have more than one edge: Otherwise, the component would contain a~$P_3$ which has density~$2/3$.
	Hence, computing a maximum cardinality matching in time $O(m\sqrt{n})$~\cite{MV1980} and removing all edges not in the matching solves the given instance of \densestEdgeDeletionMinK{}.
	
	The second interval is similar.
	If~$1 - 1/n \le \tau_\rho < 1$, then the resulting graph cannot have any cycle as a cycle has density~$1$. 
	Moreover, any tree on at most~$n$ vertices has density at most~$1-1/n$.
	Thus, in this case \densestEdgeDeletionMinK{} is equivalent to computing a minimum feedback edge set, which can be done in time $O(n+m)$ by e.\,g.\ deleting all edges not in a spanning tree.
	
	Lastly, if~$\tau_\rho = 1$, then each connected component can have at most one cycle, that is, the resulting graph must be a pseudoforest: 
	Consider a connected component~$C$ with~$\ell$ vertices and at least two cycles.
	Any spanning tree of~$C$ contains~$\ell-1$ edges and misses at least one edge per cycle.
	Hence, $C$ contains at least~$\ell+1$ edges and has, thus, density larger than one.
	Thus, each connected component in the remaining graph can have at most as many edges as vertices.
	Hence, a solution to the \densestEdgeDeletionMinK{} instance is to do the following for each connected component: delete all edges not in a spanning tree and reinsert an arbitrary edge.
	This can be done in $O(n+m)$ time.
\end{proof}

\subparagraph*{Trees.}

If the input is a tree, then any target threshold~$\tau_\rho \ge 1$ makes the problem trivial.
Hence, the case~$\tau_\rho < 1$ is left.
Thus, each tree in the remaining graph can have at most~$h = \lfloor 1 / (1 - \tau_\rho)\rfloor$ many vertices: 
a tree with~$h' > h$ vertices has density~$(h'-1)/h' = 1 - 1/h' > 1 - 1/h \ge 1 - (1-\tau_\rho) = \tau_\rho$.
Hence, the task is to remove as few edges as possible so that each connected component in the remaining graph is of order at most~$h$.
This problem is known as $T_{h+1}$-\textsc{Free Edge Deletion} and can be solved in~$O((wh)^{2w} n)$ time~\cite{EM2018}, where~$w$ is the treewidth.
As trees have treewidth one and~$h \le n$ (otherwise the problem is trivial), we get the following.

\begin{theorem}\label{thm:EdgeDel-poly-on-trees}
	On the trees, \densestEdgeDeletionMinK{} can be solved in time $O(n^3)$.  
\end{theorem}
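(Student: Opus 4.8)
The plan is to reduce \densestEdgeDeletionMinK{} on a tree $T$ to the problem of partitioning the vertices of $T$ into connected pieces each of size at most $h = \lfloor 1/(1-\tau_\rho)\rfloor$ by deleting as few edges as possible, exactly as the preceding paragraph in the excerpt sets up. The cases $\tau_\rho \ge 1$ are trivial (delete nothing), so assume $\tau_\rho < 1$; then as argued, a remaining graph has $\rho^* \le \tau_\rho$ if and only if every connected component (which is again a tree, being a subforest of $T$) has at most $h$ vertices. So it suffices to solve $T_{h+1}$-\textsc{Free Edge Deletion} on a tree in $O(n^3)$ time. I would not invoke the generic treewidth algorithm of \cite{EM2018} with $w=1$, $h \le n$ (that only gives $O(n^3)$ after plugging in, and is stated loosely); instead I would give a clean direct $O(n^3)$ dynamic program over the tree, which also makes the bound self-contained.

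First I would root $T$ at an arbitrary vertex $r$. For a vertex $v$ with children $c_1,\dots,c_d$, define a table $A[v][s]$ for $s \in \{1,\dots,h\}$ equal to the minimum number of edges deleted inside the subtree $T_v$ rooted at $v$, subject to the constraint that every component fully inside $T_v$ has size at most $h$ and the component containing $v$ (which may still grow upward through the edge to $v$'s parent) currently has exactly $s$ vertices. The leaf case is $A[\text{leaf}][1]=0$ and $A[\text{leaf}][s]=\infty$ for $s>1$. To process $v$ from its children, I would fold in the children one at a time: maintaining a partial table indexed by the current size of $v$'s component, when incorporating child $c_i$ we either (a) cut the edge $vc_i$, paying $1$ and adding $\min_{t} A[c_i][t]$ to the cost (the child's component is closed off, so any $t\le h$ is fine), or (b) keep the edge $vc_i$, merging $c_i$'s component of size $t$ into $v$'s component of size $s$, giving a new size $s+t$ which must be $\le h$. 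This is the standard "knapsack over children" merge; combining the running partial table of size $O(h)$ with a child table of size $O(h)$ costs $O(h^2)$, and summed over all parent–child pairs this is $O(h^2 n)$. Finally the answer is $\min_{s\le h} A[r][s]$.

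Since $h \le n$, the running time $O(h^2 n)$ is $O(n^3)$, giving the claimed bound. The correctness argument is the usual exchange/optimal-substructure reasoning: any optimal solution induces, on each subtree $T_v$, a partition witnessed by the deleted edges, and the size of $v$'s (possibly still-open) component is the only information about $T_v$ that affects how edges above $v$ must be chosen; conversely any choice recorded by the table extends to a valid global solution. I would also note that the same DP, being oblivious to why we need components of size $\le h$, simultaneously establishes the $O(n^3)$-time (indeed $O(h^2 n)$-time) solvability of $T_{h+1}$-\textsc{Free Edge Deletion} on trees, consistent with the citation.

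The main obstacle, such as it is, is bookkeeping rather than mathematics: getting the child-merge recurrence right so that components entirely contained in $T_v$ are never allowed to exceed $h$ while the component through $v$ is tracked exactly, and arguing the $O(h^2)$ per-edge merge bound carefully (the classical "tree knapsack" analysis where the product of subtree sizes telescopes). None of the individual steps is hard; the only thing to be careful about is the case $\tau_\rho \ge 1$ versus $\tau_\rho < 1$ and the translation $h = \lfloor 1/(1-\tau_\rho)\rfloor$, which is already handled in the text preceding the theorem, so I would simply reference it and then present the DP.
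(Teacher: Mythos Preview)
Your proposal is correct and essentially agrees with the paper on the reduction (handle $\tau_\rho\ge 1$ trivially; for $\tau_\rho<1$ translate to $T_{h+1}$-\textsc{Free Edge Deletion} on a tree with $h=\lfloor 1/(1-\tau_\rho)\rfloor$), but you then take a different route to the $O(n^3)$ bound. The paper does not give its own DP: it simply invokes the treewidth algorithm of Enright and Meeks, which runs in $O((wh)^{2w}n)$ time, and plugs in $w=1$ and $h\le n$ to obtain $O(h^2 n)=O(n^3)$. You instead supply an explicit rooted-tree DP with state $A[v][s]$ and a knapsack-style child merge, achieving the same $O(h^2 n)$ bound directly. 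Your version is self-contained and makes the running time transparent without relying on the cited black box; the paper's version is a one-line citation. Both are valid and arrive at the identical bound, and your DP is in fact (a specialization of) what the cited treewidth DP does when $w=1$, so there is no substantive divergence beyond presentation.
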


\subparagraph*{Cliques.}
The problem is not completely trivial on cliques: 
While a target threshold~$\tau_\rho$ indicates an upper bound on the remaining edges (as~$\rho(G-F) \le \tau_\rho$ must hold), the question is whether for all~$m$ and~$n$ there is a balanced graph~$G$ with~$m$ edges and~$n$ vertices; recall that a graph is balanced if the whole graph is a densest subgraph.
Ruciński and Vince~\cite{RV1986} showed a slightly stronger statement about strongly balanced graphs. 
Recall that every strongly balanced graph is also balanced; refer to \cref{sec:Notation} for formal definitions.

\begin{theorem}[{\cite[Theorem~1]{RV1986}}]\label{thm:strongly_balanced}
	Let $n$ and $m$ be two integers. If $1 \leq n-1 \leq m \leq  \binom{n}{2}$, then there exists a strongly balanced graph with $n$ vertices and $m$ edges.
\end{theorem}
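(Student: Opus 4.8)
The plan is to build the required graph by an \emph{edge-augmentation} argument: start from a strongly balanced graph with $n$ vertices and $n-1$ edges, and repeatedly add a single edge, each time preserving strong balance, until $m$ edges are present. The base case is immediate: any spanning tree $T$ on $n$ vertices has $\rho'(T)=1$, and every subgraph of $T$ is a forest with, say, $c\ge 1$ components on $p$ vertices, hence at most $p-1$ edges and $\rho'$ at most $1$; so $T$ is strongly balanced and settles $m=n-1$. Thus the whole statement reduces to the following augmentation lemma: \emph{if $G$ is strongly balanced on $n\ge 2$ vertices with $n-1\le m<\binom{n}{2}$ edges, then $G+e$ is strongly balanced for some non-edge $e$ of $G$}. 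Applying it $m-(n-1)$ times starting from a spanning tree yields the desired graph.

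To prove the lemma, write $\gamma:=\rho'(G)=m/(n-1)$ and $e(S):=|E(G[S])|$ for $S\subseteq V(G)$. Strong balance says precisely that $e(S)\le\gamma(|S|-1)$ for all $S$ with $|S|\ge 2$, with equality at $S=V(G)$. For a non-edge $uv$, one checks via \cref{lem:fractions} that $G+uv$ is strongly balanced if and only if $e(S)+1\le\frac{m+1}{n-1}(|S|-1)$ for every $S\ni u,v$ with $|S|\ge 2$; this holds automatically for $S=V(G)$ (with equality), and for proper $S$ the right-hand side exceeds $\gamma(|S|-1)$ by only $\frac{n-|S|}{n-1}<1$. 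Since $e(S)$ is an integer with $e(S)\le\gamma(|S|-1)$, any violating set $S$ must be \emph{saturated}, meaning $e(S)=\lfloor\gamma(|S|-1)\rfloor$, the largest number of edges on $S$ compatible with strong balance. Hence every ``forbidden'' non-edge lies inside a saturated proper subset, and it suffices to show that the saturated proper subsets of $G$ do not jointly contain all $\binom{n}{2}-m$ non-edges of $G$.

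For this final count I would first establish an uncrossing property: by submodularity of $e(\cdot)$ together with strong balance, if $S_1,S_2$ are tight (i.e.\ $e(S_i)=\gamma(|S_i|-1)$) with $|S_1\cap S_2|\ge 2$, then inclusion–exclusion forces equality throughout, so $S_1\cap S_2$ and $S_1\cup S_2$ are tight as well; in particular two maximal proper tight sets either share at most one vertex or have union $V(G)$, and any tight set of size $s$ satisfies $s\ge 2\gamma$ since $\gamma(s-1)=e(S)\le\binom{s}{2}$. These structural facts pin down the overlap pattern of the saturated sets tightly enough that the non-edges they contain — a set of size $s$ contributing only about $\binom{s}{2}-\gamma(s-1)=(s-1)(\frac{s}{2}-\gamma)$ non-edges — cannot exhaust all of $\binom{n}{2}-m$ (for instance, if $V(G)\setminus\{x\}$ and $V(G)\setminus\{y\}$ are both tight then $\deg_G(x)=\deg_G(y)=\gamma$ and $xy\notin E(G)$, so $xy$ is a non-edge not contained in either). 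The main obstacle is precisely this counting step, and in particular coping with the floor in ``saturated'' rather than exact tightness, which turns the clean uncrossing identity into an approximate one that must be bookkept. A plausible way to avoid some of this is to induct on $n$ instead: attach to a strongly balanced graph on $n-1$ vertices with $m-d$ edges a new vertex of degree $d=\lceil m/(n-1)\rceil$, joined to a set of vertices spread out across the dense parts of the smaller graph; the work then shifts to proving such a set exists, which has the same flavour.
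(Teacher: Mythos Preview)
The paper does not prove this theorem; it is quoted from Ruci\'nski and Vince and only the statement is used (together with the remark that their proof is constructive and runs in $O(m)$ time). So there is no in-paper proof to compare against. For context, the original argument in~\cite{RV1986} is of a completely different flavour from yours: they write down an explicit graph on vertex set $\{0,\dots,n-1\}$---essentially the edges $\{i,i{+}j\bmod n\}$ taken in order of increasing $j$ until $m$ edges are present---and verify strong balance directly by a short counting argument. That explicit construction is what yields the $O(m)$ running time the present paper relies on.

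Your edge-augmentation strategy is a natural alternative, and the setup is correct: the base case, the reduction of the augmentation step to avoiding ``saturated'' proper subsets, and the uncrossing identity for \emph{exactly} tight sets are all fine. But the argument has a real gap exactly where you say it does: you never establish that the saturated proper subsets cannot cover all non-edges. The uncrossing lemma you state applies only to exactly tight sets ($e(S)=\gamma(|S|-1)$), whereas the obstruction is the larger class of saturated sets ($e(S)=\lfloor\gamma(|S|-1)\rfloor$); for those, supermodularity of $e(\cdot)$ gives an inequality with slack up to~$2$, and no clean lattice structure follows. The concluding counting sketch---``about $(s-1)(s/2-\gamma)$ non-edges per saturated set, hence they cannot exhaust $\binom{n}{2}-m$''---is not an argument: without control on the number of maximal saturated sets and their overlaps, no such bound is available. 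The alternative induction on $n$ you float has the same missing ingredient (the existence of a suitably ``spread out'' neighbourhood for the new vertex). As written, the proof is incomplete; if you want an existential rather than explicit proof, the hard part is genuinely this structural/counting step, and it needs to be done rather than described.
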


The proof of Ruciński and Vince~\cite{RV1986} is constructive:
A strongly balanced graph (that is also a balanced graph) with $m$ edges and $n$ vertices can be constructed in $O(m)$ time.  

\begin{theorem}
	\label{thm:EdgeDel-poly-on-cliques}
	On the complete graph $K_n$, \densestEdgeDeletionMinK{} and \densestEdgeDeletionMinRho{} can be solved in time $O(n^2)$.  
\end{theorem}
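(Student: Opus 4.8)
The plan is to reduce both optimization problems on $K_n$ to one combinatorial quantity: for $0 \le m \le \binom{n}{2}$ let $D(n,m)$ be the minimum of $\rho^*(H)$ over all graphs $H$ with exactly $n$ vertices and $m$ edges. Every such $H$ is $K_n$ with $\binom{n}{2}-m$ edges removed; moreover deleting an edge never increases $\rho^*$ (the densest subgraph of $H-e$ is a subgraph of $H$), so $D(n,\cdot)$ is nondecreasing and using the whole budget is never harmful. Hence the optimum of \densestEdgeDeletionMinRho{} on $K_n$ with budget $k$ equals $D\bigl(n,\max(0,\binom{n}{2}-k)\bigr)$, and the optimum of \densestEdgeDeletionMinK{} with target $\tau_\rho$ equals $\binom{n}{2}-m^*$ where $m^*=\max\{m : D(n,m)\le\tau_\rho\}$. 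So it suffices to give a closed form for $D(n,m)$ and an efficiently constructible witness graph attaining it.

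Next I would compute $D(n,m)$ by a two-regime case analysis. For $n-1\le m\le\binom{n}{2}$ the lower bound $\rho^*(H)\ge\rho(H)=m/n$ is immediate, and it is matched by \cref{thm:strongly_balanced}, which constructs in $O(m)$ time a balanced graph on $n$ vertices and $m$ edges, i.e.\ one with $\rho^*=m/n$; thus $D(n,m)=m/n$. For $0\le m\le n-2$ I claim $D(n,m)=1-1/\lceil n/(n-m)\rceil$ (which is $0$ when $m=0$), witnessed by the linear forest consisting of $n-m$ vertex-disjoint paths whose orders are as balanced as possible and cover all $n$ vertices. For the upper bound, any subgraph of this witness is a subforest, and by repeated application of \cref{lem:fractions} the density of a forest is at most the maximum density $1-1/\ell_j$ among its component trees, which here is at most $1-1/\lceil n/(n-m)\rceil$. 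For the lower bound, a graph with $m\le n-2$ edges that contains a cycle has $\rho^*\ge 1$, already exceeding the target; otherwise it is a forest on $n$ vertices with $m$ edges, hence has $n-m$ components, so by pigeonhole one is a tree on at least $\lceil n/(n-m)\rceil$ vertices, and since $x\mapsto 1-1/x$ is increasing that tree alone has density at least $1-1/\lceil n/(n-m)\rceil$. (\cref{lem:nice-props} (\ref{niceprop-trees-density}.) packages the forest case.)

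Finally I would assemble the algorithms. For \densestEdgeDeletionMinRho{}: put $m=\max(0,\binom{n}{2}-k)$, output $D(n,m)$ via the two formulas, and return the witness — the strongly balanced graph of \cref{thm:strongly_balanced} if $m\ge n-1$, the balanced linear forest if $m\le n-2$ — together with the deleted set $F=E(K_n)\setminus E(H)$. For \densestEdgeDeletionMinK{}: read off $m^*$ from the closed form, namely $m^*=\min\!\bigl(\binom{n}{2},\lfloor n\tau_\rho\rfloor\bigr)$ when $\tau_\rho\ge 1-1/n$, and $m^*=n-\lceil n/h\rceil$ with $h=\lfloor 1/(1-\tau_\rho)\rfloor$ (so $m^*=0$ once $\tau_\rho<1/2$) when $\tau_\rho<1-1/n$; then output $k=\binom{n}{2}-m^*$ and the corresponding witness. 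The arithmetic is $O(1)$ per step, while building the witness graph and the complementary edge set $F$ (of size up to $\binom{n}{2}$) costs $O(n^2)$, which dominates and gives the claimed bound.

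There is no substantial obstacle: the one genuine ingredient is the realizability result \cref{thm:strongly_balanced}, and everything else is elementary accounting with \cref{lem:fractions} and \cref{lem:nice-props}. The only care needed is the floor/ceiling bookkeeping and the three boundary values of $\tau_\rho$ — at $1/2$, at $1-1/n$, and at $(n-1)/2$ — plus the sanity check that the two formulas for $D(n,m)$ agree where the regimes meet (both yield $m^*=n-1$ at $\tau_\rho=1-1/n$).
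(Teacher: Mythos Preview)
Your proposal is correct and follows essentially the same approach as the paper: split into a forest regime and a dense regime, use the Ruci\'nski--Vince strongly balanced graph (\cref{thm:strongly_balanced}) for the latter, and handle the former by packing trees of bounded order. Your framing via the function $D(n,m)$ is slightly more systematic---it treats both optimization variants at once and makes the pigeonhole lower bound in the forest regime explicit, which the paper leaves implicit---but the underlying argument and the key ingredient are the same.
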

{
\begin{proof}	
	We provide the proof for \densestEdgeDeletionMinK{}. The proof for \densestEdgeDeletionMinRho{} is analogous.

	Consider an instance $(G = K_n,\tau_\rho)$ of \densestEdgeDeletionMinK{}. 
	Let $F \subseteq E(G)$ be a solution that our algorithm wants to find. 
	Throughout the proof we assume $\tau_\rho \leq (n-1)/2$ and~$n\ge 1$, as otherwise $F=\emptyset$.
	We consider two cases: $\tau_\rho<1$ and~$\tau_\rho \ge 1$.

	\emph{Case 1. ($\tau_\rho<1$):} 
	Let $t \leq n$ be the largest integer satisfying $(t-1)/t\leq  \tau_\rho$.
	By \cref{lem:nice-props} (\ref{niceprop-trees-density}.), the resulting graph~$K_n-F$ must be a collection of trees on at most~$t$ vertices each. 
	Thus, partition the vertices in~$\lceil n/t \rceil$ parts of size at most~$t$.
	For each part keep an arbitrary spanning tree.
	Then~$F$ consists of all non-kept edges, i.\,e., all edges between the parts and all edges not in the selected spanning trees.
	Clearly, this can be done in~$O(n^2)$ time.

	\emph{Case 2. ($\tau_\rho \ge 1$):}
	We will construct a strongly balanced graph $G'$ on $n$ vertices with density as close to $\tau_\rho$ as possible, as allowed by \cref{obs:possible_density_on_n_vertices}.
	To this end, let $t$ be the largest number in~$\{0,1/n,\ldots, (n-1)/2\}$ so that $t\leq \tau_\rho$, thus $t=\ell/n$ for some integer $\ell \in \{n,n+1,\ldots,\binom{n}{2}\}$ (as~$\tau_\rho \ge 1$).
	
	Now we use \cref{thm:strongly_balanced} to construct a balanced graph $G'$ with $\ell$ edges and $n$ vertices, thus $\rho^*(G') \le \tau_\rho$.
	We will select~$F \subseteq E(G)$ so that~$G'=G-F$, that is, $F$ contains all edges not in~$G'$ and~$|F| = \binom{n}{2}-\ell$.
	By choice of $\ell$ we know that $(\ell+1)/n>\tau_\rho$. 
	Hence, removing less than $\binom{n}{2}-\ell$ edges from $G$ means the whole graph has density more than~$\tau_\rho$. 
	As building the graph $G'$ takes time $O(\ell)$ the overall running time is $O(n^2)$.

	To construct the analogous proof for \densestEdgeDeletionMinRho{} use two cases: $k> \binom{n}{2} - n$ (equivalent of Case 1 above) and $k \le \binom{n}{2} -n$ (equivalent of Case 2). 
\end{proof}
}

\subsection{NP-Hardness for special graph classes}\label{sec:np-hardness-edge-del}

\subparagraph*{Claw-free cubic planar graphs.}
For our first hardness proof of \densestEdgeDeletion{} we provide a reduction from \textsc{Perfect $P_3$-packing} which was proven \NP-complete even in claw-free cubic planar graphs~\cite{XL2021}. 
Denote with $P_k$ a path on $k$ vertices. 
A \emph{perfect $P_3$-packing} of a given graph $G$ is a partition of $G$ into subgraphs in which each subgraph is isomorphic to $P_3$. 
The \textsc{Perfect $P_3$-packing} problem is defined as follows:

    \begin{problem}\label{prob:perfect_P3_packing}
        \problemtitle{\textsc{Perfect $P_3$-packing}}
        \probleminput{A graph $G$.}
        \problemquestion{Is there a perfect $P_3$-packing of $G$?}
    \end{problem}

\begin{theorem}\label{thm:densestEdgeDel-is-NP-complete}
\densestEdgeDeletion{} is \NP-complete for $\tau_\rho = 2/3$ even on claw-free cubic planar graphs. 
\end{theorem}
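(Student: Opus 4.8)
The plan is to reduce from \textsc{Perfect $P_3$-packing} on claw-free cubic planar graphs, which is \NP-complete by~\cite{XL2021}. Membership in \NP{} is clear: a certificate is the set $F$ of deleted edges, and $\rho^*(G-F)$ can be computed in polynomial time by the classical densest-subgraph algorithms~\cite{G1984,PQ82}. So the real work is the hardness direction. Given a cubic graph $G$ on $n$ vertices (hence $m = 3n/2$ edges, so $n$ is even and $3 \mid n$ for a perfect $P_3$-packing to be possible), I would take the instance of \densestEdgeDeletion{} to be $(G, k, \tau_\rho)$ with $\tau_\rho = 2/3$ and $k$ chosen so that the surviving $m-k$ edges are exactly enough to form a $P_3$-packing on all $n$ vertices: a perfect $P_3$-packing uses $2 \cdot (n/3) = 2n/3$ edges, so set $k = m - 2n/3 = 3n/2 - 2n/3 = 5n/6$.

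The forward direction is easy: if $G$ has a perfect $P_3$-packing, delete every edge not used by the packing; what remains is a disjoint union of $P_3$'s, and by \cref{lem:nice-props}(\ref{niceprop-densest-subgraph-connected}.) the densest subgraph of a disjoint union of $P_3$'s has density $2/3 = \tau_\rho$, while the number of deleted edges is exactly $5n/6 = k$. The interesting direction is the converse. Suppose $F \subseteq E(G)$ with $|F| \le k$ and $\rho^*(G-F) \le 2/3$. Since $\tau_\rho = 2/3 < 1$, \cref{lem:nice-props}(\ref{niceprop-trees-density}.) tells us that $G-F$ is a forest in which \emph{every} component (being densest, hence achieving $\rho^* \le 2/3$) has density at most $2/3$; a tree on $b$ vertices has density $(b-1)/b$, which exceeds $2/3$ once $b \ge 4$, so every component of $G-F$ has at most $3$ vertices, i.e.\ is a single vertex, an edge ($K_2$), or a path $P_3$. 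Now I count edges. Let the components of $G-F$ be $a$ isolated vertices, $b$ copies of $K_2$, and $c$ copies of $P_3$; then $a + 2b + 3c = n$ and the number of surviving edges is $b + 2c = m - |F| \ge m - k = 2n/3$. Combining, $b + 2c \ge 2n/3$ while $2b + 3c \le n$; subtracting twice the first from the second gives $-c \le n - 4n/3 = -n/3$, i.e.\ $c \ge n/3$, and then $3c \ge n$ forces $a = b = 0$ and $c = n/3$. Hence $G-F$ is a disjoint union of $n/3$ copies of $P_3$ covering all $n$ vertices — precisely a perfect $P_3$-packing of $G$. This completes the equivalence.

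Finally I would verify the structural restrictions are preserved: the reduction is the \emph{identity} on the graph (we only specify $k$ and $\tau_\rho$), so $G$ remains claw-free, cubic, and planar, and the reduction is obviously polynomial-time. The main obstacle — really the only subtlety — is getting the edge-counting inequality to pin down the component structure exactly; the key point that makes it work is that $m = 3n/2$ for cubic graphs together with the $P_3$ edge-budget $2n/3$ leaves \emph{zero slack}, so any feasible deletion set must produce a perfect packing rather than merely a near-perfect one. One should double-check the degenerate cases ($n$ not divisible by $3$, or $G$ disconnected) but these are handled uniformly: if $3 \nmid n$ the \textsc{Perfect $P_3$-packing} instance is trivially a no-instance, and our $k$ would then be non-integral or the counting argument shows no valid $F$ exists, so the reduction still maps no-instances to no-instances.
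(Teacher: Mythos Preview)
Your proposal is correct and follows essentially the same approach as the paper: reduce from \textsc{Perfect $P_3$-packing} on claw-free cubic planar graphs via the identity map with $k = m - 2n/3$ and $\tau_\rho = 2/3$, then use a counting argument to show that any feasible deletion set forces all components to be $P_3$'s. Your counting with variables $a,b,c$ is slightly more explicit than the paper's (which only tracks the number $t$ of $P_3$ components and argues the remaining $n-3t$ vertices cannot carry $\ge 2(n-3t)/3$ edges as singletons and $K_2$'s), but the substance is identical.
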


\begin{proof}
    Given an instance $G$ of \textsc{perfect $P_3$-packing} where~$G$ is a claw-free cubic planar graph, let $\mathcal{I} = (G,k,\tau_\rho)$ be an instance of \densestEdgeDeletion{} where $k = m - 2n/3$ and $\tau_\rho = 2/3$.
    We claim that~$G$ has a perfect $P_3$-packing if and only if there is a set~$F \subseteq E(G)$ with $\rho^*(G-F)=2/3$ and $|F|=m - 2n/3$. 
    
    ``$\Rightarrow:$'' Given a perfect $P_3$-packing of $G$, we set $F$ to be the set of all edges in $G$ that are not in the $P_3$-packing. Clearly $\rho^*(G-F)=2/3$. Additionally, $|F|=m - 2n/3$ since there are $n/3$ paths in a perfect $P_3$-packing, and each path has two edges. 

    ``$\Leftarrow:$'' 
    Consider $F\subseteq E(G)$ of size at most $m - 2n/3$ such that $\rho^*(G-F)\leq 2/3$. 
    Then $|E(G)  \setminus F| \geq 2n/3$. 
    Note that $\tau_\rho=2/3$ implies, by \cref{lem:nice-props} (\ref{niceprop-trees-density}.), that all connected components of $G-F$ must be trees of size at most $3$. 
    In other words, all connected components in $G-F$ are singletons, $P_2$'s, or $P_3$'s. 
    Denote by $t$ the number of connected components that are $P_3$'s in $G-F$. 
    Then there are $3t$ vertices and $2t$ edges of $G-F$ which belong to a $P_3$. 
    Consequently, there are $n-3t$ vertices and $|E(G) \setminus F|-2t\geq 2(n-3t)/3$ edges of $G-F$ which are either singletons or belong to a $P_2$. 
    This is possible only if $n-3t = 0$, that is $G-F$ is a perfect $P_3$-packing of $G$.
\end{proof}

\subparagraph*{Planar graphs with $\Delta=3$, target density above 1.} 
We next show that \densestEdgeDeletion{} remains \NP-complete even for $\tau_\rho > 1$. 
To prove this, we reduce from \textsc{Vertex Cover} on cubic planar graphs, which is known to be \NP-complete~\cite{garey_et_al:1974:VC_hard_on_cubic}.  
\textsc{Vertex Cover} is defined as follows:
	\begin{problem}
        \problemtitle{\textsc{Vertex Cover}}
        \probleminput{A graph $G$ and a positive integer $k$.}
        \problemquestion{Is there a vertex cover $C\subseteq V(G)$ of size at most $k$, that is, for each $\{u,v\} \in E$ at least one of $u$ or $v$ is in~$C$?}
    \end{problem}

\begin{theorem}
	\label{thm:edgeDel-NP-Hard-rho>1}
	\densestEdgeDeletion{} is \NP-complete for $\tau_\rho>1$ even on planar graphs with maximum degree $3$.
\end{theorem}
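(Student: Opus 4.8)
The plan is to reduce from \textsc{Vertex Cover} on cubic planar graphs, which is \NP-complete~\cite{garey_et_al:1974:VC_hard_on_cubic}; membership in \NP\ is immediate since $\rho^*$ is polynomial-time computable. Given a cubic planar graph $G$ with $n$ vertices and $m=3n/2$ edges and a budget $k$, I fix a suitably large constant $c$ (it suffices that $c\ge 5$), set $\tau_\rho=1+1/c$ and $k'=k$, and construct a planar graph $H$ of maximum degree~$3$ as follows. Each vertex $v$ of $G$ is replaced by a copy $\Theta_v$ of a fixed \emph{theta graph} on $c$ vertices---two branch vertices joined by three internally vertex-disjoint paths---so that $\Theta_v$ has $c+1$ edges and density $(c+1)/c=\tau_\rho$; crucially $\Theta_v$ is balanced, whereas deleting \emph{any} single edge of it yields a graph whose maximum subgraph density is~$1$. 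Inside $\Theta_v$ I fix three non-branch vertices $\alpha^1_v,\alpha^2_v,\alpha^3_v$ (these exist since $c\ge5$), one per edge of $G$ incident to $v$, placed on the outer face in the cyclic order of a fixed planar embedding of $G$. Each edge $e=\{u,v\}$ of $G$ is replaced by a path $P_e$ of length exactly~$c$ joining the attachment vertex of $u$ reserved for $e$ to the attachment vertex of $v$ reserved for $e$. Then $H$ is planar, has maximum degree~$3$ and $O(cn)$ vertices, and already $\Theta_u\cup P_e\cup\Theta_v$ has density $1+3/(3c-1)>\tau_\rho$, so $\rho^*(H)>\tau_\rho$ and the reduction is polynomial and nontrivial.

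For the forward direction, from a vertex cover $C$ with $|C|\le k$ I delete one arbitrary edge of $\Theta_v$ for every $v\in C$, obtaining $F$ with $|F|\le k$. To verify $\rho^*(H-F)\le\tau_\rho$ I take any subgraph $S$ and bound $\tau_\rho\,|V(S)|-|E(S)|$ from below by summing local ``slacks'': an intact $\Theta_v$ (with $v\notin C$) contributes at least~$0$ by balancedness; a broken $\Theta_v$ (with $v\in C$) contributes at least $|S\cap V(\Theta_v)|/c$ since all its subgraphs have density at most~$1$; a partially used path $P_e$ contributes at least~$0$; and a fully used path $P_e$ contributes exactly $-1/c$ (this is exactly why the path length is $c$). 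Because $C$ is a vertex cover, every fully used $P_e$ runs into a broken gadget, and distinct such paths enter through distinct attachment vertices, so the total deficit from fully used paths is at most the total surplus coming from broken gadgets; hence $\tau_\rho\,|V(S)|-|E(S)|\ge0$, i.e.\ $\rho(S)\le\tau_\rho$.

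For the backward direction, suppose $F$ with $|F|\le k$ satisfies $\rho^*(H-F)\le\tau_\rho$. For every edge $e=\{u,v\}$ of $G$, the subgraph $\Theta_u\cup P_e\cup\Theta_v$ has density above $\tau_\rho$, so $F$ must contain an edge of $E(\Theta_u)\cup E(\Theta_v)\cup E(P_e)$. I put $u$ into $C$ whenever $F$ meets $E(\Theta_u)$, and for every $e$ that is hit only inside $E(P_e)$ I add an arbitrary endpoint of $e$ to $C$; this makes $C$ a vertex cover. Since $\{E(\Theta_v)\}_v$ and $\{E(P_e)\}_e$ partition $E(H)$, charging each chosen vertex to a distinct edge of $F$ yields $|C|\le|F|\le k$.

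The main obstacle is getting the global density estimate of the forward direction to go through with a vertex cover rather than with a more wasteful set of deletions (e.g.\ one breaking every gadget): taking the theta size and the path length both equal to~$c$ is precisely what makes the per-full-path deficit $1/c$ match the per-broken-gadget-vertex surplus $1/c$, so that the accounting closes; a looser choice of these parameters breaks the balance or forces a larger constant $c$. The remaining ingredients---planarity and the degree bound of $H$, polynomial size, and \NP-membership---are routine.
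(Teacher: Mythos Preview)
Your proposal is correct and reaches the same conclusion as the paper, but via a genuinely different construction and a different style of argument for the forward direction. The paper replaces each vertex of~$G$ by a \emph{triangle} and each edge by a long path on~$\ell\ge19$ internal vertices, sets~$\tau_\rho=1+1/(6+\ell)-\varepsilon$, and in the forward direction bounds the density of a putative dense connected subgraph globally: it uses that a cubic graph on~$x$ vertices has at most~$3x/2$ edges and needs at least~$x/4$ vertices in any vertex cover, then pushes through the resulting inequality (which is where the constraint~$\ell>18$ comes from). Your construction instead replaces each vertex by a \emph{theta graph} whose density is \emph{exactly} the target~$\tau_\rho=(c+1)/c$; an intact gadget is then balanced at the threshold while a broken one drops to maximum density~$1$. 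This lets you run a clean local slack/charging argument---per-fully-used-path deficit~$1/c$ matched against per-vertex surplus~$1/c$ in broken gadgets, with distinct attachment vertices guaranteeing injectivity of the charge---instead of the paper's global counting, and it removes the need for the~$\varepsilon$ slack and the large path length. What the paper's route buys is the simplest possible vertex gadget (a triangle) at the price of a more delicate density estimate; what your route buys is a tighter and more transparent correctness proof with a smaller constant. The backward directions are essentially the same in both proofs.
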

\appendixproof{thm:edgeDel-NP-Hard-rho>1}
{
\begin{proof}
	Given an instance $(G,k)$ of \textsc{Vertex Cover}, where $G$ is a planar cubic graph~\cite[p.\ 195]{GJ79}, we construct an instance $(G', k, 1 + 1/(6+\ell)-\varepsilon)$  of \densestEdgeDeletion{} where $\varepsilon>0$ (e.\,g. $\varepsilon = 1/(n_{G'} + 1)$), $\ell \geq 19$ is an integer, and $G'$ is constructed as follows: 
    For every vertex $v\in V(G)$ with neighbors $x,y,z$, we construct a triangle $T_v$ with vertex set $\{v_x,v_y,v_z\}$.
    For every edge $\{u,v\}\in E(G)$ we join $T_u$ and $T_v$  via a path on $\ell$ vertices as follows: we create $\ell$ new vertices $w_1,\ldots,w_\ell$ and construct the path $P_{uv} = (u_v, w_1, w_2, \ldots, w_\ell, v_u)$ where $u_v\in T_u$, and $v_u\in T_v$.
    This concludes the construction of $G'$.
 
    Note that $G'$ has a vertex set of size $n_{G'} = 3n_G + \ell m_G$, an edge set of size $m_{G'} = 3n_G + (\ell+1)m_G$, and maximum degree $3$. 
    Moreover, each edge $\{u,v\}\in E(G)$ corresponds to the induced subgraph $G'[T_u\cup T_v \cup P_{uv}]$ depicted in \cref{fig:hourglass}, and $\rho(G'[T_u\cup T_v \cup P_{uv}]) = 1+1/(6+\ell)$.
    
    \begin{figure}[t]
        \centering
            \begin{tikzpicture} 
                \tikzset{vertex_small/.style={circle,draw,fill, inner sep=2pt}}
                \pgfmathsetmacro{\dist}{2.5}
                \foreach \pos/\name in {
            		 (-2,-0.5)/x_a,   (-1,0)/x_v,   (-2,0.5)/x_b,  %
                  (5.75,0.5)/z_a,  (4.75,0)/z_v,   (5.75,-0.5)/z_b} {%
                    \node[vertex_small] (\name) at \pos {};
                  }
                \foreach \x in {1,2,...,19}%
                {\draw[fill] ({-0.6+\x/4},0) circle (0.07);    }
                  \foreach \u/\v in {
                                    {x_a/x_b}, {x_b/x_v}, {x_v/x_a},
                                    {z_a/z_b}, {z_b/z_v}, {z_v/z_a}}{
                        \draw[thick] (\u) -- (\v);
                        }
                \foreach \u/\v in {
                                    {x_v/z_v}}{
                        \draw[thick] (\u) -- (\v);
                        }
                \node[very thick, italyGreen, draw,rounded corners, dashed,fit=(x_a) (x_b)(x_v),label=above:$T_u$] {};
                \node[very thick, italyGreen, draw,rounded corners, dashed,fit=(z_a) (z_b)(z_v),label=above:$T_v$] {};
                \node[very thick, italyRed, draw,rounded corners, dashed,fit=(x_v)(z_v),label=above:$P_{uv}$] {};
                    
            \end{tikzpicture}
        \caption{
			Two $K_3$ joined by a path of exactly $\ell=19$ vertices of degree $2$. 
			This subgraph of $G'$ has a correspondance to an edge $\{u,v\}$ of the original graph $G$, where $T_u$ (in green) corresponds to $u$, $T_v$ (in green) corresponds to $v$ and $P_{uv}$ (in red) corresponds to edge $\{u,v\}$.
			Note that this graph has density~$(6+\ell + 1)/(6+\ell) = 1 + 1/(6+\ell) > \tau_\rho$.}
        \label{fig:hourglass}
    \end{figure}
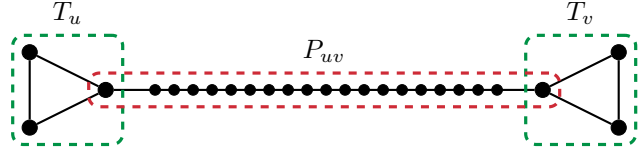

	We claim that~$(G,k)$ is a yes-instance of \textsc{Vertex Cover} if and only if $(G', k, 1+1/(6+\ell) - \varepsilon)$ is a yes-instance of \densestEdgeDeletion{}.

	``$\Leftarrow:$'' Consider an edge-set $F$ of size $k$, solution for the instance $(G', k, 1+1/(6+\ell) - \varepsilon)$ of \densestEdgeDeletion{}. Since each edge $\{u,v\}$ in $G$ corresponds to $G'[T_u\cup T_v \cup P_{uv}]$ of density $1+1/(6+\ell)$, at least one edge of $G'[T_u\cup T_v \cup P_{uv}]$ must be in $F$. We now construct a vertex cover $C$ of $G$ as follows: For each $e'\in F$ we add vertex $v$ to $C$ if $e'$ belongs to $T_v$; if $e'$ belongs to $P_{uv}$, arbitrarily put either $u$ or $v$ in $C$. This procedure ensures that for every edge $\{u,v\}$ in $G$ either $u$ or $v$ is in $C$. Thus $C$ is a vertex cover of size $|C|\leq |F| \leq k$.

	``$\Rightarrow:$'' Let $C$ be a vertex cover of size $k$ in $G$. We build an edge-deletion set $F$ for the instance $(G', k, 1+1/(6+\ell) - \varepsilon)$ as follows: 
    For each $v\in C$ add an arbitrary edge $e'$ from $T_v$ to $F$. 

    Let $H$ be a densest subgraph of $G'-F$. Assume $H$ is connected. Otherwise, treat connected components separately (see \cref{lem:nice-props} (\ref{niceprop-densest-subgraph-connected}.)). 
    We now prove by contradiction that $H$ has density $\rho(H)< 1+1/(6+\ell)$.

	Suppose for the sake of contradiction that $H$ has density $\rho(H)\geq 1+1/(6+\ell)>\tau_\rho$. 
	By \cref{lem:nice-props} (\ref{niceprop-minneighbors-within-densest}.), no vertex of degree $1$ belongs to $H$. 
	Thus, the vertex set $V(H)$ consists of the vertices from some triangles~$T_{v_1},\ldots,T_{v_x}$ and some paths connecting these triangles and~$F$ contains edges from some of the triangles, see \Cref{fig:Q_region_RHO_g1} for an illustration.
	\begin{figure}[t]
		\centering
  		\begin{tikzpicture}
            \tikzset{vertex_small/.style={circle,draw, fill, inner sep=2pt}}
            \tikzset{vertex_large/.style={circle,draw=black,dashed}}
            \tikzset{quasi_edge/.style={draw=black,dotted, ultra thick}}
            \tikzset{region_Q/.style={black!25,line width=25pt,rounded corners=5pt}}
            \tikzset{region_S/.style={blue!40,line width=15pt,rounded corners=5pt}}

        		 \foreach \pos/\name in {
        		 {(0,0)/b-1}, {(1,0)/b-2}, {(0.5,1)/b-3},
        		 {(3,0)/b-4}, {(4,0)/b-5}, {(3.5,1)/b-6},
        		 {(6,0)/b-7}, {(7,0)/b-8}, {(6.5,1)/b-9},
        		 {(5.5,5)/b-11}, {(5,4)/b-12}, {(4.5,5)/b-13}}
        			   \node[vertex_small] (\name) at \pos {};
        		   \foreach \one/\two in {{b-1/b-2}, {b-2/b-3}, {b-4/b-5}, {b-5/b-6}, {b-7/b-8}, {b-8/b-9},{b-11/b-12}, {b-12/b-13}}
        			   \draw[thick] (\one) -- (\two);
        		 \foreach \pos/\name  in {
        		 {(-3,0)/t-1}, {(-2,0)/t-2}, {(-2.5,1)/t-3},
        		 {(0.5,2)/t-4}, {(0,3)/t-5}, {(1,3)/t-6},
        		 {(5,3)/t-7}, {(4.5,2)/t-8}, {(5.5,2)/t-9},
        		 {(9,0)/t-10}, {(10,0)/t-11}, {(9.5,1)/t-12}}
        			   \node[vertex_small] (\name) at \pos {};
        		   \foreach \one/\two in {{t-1/t-2}, {t-2/t-3},{t-3/t-1}, {t-4/t-5}, {t-5/t-6},{t-6/t-4}, {t-7/t-8}, {t-8/t-9}, {t-9/t-7},{t-10/t-11},{t-11/t-12}, {t-12/t-10}}
        			   \draw[thick,blue] (\one) -- (\two);

        		  \draw[quasi_edge] (b-2)--(b-4); 
        		  \draw[quasi_edge] (b-5)--(b-7); 
        		  
        		  \draw[quasi_edge] (t-2)--(b-1);
        		  \draw[quasi_edge] (t-4)--(b-3); 
        		  \draw[quasi_edge] (t-8)--(b-6); 
        		  \draw[quasi_edge] (t-9)--(b-9); 
        		  \draw[quasi_edge] (t-10)--(b-8);
        		  \draw[quasi_edge] (t-7)--(b-12);
                \draw[quasi_edge] (t-6)--(b-13);
                \draw[quasi_edge] (t-12)--(b-11);
        \end{tikzpicture}
		\caption{
			A schematic illustration of a densest subgraph $H$. The dotted edges correspond to paths of $\ell$ vertices. 
			The blue edges indicate triangles that correspond to an independent set in the original graph.
		}
		\label{fig:Q_region_RHO_g1}
    \end{figure}
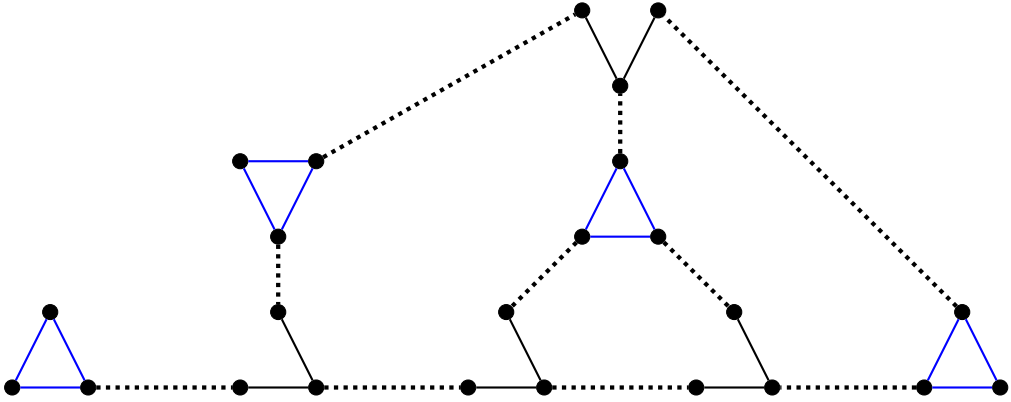
	Hence, $H$ corresponds to the subgraph~$G[V']$ with~$V' = \{v_1, \ldots, v_x\}$.
	Since~$H$ is connected it follows that~$G[V']$ is connected.
	Let~$y$ be the number of edges in~$G[V']$ (equal to the number of connecting paths in~$H$).
	Note that since~$G$ has maximum degree three, we have that at least~$x/4$ vertices of~$V'$ in any vertex cover of~$G[V']$.
	Moreover, $y \le 3x/2$.
	From this we can derive the following bounds for the number of edges and vertices:
	\begin{align*}
		n_H & = 3x + y \ell  & m_H & \le 2\cdot(x/4) + 3\cdot(3x/4) + y(\ell + 1)
	\end{align*}
	Thus, 
	\begin{align*}
		\rho(H) \le \frac{11x/4 + y(\ell + 1)}{3x + y \ell} = 1 + \frac{y - x/4}{3x + y\ell} 
	\end{align*}
	To obtain the contradiction it remains to show that:
	\begin{align*}
		&& \frac{y - x/4}{3x + y\ell} & < \frac{1}{6+\ell}  &\iff && 6y & < \frac{9x}{2} + \frac{x \ell}{4}
	\end{align*}
	Since~$y \le 3x/2$ and assuming~$\ell > 18$, we get:
	\begin{align*}
		6y \le 9x = \frac{9x}{2} + \frac{18 x}{4} < \frac{9x}{2} + \frac{x \ell}{4}
	\end{align*}
\end{proof}

\begin{corollary}\label{cor:NP-completeness_G-F_bigger_1}
	\densestEdgeDeletion{} is \NP-complete on planar graphs where $m - k > n$.
\end{corollary}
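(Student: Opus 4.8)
The plan is to reuse the reduction from the proof of \cref{thm:edgeDel-NP-Hard-rho>1} almost verbatim and simply check that the instances it produces already satisfy $m - k > n$. Recall that, given an instance $(G,k)$ of \textsc{Vertex Cover} on cubic planar graphs, that reduction outputs a planar graph $G'$ with $n_{G'} = 3n_G + \ell m_G$ vertices and $m_{G'} = 3n_G + (\ell+1)m_G$ edges while keeping the same budget $k$. Hence $m_{G'} - n_{G'} = m_G$, so the target inequality $m_{G'} - k > n_{G'}$ is equivalent to the single condition $k < m_G$.

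First I would dispose of the instances with $k \ge m_G$. Since $G$ is cubic it has an independent set of size at least $n_G/4$, hence a vertex cover of size at most $3n_G/4 < 3n_G/2 = m_G$; thus any instance with $k \ge m_G$ is a yes-instance of \textsc{Vertex Cover}, and I map all of them to one fixed planar yes-instance of \densestEdgeDeletion{} meeting the promise, for example $(K_4, 0, 3/2)$: it is planar, has $\rho^*(K_4) = 3/2 \le \tau_\rho$, satisfies $m - k = 6 > 4 = n$, and uses $\tau_\rho = 3/2 > 1$.

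Then, for the remaining instances with $k < m_G$, I would simply invoke the reduction of \cref{thm:edgeDel-NP-Hard-rho>1}: its correctness yields the required equivalence between $(G,k)$ and $(G', k, 1 + 1/(6+\ell) - \varepsilon)$, the graph $G'$ is planar (blowing up each vertex into a triangle and subdividing each edge preserves planarity), and from $k < m_G = m_{G'} - n_{G'}$ we obtain $m_{G'} - k > n_{G'}$. Membership in \NP{} is immediate, so the two cases together establish the corollary.

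I do not expect a real obstacle here: the statement is essentially a bookkeeping corollary of \cref{thm:edgeDel-NP-Hard-rho>1}. The only point requiring a moment of care is routing the trivial instances with $k \ge m_G$ to a concrete planar yes-instance that itself satisfies $m - k > n$ while having density above one; the gadget $K_4$ with $\tau_\rho = 3/2$ serves this purpose.
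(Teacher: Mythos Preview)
Your proposal is correct and follows essentially the same approach as the paper: reuse the reduction from \cref{thm:edgeDel-NP-Hard-rho>1} and verify that the output instances satisfy $m_{G'} - k > n_{G'}$, which reduces to $k < m_G$. The paper handles this last point more simply by observing that in any \textsc{Vertex Cover} instance one may assume $k \le n_G$, whence $k \le n_G < 3n_G/2 = m_G$; your detour through an independent-set bound and a fixed fallback instance is unnecessary but not wrong.
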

\begin{proof}
	Let $(G,k)$ be an instance of {\sc Vertex cover} with $G$ cubic, then $m= 3n/2$. Note that $k\leq n$. 
	The graph $G'$ from the proof of \cref{thm:edgeDel-NP-Hard-rho>1} has~$3n + m\ell$ vertices and~$3n + m(\ell +1)$ edges.
	Thus, for~$G'$ we have~$m_{G'} - n_{G'} = m = 3n/2 > k$.
\end{proof}
}

\subparagraph*{Bipartite graphs and split graphs.}\label{sec:Bipartite_EdgeDeletion}
Finally, we show that \densestEdgeDeletion{} is \NP-hard even on planar bipartite graphs and on split graphs.

\begin{theorem}
	\label{thm:hardness-densest-edge-bipartite}
    \densestEdgeDeletion{} remains \NP-complete on planar bipartite graphs.
\end{theorem}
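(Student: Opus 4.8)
The plan is to bootstrap from the hardness for target densities above one that was just established. Starting from an instance $(G',k,\tau_\rho)$ of \densestEdgeDeletion{} produced by \cref{thm:edgeDel-NP-Hard-rho>1} (so $G'$ is planar with $\Delta(G')=3$ and $\tau_\rho>1$), let $G''$ be the graph obtained from $G'$ by subdividing every edge exactly once. Then $G''$ is planar, has maximum degree $3$, and is bipartite, the two colour classes being the original vertices of $G'$ and the subdivision vertices. The reduction outputs $(G'', k, 2\tau_\rho/(1+\tau_\rho))$; note the new target is still above one.

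The heart of the argument is a clean relation between maximum subgraph densities under edge subdivision: for any graph $X$ with $\rho^*(X) > 1$, its edge-subdivision $X'$ satisfies $\rho^*(X') = 2\rho^*(X)/(1+\rho^*(X))$. The key observation is that a densest subgraph $H'$ of $X'$ has, by \cref{lem:nice-props} (\ref{niceprop-minneighbors-within-densest}.), minimum degree at least $2$, so every subdivision vertex it contains must contribute both of its incident edges; hence $H'$ is exactly the subdivision of a subgraph $H\subseteq X$ with $n_H$ original vertices and $m_H$ edges, and $\rho(H') = 2m_H/(n_H+m_H) = 2\rho(H)/(1+\rho(H))$, which is increasing in $\rho(H)$ (using \cref{lem:fractions}); conversely every subgraph of $X$ subdivides to a subgraph of $X'$ of this density. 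The same observation also governs the correspondence of deletion sets: in $X'$, deleting one of the two edges incident to a subdivision vertex $w_e$ leaves $w_e$ pendant, so it cannot lie in any subgraph of minimum degree $2$ — it is ``as good as'' deleting $e$ in $X$. Thus, from a solution $F$ for $(G',k,\tau_\rho)$, picking for each $e\in F$ one edge at $w_e$ yields a size-$k$ solution for the subdivided instance; and from a solution $F''$ for $(G'', k, 2\tau_\rho/(1+\tau_\rho))$, setting $F:=\{e : \text{some edge at } w_e \text{ lies in } F''\}$ gives $|F|\le k$, and the structural characterisation above forces a densest subgraph of $G''-F''$ to be the subdivision of a subgraph of $G'-F$, so monotonicity of $t\mapsto 2t/(1+t)$ transfers the density bound back (the cases where $\rho^*\le 1$ are handled separately, as the target is then trivially met).

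The step I expect to be most delicate is this ``only if'' direction: an arbitrary edge-deletion set $F''$ of $G''$ need not consist of ``one edge per subdivision vertex'', so one cannot literally pull it back to an edge set of $G'$ and must argue entirely through the densest-subgraph characterisation, while also being careful to invoke the monotone map $t\mapsto 2t/(1+t)$ only in the regime $\rho^*>1$ where the minimum-degree argument applies. An alternative, more hands-on route would be to mimic the reduction of \cref{thm:edgeDel-NP-Hard-rho>1} directly from \textsc{Vertex Cover} on cubic planar graphs but replace each triangle gadget $T_v$ by a bipartite gadget of density $1$ having three ports at pairwise even distance (e.g.\ a $6$-cycle with ports at its three pairwise-nonadjacent vertices) and use connecting paths of odd length; this keeps the graph planar, bipartite, and of maximum degree $3$, but the obstacle there is that a $6$-cycle minus an edge is a $P_6$ rather than a $P_2$, so the bookkeeping confining the densest subgraph to (near-)intact gadgets joined by connecting paths is noticeably more involved and must be absorbed by taking a larger path length.
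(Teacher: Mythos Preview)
Your proposal is correct and follows essentially the same route as the paper: reduce from the planar $\tau_\rho>1$ hardness of \cref{thm:edgeDel-NP-Hard-rho>1} by subdividing every edge once, transport the target density via $t\mapsto 2t/(1+t)$, and translate edge deletions through the subdivision vertices exactly as you describe. The paper's write-up differs only in packaging: it first isolates a sequence of lemmas showing that for any graph $X$ the subdivision $S(X)$ has $\rho^*(S(X))=2\rho^*(X)/(1+\rho^*(X))$ (treating the boundary case $\rho^*=1$ explicitly via a cycle argument), and it also pins down, via \cref{cor:NP-completeness_G-F_bigger_1}, that the source instance satisfies $m_{G'}-k>n_{G'}$ so that $\rho^*(G'-F)>1$ is guaranteed for every feasible $F$---this lets the paper stay entirely in the $\rho^*>1$ regime rather than dispose of the $\rho^*\le 1$ case by the ``target trivially met'' remark you use; either way works.
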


\appendixproof{thm:hardness-densest-edge-bipartite}
{
Firstly, we introduce concepts that will be used to prove \cref{thm:hardness-densest-edge-bipartite}.
The \emph{subdivision} of an edge $\{u,v\}\in E(G)$ is the operation of creating new vertex $w$, adding edges $\{u,w\}$ and $\{v,w\}$ to $E(G)$ and deleting edge $\{u,v\}$. 
We denote by $S(G)$ the graph obtained by subdividing each edge of $G$ once.
The graph $S(G)$ is called the \emph{subdivision graph} of $G$ or the \emph{subdivision} of $G$.
The reverse operation of subdividing an edge, is \emph{smoothing} a vertex $w\in V(G)$ of degree $2$ with incident edges $\{u,w\},\{w,v\}$. The smoothing a vertex operation consists in adding edge $\{u,v\}$ to $E(G)$ and deleting vertex $w$.

The reduction is done from \densestEdgeDeletion{} using the fact that the subdivision $S(G)$ of any graph $G$ is bipartite,  and the fact, shown in this section, that the subdivision of a densest subgraph of $G$ is itself a densest subgraph of $S(G)$. 

By the definition of the subdivision graph $S(G)$, we get the following observation. 

\begin{observation}\label{obs:density_of_subdivision}
	Let $G$ be a graph and $S(G)$ its subdivision. Then, $S(G)$ has $n_{S(G)}=n_G+m_G$ vertices and $m_{S(G)}=2m_G$ edges.
\end{observation}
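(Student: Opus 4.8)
The statement to prove is \cref{obs:density_of_subdivision}: if $S(G)$ is the subdivision of $G$, then $n_{S(G)} = n_G + m_G$ and $m_{S(G)} = 2m_G$. This is a direct counting argument, so the ``proof'' is really just making the bookkeeping explicit. The plan is to track what the subdivision operation does to each edge of $G$ and then sum over all edges.

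First I would recall the definition: subdividing a single edge $\{u,v\}$ removes that edge, introduces one fresh vertex $w$, and adds the two edges $\{u,w\}$ and $\{w,v\}$. Thus each individual subdivision increases the vertex count by exactly $1$ and changes the edge count by $-1 + 2 = +1$. Since $S(G)$ is obtained by performing this operation once on each of the $m_G$ edges of $G$ (and distinct edges receive distinct new vertices, so there is no interference between the operations), I would conclude $n_{S(G)} = n_G + m_G$ and $m_{S(G)} = m_G + m_G = 2m_G$.

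Alternatively — and perhaps cleaner to write — I would partition $V(S(G))$ into the $n_G$ ``original'' vertices inherited from $G$ and the $m_G$ ``subdivision'' vertices, one per edge of $G$; this gives the vertex count immediately. For the edges, each subdivision vertex $w_e$ (for $e = \{u,v\} \in E(G)$) has exactly two incident edges in $S(G)$, namely $\{u, w_e\}$ and $\{w_e, v\}$, and every edge of $S(G)$ is incident to exactly one subdivision vertex; hence $m_{S(G)} = 2 m_G$.

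There is essentially no obstacle here — the only thing to be careful about is noting that the new vertices created for different edges are genuinely distinct (so counting is additive) and that every edge of $S(G)$ is accounted for exactly once in the per-edge tally. I would state these two points briefly and then the equalities follow.
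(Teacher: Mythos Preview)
Your proposal is correct; the paper does not give a proof at all for this observation, merely stating that it follows directly from the definition of the subdivision graph. Your explicit bookkeeping (either the per-operation tally or the partition into original and subdivision vertices) is exactly the kind of justification one would supply if asked to unpack that one-line claim, so there is nothing to add or correct.
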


\begin{lemma}\label{lem:subdivision_of_subgraph}
	Let $H$ be a densest subgraph of $G$. Then, $\rho(S(H)) \geq \rho(S(G'))$ for any $G'\subseteq G$.
\end{lemma}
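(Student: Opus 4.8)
The plan is to prove the statement by comparing densities directly using \cref{lem:fractions}. Let $H$ be a densest subgraph of $G$, so $\rho(H) = \rho^*(G) = a/b$ where $a = m_H$ and $b = n_H$. By \cref{obs:density_of_subdivision}, $S(H)$ has $n_H + m_H = a + b$ vertices and $2m_H = 2a$ edges, so $\rho(S(H)) = 2a/(a+b)$. Similarly, for any $G' \subseteq G$ with $m_{G'} = a'$ and $n_{G'} = b'$, we get $\rho(S(G')) = 2a'/(a'+b')$. The goal is thus to show $2a/(a+b) \ge 2a'/(a'+b')$, equivalently $a/(a+b) \ge a'/(a'+b')$, given that $a/b \ge a'/b'$ (since $H$ is a densest subgraph of $G$ and $G' \subseteq G$).

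The key step is to observe that the map $x \mapsto x/(x+1)$ is monotonically increasing, and that $a/(a+b) = (a/b)/((a/b)+1)$, likewise $a'/(a'+b') = (a'/b')/((a'/b')+1)$. Hence $a/b \ge a'/b'$ immediately yields $a/(a+b) \ge a'/(a'+b')$, and multiplying by $2$ gives $\rho(S(H)) \ge \rho(S(G'))$. Alternatively, one can phrase this via \cref{lem:fractions}: $\frac{a}{b} \ge \frac{a'}{b'}$ is equivalent to $\frac{a}{b} \ge \frac{a+a'}{b+b'} \ge \frac{a'}{b'}$ and from cross-multiplication $a b' \ge a' b$, one directly derives $a(a'+b') = aa' + ab' \ge aa' + a'b = a'(a+b)$, which is exactly $\frac{a}{a+b} \ge \frac{a'}{a'+b'}$.

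A minor point to handle carefully is the edge case where $b' = 0$ (the empty graph) or where $G'$ has no edges; here the density of $S(G')$ is $0$ by the convention that the empty graph has density zero (or $S(G')$ is edgeless), so the inequality $\rho(S(H)) \ge 0 = \rho(S(G'))$ holds trivially since densities are nonnegative. One should also note that $H$ being a densest subgraph guarantees $\rho(H) \ge \rho(G')$ for all $G' \subseteq G$, which is what makes the hypothesis $a/b \ge a'/b'$ valid.

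I do not expect any real obstacle here — the statement is essentially the observation that subdivision acts on density by the Möbius-type transformation $\rho \mapsto 2\rho/(\rho+1)$, which is monotone. The only thing requiring a moment's care is bookkeeping the degenerate cases and making sure the direction of the inequality is preserved through \cref{lem:fractions}; the core computation is a one-line cross-multiplication.
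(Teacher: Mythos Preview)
Your proof is correct and follows essentially the same approach as the paper: both compute $\rho(S(H))=2m_H/(n_H+m_H)$ via \cref{obs:density_of_subdivision} and reduce the inequality $\rho(S(H))\ge\rho(S(G'))$ to $m_H/n_H\ge m_{G'}/n_{G'}$, which holds since $H$ is densest. The paper simply writes ``after substituting and rearranging terms'' where you spell out the cross-multiplication (and the degenerate cases), so your version is slightly more detailed but not different in substance.
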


\begin{proof}
	Let $H$ and $G'$ be two subgraphs of $G$, with $H$ a densest subgraph of $G$.
    By \cref{obs:density_of_subdivision}, we know that $\rho(S(H))=2m_H/(n_H + m_H)$,  and $ \rho(S(G'))=2m_{G'}/(n_{G'}+m_{G'})$. Thus, after substituting and rearranging terms we get that:
	\begin{equation}\label{eq:subdivision_inequality_density_computation}
		\rho(S(H)) \geq \rho(S(G')) 
		\iff 
		\frac{m_H}{n_H}  \geq \frac{m_{G'}}{n_{G'}}
	\end{equation}
    The last inequality is true since $H$ is a densest subgraph of $G$. Thus any subgraph in $S(G)$ which is a subdivision of a subgraph of $G$ is at most as dense as $S(H)$.
\end{proof}

As illustrated in \cref{fig:densest_not_subdivision}, a densest subgraph of $S(G)$ does not need to be a subdivision of a subgraph of $G$. We show in \cref{lem:densest_subgraph_is_subdivision_rho_not_1}  that this can only happen if $\rho^*(S(H))=1$.
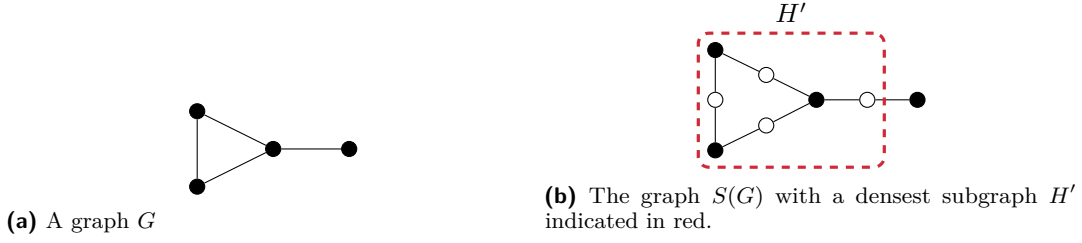
\begin{figure}[t]
    \begin{subfigure}[b]{0.50\textwidth}
        \centering
            \begin{tikzpicture}
              \node[draw, circle, fill, inner sep=2pt] at (0, 0) (t1) {};
              \node[draw, circle, fill, inner sep=2pt] at (0, 1) (t2) {};
              \node[draw, circle, fill, inner sep=2pt] at (1, 0.5) (t3) {};
              \node[draw, circle, fill, inner sep=2pt] at (2, 0.5) (t4) {};

              \draw[-] (t1) -- (t2);
              \draw[-] (t2) -- (t3);
              \draw[-] (t3) -- (t1);
              \draw[-] (t3) -- (t4);
            \end{tikzpicture}
        \caption{A graph $G$}
    \end{subfigure}
    \begin{subfigure}[b]{0.50\textwidth}
        \centering
        \begin{tikzpicture}
          \node[draw, circle, fill, inner sep=2pt] at (0, 0) (t1) {};
          \node[draw, circle, inner sep=2pt] at (0, 0.67) (b1) {};
          \node[draw, circle, fill, inner sep=2pt] at (0, 1.33) (t2) {};
          \node[draw, circle, inner sep=2pt] at (0.67, 1) (b2) {};
          \node[draw, circle, fill, inner sep=2pt] at (1.33, 0.67) (t3) {};
          \node[draw, circle, inner sep=2pt] at (2, 0.67) (b3) {};
          \node[draw, circle, fill, inner sep=2pt] at (2.66, 0.67) (t4) {};
           \node[draw, circle, inner sep=2pt] at (0.67, 0.33) (b4) {};

          \draw[-] (t1) -- (b1);
          \draw[-] (b1) -- (t2);
          \draw[-] (t2) -- (b2);
          \draw[-] (b2) -- (t3);
          \draw[-] (t3) -- (b4);
          \draw[-] (b4) -- (t1);
          \draw[-] (t3) -- (b3);
          \draw[-] (b3) -- (t4);

          \node[very thick, italyRed, draw,rounded corners, dashed,fit=(t1) (t2) (t3) (b3),label=above:$H'$] {};
        \end{tikzpicture}
        \caption{The graph $S(G)$ with a densest subgraph $H'$ indicated in red.}
    \end{subfigure}
    \caption{An example of a graph $G$ and a densest subgraph $H'$ in $S(G)$ which is not a subdivision of a subgraph of $G$.}
    \label{fig:densest_not_subdivision}
\end{figure}

\begin{lemma}\label{lem:densest_subgraph_is_subdivision_rho_not_1}
	A densest subgraph $H'$ of $S(G)$ with density $\rho(H')\neq 1$ is a subdivision of a subgraph of $G$.
\end{lemma}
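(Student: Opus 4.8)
The plan is to analyze the structure of a densest subgraph $H'$ of $S(G)$ and show that, unless its density is exactly $1$, it must ``respect'' the subdivision structure, i.e. every subdivision vertex of degree $2$ in $S(G)$ that appears in $H'$ has both of its neighbours in $H'$, and both endpoints of every original edge whose midpoint lies in $H'$ are also in $H'$. First I would recall from \cref{lem:nice-props} (\ref{niceprop-minneighbors-within-densest}.) that the minimum degree of $H'$ is at least $\lceil \rho(H')\rceil$. Since we assume $\rho(H') \neq 1$, and since $S(G)$ is bipartite so $\rho(H') = m/n$ with $H'$ not a single cycle, we are in one of two regimes: either $\rho(H') < 1$, or $\rho(H') > 1$.

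In the case $\rho(H') > 1$, \cref{lem:nice-props} (\ref{niceprop-minneighbors-within-densest}.) gives $\delta(H') \geq 2$. Now observe that in $S(G)$ every subdivision vertex has degree exactly $2$, so if such a vertex $w$ (the midpoint of an original edge $\{u,v\}$) lies in $H'$, both $u$ and $w$'s neighbours $u,v$ must lie in $H'$ as well (otherwise $\deg_{H'}(w) \leq 1$). Conversely, for an original vertex $u \in V(G)$ lying in $H'$, each of its incident edges in $H'$ goes to a subdivision vertex; that is the only kind of neighbour it can have. So $E(H')$ decomposes into subdivided edges $\{u,w\},\{w,v\}$ with all of $u,v,w$ in $H'$: exactly the statement that $H'$ is the subdivision $S(G'')$ of the subgraph $G''$ of $G$ on vertex set $V(H') \cap V(G)$ and edge set $\{\,\{u,v\} : w_{uv} \in V(H')\,\}$. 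That handles the $\rho > 1$ case.

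The case $\rho(H') < 1$ needs a separate but short argument. Here \cref{lem:nice-props} (\ref{niceprop-trees-density}.) applies: $H'$ is a forest in which every tree has the same number $b$ of vertices and $b-1$ edges. If some tree component $T$ of $H'$ contains a subdivision vertex $w$ (midpoint of $\{u,v\}$) with only one of $u,v$ inside $T$, then $w$ is a leaf of $T$; I would argue that such a ``dangling half-edge'' can always be removed to obtain a strictly denser forest, contradicting that $H'$ is densest — more precisely, deleting the leaf $w$ drops the count by one vertex and one edge, and by \cref{lem:fractions} this increases the density since $(b-1)/b$ increases under this operation only if... actually here I must be careful: deleting a leaf from a tree on $b$ vertices gives density $(b-2)/(b-1) < (b-1)/b$, so deletion does \emph{not} help. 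Instead I would \emph{add} the missing endpoint: extend $T$ by the vertex $v$ and the edge $\{w,v\}$, getting a tree on $b+1$ vertices with $b$ edges, density $b/(b+1) > (b-1)/b$, contradicting maximality of $\rho(H')$ (using that this extended tree is a subgraph of $S(G)$). Hence no such dangling subdivision vertex exists, and the same decomposition as above shows $H'$ is a subdivision of a subgraph of $G$.

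The main obstacle I anticipate is the bookkeeping in the $\rho(H') < 1$ case: one must make sure that when a subdivision vertex $w \in V(H')$ has a neighbour $u \notin V(H')$, the edge $\{w,u\}$ of $S(G)$ genuinely exists and can be added back while keeping $H'$ a subgraph of $S(G)$, and that after ``completing'' all such half-edges the resulting object is still the subdivision of a legitimate subgraph of $G$ (no parallel edges, consistent with $S$). One also has to rule out, in both regimes, the possibility that $H'$ contains an original vertex $u$ together with a subdivision vertex $w$ that is \emph{not} one of $u$'s subdivided-edge midpoints — but this is immediate, since in $S(G)$ the only neighbours of $u$ are exactly the midpoints of edges incident to $u$ in $G$, so any edge of $H'$ at $u$ is of the required form. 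With these checks in place the lemma follows.
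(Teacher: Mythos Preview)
Your proposal is correct and follows essentially the same approach as the paper. In the $\rho(H')>1$ case you invoke \cref{lem:nice-props} (\ref{niceprop-minneighbors-within-densest}.) to get $\delta(H')\ge 2$, while the paper directly deletes the degree-$1$ subdivision vertex and observes $(m-1)/(n-1)>m/n$; these are the same argument, since the cited lemma is proved by exactly that deletion step. In the $\rho(H')<1$ case both you and the paper add the missing endpoint together with its edge to obtain a strictly denser subgraph via $(m+1)/(n+1)>m/n$. Two minor remarks: your aside that ``$S(G)$ is bipartite so \ldots\ $H'$ not a single cycle'' is irrelevant (bipartite graphs do have even cycles of density $1$) and should be dropped; and your false start about deleting the leaf can simply be omitted.
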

\begin{proof}
	The proposition is trivially true when $G$ has no edges. Therefore, we assume that $G$ has at least one edge and therefore $\rho(H')> 0$.  Suppose by contradiction that $H'$ is a densest subgraph of $S(G)$ with $0 < \rho(H') \neq 1$ and that it is not a subdivision of a subgraph of $G$. We assume, without loss of generality, that $H'$ is connected (otherwise, consider each connected component separately). Then $H'$ must have at least one vertex $w$ which was introduced when subdividing an edge $\{u,v\}$ in $G$. Moreover, at least one such vertex $w$ must have degree one in $H'$, since otherwise, $H'$ would be a subdivision of a subgraph of $G$. Notice that the neighbor of $w$ in $H'$ is one of the original vertices of $G$. Suppose without loss of generality that this neighbor is $v$, and therefore $u$ is not in $H'$.
	We will now show that either $H'-{w}$ (i.e.the graph obtained by removing $w$ from $H'$) or $H'+{\{u,\{u,w\}\}}$ (i.e. the graph obtained by adding $u$ and its edge incident with $w$ to $H'$) has density greater than $H'$, that leads to a contradiction. Since $\rho(H')\neq 1$ we have two cases:
	
	\emph{Case $\rho(H') > 1$:} We note that $\rho(H'-{w}) = \frac{m_{H'}-1}{n_{H'}-1}$. 
	Then, using \cref{lem:fractions} (with~$c=d=1$) we obtain that $\rho(H'-{w}) = \frac{m_{H'}-1}{n_{H'}-1} > \frac{m_{H'}}{n_{H'}} = \rho(H')$.

	\emph{Case $\rho(H') < 1$:} We note that $\rho(H'+{\{u,\{u,w\}\}}) = \frac{m_{H'}+1}{n_{H'}+1}$. 
	Using \cref{lem:fractions} (with~$c=d=1$) we know that $\rho(H'+{\{u,\{u,w\}\}}) = \frac{m_{H'}+1}{n_{H'}+1} > \frac{m_{H'}}{n_{H'}}= \rho(H')$. 
\end{proof}

Together \cref{lem:subdivision_of_subgraph} and \cref{lem:densest_subgraph_is_subdivision_rho_not_1} tell us that if 
$\rho^*(S(G))\neq 1$ the subdivision of a densest subgraph of $G$ is a densest subgraph of $S(G)$. 

We now show that when $\rho^*(S(G))=1$ there exists a densest subgraph of $S(G)$ which is a subdivision of a densest subgraph of $G$. This will be done in two parts: First, we show that if $\rho^*(S(G))= 1$,then we can always find a densest subgraph of $S(G)$ that is a subdivision of some subgraph of $G$ (\cref{prop:densest_subgraph_is_subdivision_rho_1}). Second, we show that such a subgraph of $G$ must be a densest subgraph of $G$ (\cref{lem:density_1_THEN_subdivision_density_1}).

\begin{lemma}\label{prop:densest_subgraph_is_subdivision_rho_1}
	Let $H'$ be a densest subgraph of $S(G)$. If $\rho^*(S(G))=\rho(H')=1$, then there exists $H''\subseteq H'$ such that $\rho(H'')=\rho(H')=1$ and $H''=S(G')$ for some $G'\subseteq G$.
\end{lemma}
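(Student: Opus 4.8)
The plan is to take the given density-one densest subgraph $H'$ of $S(G)$, strip it to its $2$-core, and observe that this $2$-core is automatically the subdivision of a union of cycles of $G$. The first step is to record what $\rho(H')=\rho^*(S(G))=1$ forces: since $1\neq 0$ the graph $H'$ is nonempty, and by \cref{lem:nice-props} (\ref{niceprop-densest-subgraph-connected}.) every connected component $D$ of $H'$ has density exactly $1$, i.\,e.\ $m_D=n_D$. A connected graph with equally many edges and vertices consists of a spanning tree plus one extra edge, so it contains exactly one cycle; hence each component of $H'$ is unicyclic.

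Next I would peel a component down to its cycle. Fix one component $D$ and repeatedly delete a vertex of degree $1$ until none remains. Each such deletion removes exactly one vertex and exactly one edge, so the density stays equal to $1$ throughout (not merely non-increasing); and since every vertex on the unique cycle of $D$ always retains degree at least $2$, the process terminates precisely with that cycle $C$. Setting $H'':=C$ we have $H''\subseteq H'$ and $\rho(H'')=1=\rho(H')$, so it remains only to recognise $H''$ as a subdivision of a subgraph of $G$.

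For this last step I would use that $S(G)$ is bipartite, one side being $V(G)$ (the ``original'' vertices) and the other side the set of subdivision vertices, one vertex $w_e$ per edge $e\in E(G)$, with every edge of $S(G)$---hence every edge of the cycle $H''$---joining the two sides. So $H''$, written cyclically as $(a_1,b_1,a_2,b_2,\dots,a_t,b_t)$, alternates between original vertices $a_i\in V(G)$ and subdivision vertices $b_i=w_{e_i}$; since the only two neighbours of $w_{e_i}$ in $S(G)$ are the endpoints of $e_i$, this forces $e_i=\{a_i,a_{i+1}\}$ (indices mod $t$). As $H''$ is a simple cycle the $a_i$ are pairwise distinct and the $b_i$ are pairwise distinct, so the $e_i$ are pairwise distinct; thus $C':=a_1a_2\cdots a_t a_1$ is a cycle of $G$ and $H''=S(C')$, so $G':=C'$ works. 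I do not anticipate a real obstacle here; the only points needing care are checking that leaf-deletion genuinely preserves density (rather than just not raising it) and the degenerate case where $D$ is already a cycle, in which the peeling is vacuous. Running the same argument on every component at once shows, if one wants it, that the full $2$-core of $H'$ equals $S(G')$ for $G'$ a disjoint union of cycles.
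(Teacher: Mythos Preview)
Your argument is correct and follows essentially the same route as the paper: reduce to a connected component of density $1$, observe it is unicyclic, take $H''$ to be its unique cycle, and check that this cycle is the subdivision of a cycle of $G$. The paper simply asserts that ``any cycle in $S(G)$ is a subdivision of a cycle in $G$'' and picks the unique cycle directly, whereas you spell out the leaf-peeling and give the bipartite alternation argument for why a cycle of $S(G)$ must be $S(C')$ for a cycle $C'$ of $G$; but the underlying strategy is identical.
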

\begin{proof}
	Since $\rho^*(S(G))=1$ any connected densest subgraph of $S(G)$ contains a single cycle. Moreover, any cycle in $S(G)$ is a subdivision of a cycle in $G$. Assume that $H'$ is connected, otherwise treat each connected component separately. Denote by $C'$ the unique cycle in $H'$ and let $C$ be a cycle in $G$ such that $S(C)=C'$. Then setting $H''= C'$ and $G'=C$ we have: $H''\subseteq H'$ such that $\rho(H'')=1$ and $H''=S(G')$.
\end{proof}

The next lemma implies that when $\rho^*(S(G))=1$ the graph $G'$ of \cref{prop:densest_subgraph_is_subdivision_rho_1} must be a densest subgraph of $G$.

\begin{lemma}\label{lem:density_1_THEN_subdivision_density_1}
	A graph $G$ has density $\rho(G)=1$ if and only if $\rho(S(G))=1$.
\end{lemma}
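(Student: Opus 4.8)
The plan is to prove the biconditional $\rho(G)=1 \iff \rho(S(G))=1$ by relating the density of $S(G)$ to that of $G$ via the identity from \cref{obs:density_of_subdivision}. Recall that $S(G)$ has $n_G + m_G$ vertices and $2m_G$ edges, so $\rho(S(G)) = \frac{2m_G}{n_G + m_G}$. A direct algebraic manipulation shows that $\frac{2m_G}{n_G+m_G} = 1 \iff 2m_G = n_G + m_G \iff m_G = n_G \iff \rho(G) = 1$. So at the level of the whole graph, the two densities are equal to $1$ simultaneously. The only subtlety is that the statement is about $\rho$ (density of the whole graph), not $\rho^*$ (density of the densest subgraph), so this is essentially the one-line computation above; I would state it cleanly and note that edge cases (the empty graph, one-vertex graphs) are handled by the convention that their density is zero, hence never equal to $1$, and $S(G)$ of such a graph is again edgeless.

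Actually, rereading the surrounding text, I suspect the intended statement (and the one actually needed to finish \cref{thm:hardness-densest-edge-bipartite}) may be the $\rho^*$ version: $\rho^*(G) = 1 \iff \rho^*(S(G)) = 1$. I would prove that too, as it follows with little extra work. For the forward direction: if $\rho^*(G) = 1$ then $G$ contains a cycle $C$ (density exactly $1$), and $S(C)$ is a cycle in $S(G)$, so $\rho^*(S(G)) \ge 1$; conversely no subgraph of $S(G)$ can have density exceeding $1$, since any such subgraph contains a subgraph that is either a subdivision of a subgraph of $G$ (handled by \cref{lem:subdivision_of_subgraph}, which gives $\rho(S(H')) \le \rho(S(H)) = \frac{2m_G}{n_G+m_G} \cdot$ — wait, more carefully: \cref{lem:subdivision_of_subgraph} bounds the density of subdivisions of subgraphs by $\rho(S(H))$ where $H$ is densest in $G$, and $\rho(S(H)) = \frac{2m_H}{n_H+m_H} \le 1$ precisely when $m_H/n_H \le 1$) or else, by \cref{lem:densest_subgraph_is_subdivision_rho_not_1}, has density exactly $1$. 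For the backward direction, combine \cref{lem:densest_subgraph_is_subdivision_rho_not_1} and \cref{prop:densest_subgraph_is_subdivision_rho_1}: a densest subgraph $H'$ of $S(G)$ with $\rho(H') = 1$ either is, or contains with the same density, a subdivision $S(G')$, and then $\rho(S(G')) = 1$ forces (by the whole-graph computation) $\rho(G') = 1$, so $\rho^*(G) \ge 1$; the reverse bound $\rho^*(G) \le 1$ follows since a subgraph of $G$ of density $> 1$ would subdivide to a subgraph of $S(G)$ of density $> 1$ by \eqref{eq:subdivision_inequality_density_computation}.

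The key steps, in order: first record the elementary identity $\rho(S(G)) = \frac{2m_G}{n_G+m_G}$ and solve $\frac{2m}{n+m} = 1$ to get $m = n$; second, dispatch the degenerate cases via the density conventions; third (if proving the $\rho^*$ version) assemble the cycle argument for the forward direction and invoke \cref{lem:subdivision_of_subgraph}, \cref{lem:densest_subgraph_is_subdivision_rho_not_1}, and \cref{prop:densest_subgraph_is_subdivision_rho_1} for the converse. I do not expect any real obstacle here: the whole-graph statement is a two-line computation, and the $\rho^*$ strengthening is just a matter of correctly quoting the three lemmas already established in this subsection. The only thing to be careful about is monotonicity of the map $x \mapsto \frac{2x}{x+1}$ (equivalently, using \cref{lem:fractions}) when comparing $\frac{2m_{G'}}{n_{G'}+m_{G'}}$ across subgraphs, so that the inequalities go the right way — but that is exactly what \cref{lem:fractions} is designed to handle.
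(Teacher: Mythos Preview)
Your first paragraph is a complete and correct proof of the stated lemma, and it is genuinely different from the paper's argument. The paper proceeds structurally: it reduces to the connected case, observes that a connected graph of density exactly~$1$ is a tree plus one extra edge (equivalently, contains a unique cycle), and then argues that subdivision preserves connectivity and the number of cycles, so $S(G)$ is again a tree plus one edge. Your approach bypasses all of this: you simply invoke \cref{obs:density_of_subdivision} to write $\rho(S(G)) = 2m_G/(n_G+m_G)$ and solve $2m_G = n_G + m_G$. This is shorter, avoids the connectivity reduction entirely, and makes the equivalence a one-line computation. The paper's structural viewpoint has the minor advantage of giving a concrete picture of what density-$1$ graphs look like, but for the lemma as stated your route is cleaner.

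Your second and third paragraphs are unnecessary. The lemma is stated for $\rho$, not $\rho^*$, and that is exactly how it is used downstream (in the proof of \cref{thm:subdivision_preserves_density} it is applied to specific subgraphs $H$ and $G'$, not to the maximum density over all subgraphs). So you do not need to prove a $\rho^*$ version, and the somewhat meandering sketch you give for it---with the mid-sentence ``wait, more carefully''---should simply be dropped. If you wanted the $\rho^*$ statement, the cleanest route would again be algebraic: the map $x \mapsto 2x/(x+1)$ is strictly increasing on $[0,\infty)$ and fixes~$1$, so combined with \cref{thm:subdivision_preserves_density} one gets $\rho^*(S(G)) = 2\rho^*(G)/(\rho^*(G)+1)$, from which the $\rho^*$ equivalence is immediate. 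But again, this is not what the lemma asks for.
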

\begin{proof}
	Assume without loss of generality that $G$ (respectively, $S(G)$) is connected, otherwise, treat each connected component separately.
	
	``$\Rightarrow:$'' Since $G$ is a graph of density $\rho(G)=1$,  $G$ must be a tree with one additional edge. In other words, $G$ is connected and contains a single cycle. Subdividing $G$ preserves the number of cycles in $G$ and the connection between any two vertices in $G$.%
    Thus $S(G)$ is a tree with one additional edge, and consequently, it has density $\rho(S(G))=1$. 

	``$\Leftarrow:$'' Let $G$ be a graph whose subdivision $S(G)$ has density $\rho(S(G))=1$. Then, $S(G)$ is a tree with one additional edge (i.e. it contains a single cycle). Since subdivision is cycle preserving, does not disconnect the graph and preserves the degree of vertices in $G$, $G$ must be connected and contain a single cycle. Thus, $G$ must be a tree plus one edge, which means $\rho(G)=1$.
\end{proof}

Armed with the previous lemmas we are ready to prove the following result:

\begin{lemma}\label{thm:subdivision_preserves_density}
    Let $G$ be a graph and $H\subseteq G$ a densest subgraph of $G$. Then $S(H)$ is a densest subgraph of $S(G)$. In other words, $\rho^*(S(G))=\rho(S(H))$.
\end{lemma}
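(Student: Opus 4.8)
The plan is to prove the nontrivial inequality $\rho^*(S(G)) \le \rho(S(H))$; the reverse, $\rho(S(H)) \le \rho^*(S(G))$, is immediate because $H \subseteq G$ implies $S(H) \subseteq S(G)$. If $G$ has no edges the statement is trivial, so assume $G$ has an edge. Fix a densest subgraph $H'$ of $S(G)$, which we may take to be connected by \cref{lem:nice-props} (\ref{niceprop-densest-subgraph-connected}.), and split the argument according to the value of $\rho(H') = \rho^*(S(G))$.

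If $\rho(H') \ne 1$, then by \cref{lem:densest_subgraph_is_subdivision_rho_not_1} we have $H' = S(G')$ for some subgraph $G' \subseteq G$. Applying \cref{lem:subdivision_of_subgraph} with the densest subgraph $H$ playing the role of ``$H$'' and this $G'$ playing the role of ``$G'$'' yields $\rho(S(H)) \ge \rho(S(G')) = \rho(H') = \rho^*(S(G))$, as desired.

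If $\rho(H') = 1$, so $\rho^*(S(G)) = 1$, then \cref{prop:densest_subgraph_is_subdivision_rho_1} furnishes $H'' \subseteq H'$ with $\rho(H'') = 1$ and $H'' = S(G')$ for some $G' \subseteq G$; passing to a connected component (whose subdivision is still a density-$1$ subgraph of $S(G)$ by \cref{lem:nice-props} (\ref{niceprop-densest-subgraph-connected}.)) we may assume $G'$ is connected, and \cref{lem:density_1_THEN_subdivision_density_1} then gives $\rho(G') = 1$. Hence $\rho(H) = \rho^*(G) \ge \rho(G') = 1$, i.e. $m_H \ge n_H$. By \cref{obs:density_of_subdivision}, $\rho(S(H)) = 2m_H/(n_H + m_H)$, and $m_H \ge n_H$ forces $2m_H \ge n_H + m_H$, so $\rho(S(H)) \ge 1$; combined with the trivial bound $\rho(S(H)) \le \rho^*(S(G)) = 1$ this gives $\rho(S(H)) = 1 = \rho^*(S(G))$.

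I expect the second case to be the main obstacle: when $\rho^*(S(G)) = 1$ a densest subgraph of $S(G)$ need not be a subdivision of a subgraph of $G$ (cf.\ \cref{fig:densest_not_subdivision}), so \cref{lem:densest_subgraph_is_subdivision_rho_not_1} is unavailable and one must instead stitch together \cref{prop:densest_subgraph_is_subdivision_rho_1,lem:density_1_THEN_subdivision_density_1} with the elementary observation that subdividing preserves the sign of $\rho(H)-1$. Throughout, disconnected densest subgraphs are handled by passing to a single component via \cref{lem:nice-props} (\ref{niceprop-densest-subgraph-connected}.).
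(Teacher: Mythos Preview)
Your proof is correct and follows essentially the same approach as the paper: the same case split on whether $\rho^*(S(G))=1$, invoking \cref{lem:densest_subgraph_is_subdivision_rho_not_1} and \cref{lem:subdivision_of_subgraph} in the first case and \cref{prop:densest_subgraph_is_subdivision_rho_1} with \cref{lem:density_1_THEN_subdivision_density_1} in the second. Your handling of the $\rho^*(S(G))=1$ case is in fact slightly more explicit than the paper's---you argue only $\rho(H)\ge 1$ and compute $\rho(S(H))\ge 1$ directly via \cref{obs:density_of_subdivision}, whereas the paper asserts $\rho(H)=1$ without spelling out why it cannot exceed~$1$.
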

\begin{proof}
	If $\rho^*(S(G))\neq 1$, by \cref{lem:densest_subgraph_is_subdivision_rho_not_1} the densest subgraph of $S(G)$ is a subivision of a subgraph of $G$ and by \cref{lem:subdivision_of_subgraph} any subgraph in~$S(G)$ that is a subdivision of a subgraph of~$G$ is at most as dense as~$S(H)$. So $S(H)$ must be a densest subgraph of $S(G)$. 

	If $\rho^*(S(G)) = 1$, by \cref{prop:densest_subgraph_is_subdivision_rho_1}, there exists a subdivision of $G$ which is a densest subgraph of $G$. Then, by \cref{lem:density_1_THEN_subdivision_density_1} and \cref{obs:density_of_subdivision} the densest subgraph of $G$ must have density~$1$. In other words, $H$ has density~$1$. Finally, by \cref{lem:density_1_THEN_subdivision_density_1}, $S(H)$ has density~$1$, which means, $S(H)$ is a densest subgraph of $S(G)$.
\end{proof}

Now we prove the main result of this section.

\begin{proof}[Proof of \cref{thm:hardness-densest-edge-bipartite}]
    We reduce from \densestEdgeDeletion{} on planar graphs such that $(m-k)/n> 1$ which is \NP-complete by \cref{cor:NP-completeness_G-F_bigger_1}.
    Given an instance $(G, k,\tau_\rho = p/q)$ of \densestEdgeDeletion{} with $p,q$ positive integers and $G$ such that $(m_G-k)/n_G>1$, create an instance $(G',k',\tau_\rho')$ of \textsc{\densestEdgeDeletion{}} follows: Set $G'$ to the subdivision of $G$, set $k'=k$, and set $\tau_\rho'=\frac{2p}{p+q}$. 
	Note that $G'$ is a planar bipartite graph and $(m_G-k)/n_G\geq 1$ implies $(m_{G'}-k)/n_{G'} > 1$ as $m_{G'} = 2m_G$ and $n_{G'} = m_G + n_G$.
	
	We claim that $(G, k,\tau_\rho)$ is a yes-instance of \densestEdgeDeletion{} if and only if $(G',k',\tau_\rho')$ is a yes-instance of \textsc{\densestEdgeDeletion{}}.
	
``$\Rightarrow:$'' Suppose $(G, k,\tau_\rho)$ is a yes-instance with solution set $F$. Let $H$ be a densest subgraph in $G-F$. By \cref{thm:subdivision_preserves_density}, $S(H)$ is a densest subgraph of $S(G-F)$. By \cref{obs:density_of_subdivision} we know that $\rho(S(H)) = 2m_H/(n_H + m_H)$. 
 Since $\frac{m_H}{n_H}  \leq \frac{p}{q}$ then $\rho(S(H)) = \frac{2 m_H}{n_H + m_H} \leq \frac{2p}{p + q}$.

	We now show that we can find a set $F'\subseteq E'$ of size $k$ so that a densest subgraph of $S(G-F)$ is also a densest subgraph of $G'-F'$.

	Recall that an edge $e$ in $G$ has an associated vertex $y$ in $S(G)$.

	We first build the set $F'$ of edges to remove from $G'=S(G)$ as follows: For each edge $e \in F$, arbitrarily pick one edge $e'$ of the two edges incident to vertex $y$ in $G'$, and add $e'$ to $F'$. Notice that vertex $y$ will be of degree $1$ in $G'-F'$, and that $|F'|=|F|=k$. 
	
	Moreover, we can obtain $S(G-F)$ by removing all vertices $y$ in $G'-F'$ associated with an edge $e\in F$. In other words: $S(G-F) = (G'-F')-\{y|y \text{ is a vertex associated to an edge }e\in F\}$. Since any such vertex $y$ is of degree 1, any subgraph of $G'-F'$ that includes $y$ is not a subdivision of a subgraph of $G$. This means that any subgraph of $G'-F'$ that is the subdivision of a subgraph of $G$, must be a subgraph of $S(G-F)$. We now use the fact that $G'$ is such that $\rho(G'-F')\geq 1$ (thus $\rho^*(G'-F')\geq 1$), for any $F'$ of size at most $k$, and we distinguish two cases:

	\emph{Case 1:} If $\rho^*(G'-F') > 1$ then no vertex of degree 1 can be part of a densest subgraph. So the densest subgraph of $G'-F'$ must be a subgraph of $S(G-F)$. Thus, $S(H)$ is a densest subgraph of $G'-F'$

	\emph{Case 2:} If $\rho^*(G'-F') = 1$ then by \cref{prop:densest_subgraph_is_subdivision_rho_1} there is a subdivision of a subgraph of~$G-F$ with density~$1$. By \cref{thm:subdivision_preserves_density} $S(H)$ is a densest subgraph of~$G'-F'$.

	Thus $S(H)$ is always a densest subgraph of~$G'-F'$. Since $|F'|=k=k'$ and $\rho(S(H))\leq \tau_\rho'$, we know that $(G',k',\tau_\rho')$ is a yes-instance for \textsc{\densestEdgeDeletion{}}.

	``$\Leftarrow:$'' If $(G', k',\tau_\rho')$ is a yes-instance with solution set $F'$, we can construct a set $F$ as follows: for each edge $e'$ in $F'$ put edge $e$ in $F$ if $e'$ is incident on vertex $y$ associated to edge $e$. Note, that we could have two edges $e_1',e_2'$ in $F'$ incident on the same vertex $y$ associated with edge $e$ in $G$, however,  $e$ would only appear once in $F$. Thus $|F|\leq|F'|=k'=k$.

	We will now prove by contradiction that $\rho^*(G-F)\leq  \frac{p}{q}$.
	Suppose that there is a subgraph $H$ of $G-F$ with density $\rho(H) > \frac{p}{q}$. Since $S(G-F)$ is a subgraph of $G'-F'$, we know that $S(H)$ is also a subgraph of $G'-F'$.  Since $\frac{m_H}{n_H} > \frac{p}{q}$ we have $ \rho(S(H))=\frac{2m_H}{n_H+m_H} > \frac{2p}{p+q}$, a contradiction, that is $\rho^*(G-F)\leq  \frac{p}{q}$.

\end{proof}
}

\begin{theorem}
	\label{thm:edgeDeletion-NP-hard-on-Split}
    \densestEdgeDeletion{} is \NP-complete on split graphs with~$\tau_\rho=3/4$.
\end{theorem}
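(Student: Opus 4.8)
The plan is to first pin down exactly what the target $\tau_\rho = 3/4$ forces on the surviving graph. Any connected graph containing a cycle has density at least $1 > 3/4$, a tree on $b$ vertices has density $(b-1)/b$ which is at most $3/4$ precisely when $b \le 4$, and a forest on $N$ vertices with $c$ components has density $1-c/N$, which is at most $3/4$ exactly when $c \ge N/4$ --- automatic once every component has at most four vertices. Combined with \cref{lem:nice-props} (\ref{niceprop-trees-density}.), this yields the clean characterization: for every graph $J$, $\rho^*(J) \le 3/4$ if and only if $J$ is a forest in which every connected component has at most four vertices. Thus the $\tau_\rho = 3/4$ case of \densestEdgeDeletion{} asks: can we delete at most $k$ edges so that what remains is a disjoint union of trees, each on at most four vertices? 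Membership in \NP is immediate since $\rho^*(G-F)$ is polynomial-time computable.

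For \NP-hardness I would reduce from \textsc{3-Dimensional Matching} (equivalently, from \textsc{Exact Cover by 3-Sets}). From an instance with element set $X \cup Y \cup Z$ of size $3q$ and triple set $M$, build a split graph $G'$ whose clique $C$ contains one vertex $c_t$ per triple $t \in M$ plus a constant number of dummy clique vertices (used only to fix divisibility), whose independent set $I$ contains one vertex $e_w$ per element $w$, and in which $c_t e_w \in E(G')$ exactly when $w \in t$. In addition I would hang a small private ``enforcer'' gadget off each triple-vertex, designed so that in any budget-meeting solution the component containing a triple-vertex that keeps a membership edge is forced to be precisely the claw $\{c_t, e_x, e_y, e_z\}$ on the three elements of $t$. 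The budget $k$ is then $m_{G'}$ minus the number of edges retained by the canonical solution described next, and the padding is chosen so that $|V(G')|$ and $|M| - q$ are both divisible by four.

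For the forward direction, from a perfect matching $M' \subseteq M$ one keeps, for each $t \in M'$, the three edges of the claw at $c_t$ (a $K_{1,3}$ of density exactly $3/4$), groups the remaining $|M|-q$ triple-vertices and the dummy vertices into quadruples inside the clique while keeping a $3$-edge spanning tree in each, and deletes all other edges; every component then has exactly four vertices, exactly $\frac{3}{4}|V(G')|$ edges survive, and $|F| = k$. For the converse, if $|F| \le k$ then by the characterization the remainder is a forest of $\le 4$-vertex components keeping at least $\frac{3}{4}|V(G')|$ edges; since a forest on $N$ vertices with $c$ components has $N-c$ edges, this forces $c = N/4$, so every component has exactly four vertices and is a tree. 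The enforcer gadgets then force each element-vertex to lie in a clean claw $\{c_t, e_x, e_y, e_z\}$ with its element among $\{x,y,z\}$, and the triples whose claws appear form a perfect matching.

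The step I expect to be the main obstacle is the design and analysis of the enforcer gadget. Without it, a four-vertex tree component could just as well be a ``mixed'' piece --- say two triple-vertices joined through their clique edge together with two element-vertices --- which is exactly as edge-efficient but corresponds to triples that are only partially used, so a NO-instance of \textsc{3-Dimensional Matching} could still admit a budget-meeting solution and the reduction would fail. The real work is therefore to exhibit a constant-size enforcer that (i) keeps $G'$ split, (ii) does not disturb the $\tau_\rho = 3/4$ characterization, and (iii) makes every mixed four-vertex component strictly costlier, in the global edge count, than the claw-plus-quadruple pattern, so that a solution meeting the tight budget $k$ genuinely certifies a perfect matching. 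Once that gadget is in place, the divisibility padding and the precise value of $k$ are routine bookkeeping.
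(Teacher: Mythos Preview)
Your framework is the paper's framework: the $\tau_\rho = 3/4$ characterization via \cref{lem:nice-props}, the reduction from \textsc{Exact Cover by 3-Sets}, the split graph with triple-vertices in the clique and element-vertices independent, and the tight edge budget forcing every component of $G-F$ to be a tree on exactly four vertices. You also correctly isolate the real difficulty --- without further structure, a $P_4$ of the form (element, triple, triple, element) is exactly as edge-efficient as a claw, so a no-instance could still admit a budget-meeting solution.

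The gap is that you stop precisely where the actual idea is needed: you promise an enforcer on each triple-vertex but never exhibit one, and in fact the paper's gadget is not attached to triple-vertices at all. The paper adds not a constant number but $t-q$ dummy vertices $u_1,\dots,u_{t-q}$ to the clique (where $t$ is the number of triples and $3q$ the number of elements), and to \emph{each dummy} $u_i$ hangs two private degree-$1$ vertices $y_i,z_i$ in the independent set. Since $y_i$ and $z_i$ have $u_i$ as their only neighbour, the four-vertex tree containing $u_i$ is forced to be the claw $\{u_i,y_i,z_i,w_i\}$ with $w_i$ a clique vertex; hence exactly $t-q$ of the triple-vertices are absorbed by dummies. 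The remaining $q$ triple-vertices must then cover all $3q$ element-vertices, and a one-line counting argument (each triple-vertex has only three element-neighbours, so $q-2$ of them cannot reach $3q-2$ elements) shows no two of them can share a component. Mixed components are thus ruled out \emph{globally by counting}, not locally by making them costlier, and your ``constant number of dummy clique vertices used only to fix divisibility'' is exactly where the plan goes astray: the $t-q$ dummies with their pendants are doing the essential work.
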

\appendixproof{thm:edgeDeletion-NP-hard-on-Split}
{
\begin{proof}
	For the \NP-hardness we provide a polynomial-time reduction from \textsc{Exact Cover by 3 Sets}, also known as \textsc{X3C}, which is \NP-complete~\cite{garey:1979:computers_and_intractability}. 
 	First, recall \textsc{X3C}.
	\begin{problem}
		\problemtitle{\textsc{Exact Cover by 3 Sets} (X3C)}
		\probleminput{A set $X$ with $|X|=3q$ and a collection $C$ of $3$-element subsets of $X$.}
		\problemquestion{Is there a subset $C'$ of $C$ where every element of $X$ occurs in exactly one member of $C'$?  (Such a $C'$ is called an ``exact cover'' of $X$).}
	\end{problem}

	The reduction, illustrated in \cref{fig:NP-complete-edge-removal-split-graph-example}, is as follows: 
	Take an instance of $(X,C)$ of \textsc{X3C} with $X=\{x_1,x_2,\ldots, x_{3q}\}$ and $C=\{C_1,C_2,\ldots,C_t\}$.
	Create a vertex for each element in $X$ and a vertex for each $3$-element subset in $C$. Add an edge $\{x_i,C_j\}$ if $x_i$ is an element of $C_j$ in $(X,C)$. Next, create a set of  dummy vertices~$U=\{u_1,u_2,\ldots,u_{t-q}\}$ of size $t-q$ and add all edges between vertices in~$C\cup U$, making~$C \cup U$ a clique. 
	Then, for each dummy  vertex $u_i \in U$, create two degree~$1$-vertices $y_i,z_i$ whose incident edge is $\{u_i,y_i\}$ and $\{u_i,z_i\}$, respectively. Denote the set of all $y_i$ and $z_i$ vertices by $D$. 
	Call the graph obtained through this procedure $G$. Notice that the vertex set of $G$ is $V(G) = X\cup C\cup U \cup D$ of size $n=3q+t+3(t-q)= 4t$. Moreover, since $C \cup U$ is a clique and $X\cup D$ is an independent set, $G$ is a split graph.

	We claim that the instance $(X,C)$ of \textsc{X3C} is a yes-instance if and only if $(G,k=m-3t, \tau_\rho=\frac{3}{4})$ is a yes-instance of \densestEdgeDeletion{}.

	\begin{figure}
		\centering
			\begin{tikzpicture}[scale=0.80,node distance=8mm]%
			\def\xcoor{2}
			\def\c{4}
			\begin{scope}[xshift=-2cm,start chain=going right]
				\foreach \i in {1,2,3}{
					\node[on chain,circle,draw, fill, inner sep=2pt] (x-\i)[label=below:{$x_\i$}] {};
				}
				\node[on chain, circle, inner sep=2pt] (x-dots) {\ldots};
				\node[on chain, circle, draw,fill,inner sep=2pt, label=below:{$x_{3q}$}] (x-n) {};
			\end{scope}
			\begin{scope}[yshift=3cm,start chain=going right]
				\foreach \i in {1,2}{
					\node[on chain, circle,draw,fill,inner sep=2pt,label=above:{$C_\i$}] (c-\i){};
				}
				\node[on chain, circle,inner sep=2pt] (c-dots) {\ldots};
				\node[on chain, circle,draw,fill,inner sep=2pt, label=above:{$C_t$}] (c-m) {};
			\end{scope}
			\begin{scope}[yshift=3cm,xshift=7cm,start chain=going right]
				\node[on chain, circle,draw,fill,inner sep=2pt,label=above:{$u_1$}](u-1) {};
				\node[on chain, circle,draw,fill,inner sep=2pt,label=above:{$u_2$}](u-2) {};
				
				\node[on chain, circle,inner sep=2pt] (u-dots) {\ldots};
				\node[on chain, circle,draw,fill,inner sep=2pt, label=above:{$u_{t-q}$}] (u-m) {};
			\end{scope}
			\begin{scope}[xshift=5.5cm,start chain=going right]
				\node[on chain, circle,draw,fill,inner sep=2pt,label=below:{$y_1$}](y-1) {};
				\node[on chain, circle,draw,fill,inner sep=2pt,label=below:{$z_1$}](z-1) {};
				
				\node[on chain, circle,draw,fill,inner sep=2pt,label=below:{$y_2$}](y-2) {};
				\node[on chain, circle,draw,fill,inner sep=2pt,label=below:{$z_2$}](z-2) {};
				
				\node[on chain, circle,inner sep=2pt] (y-dots) {\ldots};
				
				\node[on chain, circle,draw,fill,inner sep=2pt,label=below:{$y_{t-q}$}](y-m) {};
				\node[on chain, circle,draw,fill,inner sep=2pt,label=below:{$z_{t-q}$}](z-m) {};
			\end{scope}
			\draw[thick](x-1)--(c-1);
			\draw[thick](x-2)--(c-1);
			\draw[thick](x-3)--(c-1);
			
			\draw[thick](x-2)--(c-2);
			\draw[thick](x-n)--(c-m);
			
			\draw[thick](y-1)--(u-1);
			\draw[thick](z-1)--(u-1);
			
			\draw[thick](y-2)--(u-2);
			\draw[thick](z-2)--(u-2);
			
			\draw[thick](y-m)--(u-m);
			\draw[thick](z-m)--(u-m);

			\draw[color=gray](x-2)--(c-dots);
			\draw[color = gray](x-2) edge[bend left=16] (c-dots);
			\draw[color = gray](x-2) edge[bend right=16] (c-dots);
			
			\draw[color=gray](x-3)--(c-dots);
			
			\draw[color=gray](x-dots)--(c-2);
			\draw[color = gray](x-dots) edge[bend left=16] (c-2);
			
			\draw[color = gray](x-dots)--(c-m);
			\draw[color = gray](x-dots) edge[bend left=16] (c-m);
			
			\draw[color=gray](x-n)--(c-dots);
			\draw[color = gray](x-n) edge[bend left=16] (c-dots);

			\node[inner sep = 19pt, very thick, italyGreen, draw,rounded corners, dashed, fit=(c-1) (c-2) (c-m),label=above:$C$](C) {};
			\node[inner sep = 12pt, very thick, italyRed, draw,rounded corners, dashed, fit=(x-1) (x-2) (x-n),label=below:$X$] (X){};
			\node[inner sep = 12pt, very thick, black!40, draw,rounded corners, dashed, fit=(y-1) (y-2) (y-m) (z-m),label=below:$D$] (D){};
			\node[inner sep = 12pt, very thick, black!40, draw,rounded corners, dashed, fit=(u-1) (u-2) (u-dots) (u-m),label=above:$U$] (U){};
			\node[inner sep = 16pt, thick,draw,rounded corners, fit=(C)(U)(c-1) (c-2) (c-m) (u-1) (u-2) (u-m),label=above:Clique] {};
			
			\end{tikzpicture}
		\caption{
				A schematic illustration of the reduction in \cref{thm:edgeDeletion-NP-hard-on-Split}. 
				The \textsc{X3C}  instance is encoded in the displayed graph by the vertices of $X$ (red) and $C$ (green). The edges between~$X$ and~$C$ encode which elements of $X$ belonging to the sets in $C$.
				The other vertices are dummy vertices to limit the number of vertices in~$C$ that can be connected to vertices in $X$.
			}
			\label{fig:NP-complete-edge-removal-split-graph-example}
		\end{figure}
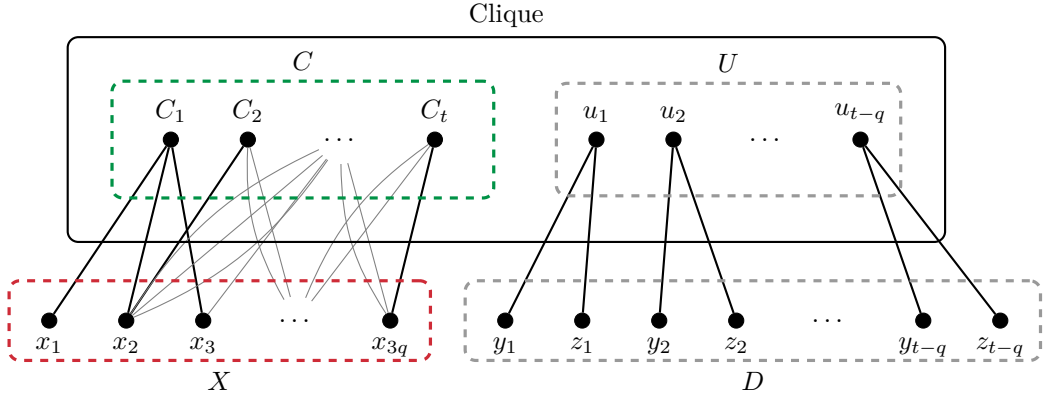

	``$\Rightarrow:$'' 
	Let $C'$ be an exact cover of $(X,C)$. We construct the following set of edges $B$: %
	For each $C_j \in C'$ include in $B$ the three edges that join $C_j$ to vertices in $X$, there are $3|C'|$ such edges. 
	For each $C_j \notin C'$ pick a distinct $u_i\in U$ and include in $B$  edges $\{u_i,C_j\}$, $\{y_i,u_i\}$ and $\{z_i,u_i\}$, there are $3(t-|C'|)$ such edges. 
	Thus, the set $B$ has size $3|C'|+3(t-|C'|)=3t$.
	Next, we define $F$ as the set of all edges in $G$ except the set $B$. The size of $F$ is of size $m-3t = k$. 
	Moreover, graph $G-F$,  that contains edges of $B$, is a forest of $K_{1,3}$. 
	Thus, $\rho^*(G-F )=\frac{3}{4}=\tau_\rho$.
	
	``$\Leftarrow:$'' Consider an  edge-set $F$, solution for the instance
	  $(G,k=m-3t,\tau_\rho=\frac{3}{4})$. 
	Since $G-F$ has $4t$ vertices and $3t$ edges and $\rho^*(G-F )\leq\frac{3}{4}$, it follows from \cref{lem:nice-props} (\ref{niceprop-trees-density}.) that $G-F$ is a forest of trees on 4 vertices, that is a perfect $\{K_{1,3},P_4\}$-packing of $G$. A \emph{perfect $\{K_{1,3},P_4\}$-packing of $G$} is a partition of $V(G)$ into sets $V_1,V_2,\ldots,V_t$ such that for all $i\in[t]$ the graph $G[V_i]$ is isomorphic to either $K_{1,3}$ or $P_4$.

	We prove in the following that any perfect $\{K_{1,3},P_4\}$-packing of $G$ must consist \emph{entirely} of $K_{1,3}$, and every $K_{1,3}$ in the packing has either a central vertex $u_i$ and leafs $y_i,z_i,C_j$ for some $i,j$, or it has central vertex $C_j$ and leaves $x_i,x_\ell, x_p$ for some $j,i,\ell,p$:
	Consider a perfect $\{K_{1,3},P_4\}$-packing of $G$. 
	Now consider a pair $y_i,z_i$ of vertices in $D$ which have $u_i$ as common neighbor: $z_i,y_i$ and $u_i$ must belong to the same connected component. 
	The fourth vertex in that connected component, call it $w_i$, must have $u_i$ as neighbor. 
	Thus, $u_i$ has degree $3$ in the packing, which implies $u_i,y_i,z_i, w_i$ form a $K_{1,3}$ with $u_i$ as central vertex. 
	This is true for every $i=1,2,\ldots, t-q$. 
	By construction, there are as many $y_i,z_i$ pairs as there are vertices $u_i$, so the vertices $z_i$ cannot be elements of $U$. Thus, every vertex $z$ is a distinct element of $C$.
	Thus, there are $q$ vertices in $C$ which are not in a $K_{1,3}$ with some vertex $u_i\in U$. 
	Denote by $R$ the set of these $q$ vertices. 
	There must be a perfect $\{K_{1,3},P_4\}$-packing of the vertices in $R\cup X$.
	We show by contradiction that no two vertices in $R$ can be packed together: Suppose there are two remaining vertices $C_j,C_\ell$ that are in the same connected component in the packing. 
	Then, there are at most $q - 2$ vertices in $R$ and at least $3q-2$ vertices in $X$ for which there must be a perfect $\{K_{1,3},P_4\}$-packing. 
	Notice that each vertex from $X$ must be in the same connected component as some vertex from $R$ because $X$ is an independent set. 
	But each vertex in $R$ has at most $3$ neighbors in $X$. 
	Thus, the at most $q-2$ vertices from $R$ can be packed with at most $3q-6$ vertices in $X$. 
	But there are at least $3q-2$ vertices in $X$, so there are at least $4$ vertices in $X$ left over after packing all the other vertices (and they cannot be packed together). Which is a contradiction.
	Since no two vertices of $R$ can be packed together, and $X$ is an independent set, then any perfect $\{K_{1,3},P_4\}$-packing of the vertices in $X\cup R$ must be a perfect $K_{1,3}$-packing. Moreover each $K_{1,3}$ in this $K_{1,3}$-packing has a single vertex form $C$ and three vertices from $X$.
	In particular, this is true for $G-F$. Thus the set $C'$ consisting of all $C_j \in C$ which are in the same connected component as some vertex of $X$ in $G-F$ is an exact three cover for $(X,C)$.
\end{proof}
}

\subsection{Parameterized Complexity Results}\label{sec:edge-del-parameterized}
\subparagraph*{FPT wrt.\ Vertex Cover Number.}
The vertex cover number of a graph denotes the size of a smallest vertex cover, i.\,e., the size of a set of vertices whose removal results in an edge-less graph.
Although this parameter is relatively large, we still need to rely on integer linear programming in our next algorithm.

\begin{theorem}\label{thm:edge-deletion-FPT-vertex-cover-number}
	\densestEdgeDeletion{} can be solved in~$2^{O(\ell 2^{2\ell})}+O(m+n)$ time where~$\ell$ is the vertex cover number.
\end{theorem}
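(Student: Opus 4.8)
The plan is to model the problem as an integer linear program (ILP) whose number of variables is bounded by a function of~$\ell$ and then solve it with Lenstra's algorithm for integer programming in fixed dimension. First I would compute a minimum vertex cover~$S$ with $|S|=\ell$ (e.g.\ via the $O(n+m)$-time Buss kernelization followed by brute force on the kernel, altogether $2^{O(\ell)}+O(n+m)$ time) and partition the independent set $I:=V(G)\setminus S$ into \emph{types}: two vertices $u,v\in I$ share a type $t\subseteq S$ if $N_G(u)\cap S=N_G(v)\cap S=t$. There are at most $2^{\ell}$ types; let $n_t$ be the number of $I$-vertices of type~$t$. Since $I$ is independent, an edge-deletion set $F$ decomposes into $F_S:=F\cap E(G[S])$ and a set of $S$--$I$ edges; deleting the latter turns each type-$t$ vertex into one whose surviving neighbourhood is some $t'\subseteq t$. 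I would record a solution by the data $(z,y)$, where $z\colon E(G[S])\to\{0,1\}$ indicates which edges of $G[S]$ survive and, for each pair $t'\subseteq t\subseteq S$, $y_{t,t'}\in\ZZ_{\ge 0}$ counts the type-$t$ vertices whose surviving neighbourhood is exactly~$t'$. It is routine that every $F$ yields such data with $\sum_{t'\subseteq t}y_{t,t'}=n_t$ and $|F|=\sum_{t'\subseteq t}|t\setminus t'|\,y_{t,t'}+|E(G[S])|-\sum_e z_e$, that every such data is realised by some $F$, and that both $|F|$ and $\rho^*(G-F)$ depend on $(z,y)$ alone.

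The heart of the argument is to express $\rho^*(G-F)\le\tau_\rho$ linearly. Using that a densest subgraph may be taken induced, a candidate vertex set $W$ is determined by $S':=W\cap S$ together with the choice of $I$-vertices in~$W$, and a surviving type-$t'$ vertex placed into $W$ contributes $|t'\cap S'|$ edges and one vertex to $G[W]$. Hence, for fixed $S'$, the quantity $m_{G[W]}-\tau_\rho\,n_{G[W]}$ is maximised by including exactly the $I$-vertices with $|t'\cap S'|>\tau_\rho$ (those with equality are neutral), and its maximum equals $\sum_{e\in E(G[S'])}z_e-\tau_\rho|S'|+\sum_{t'\subseteq S}\max\!\bigl(|t'\cap S'|-\tau_\rho,0\bigr)\sum_{t\supseteq t'}y_{t,t'}$. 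Therefore $\rho^*(G-F)\le\tau_\rho$ is equivalent to the $2^{\ell}$ inequalities
\[
\sum_{e\in E(G[S'])}z_e\;+\;\sum_{t'\subseteq S}\max\!\bigl(|t'\cap S'|-\tau_\rho,\,0\bigr)\sum_{t\supseteq t'}y_{t,t'}\;\le\;\tau_\rho\,|S'|\qquad\text{for all }S'\subseteq S,
\]
in which every coefficient is a \emph{constant} once $S'$ is fixed. Writing $\tau_\rho=p/q$ (by \cref{obs:min_distance_between_densities} we may assume $p,q\le n^2$) and clearing denominators makes all coefficients integral and of magnitude $\mathrm{poly}(n)$.

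Putting it together, the ILP has variables $z,y$ subject to the type-count equalities $\sum_{t'\subseteq t}y_{t,t'}=n_t$, the budget inequality $\sum_{t'\subseteq t}|t\setminus t'|\,y_{t,t'}+|E(G[S])|-\sum_e z_e\le k$, and the $2^{\ell}$ density inequalities above. The number of variables is $p\le\binom{\ell}{2}+3^{\ell}\le 2^{2\ell}$, while the number of constraints and the bit size of the coefficients are $2^{O(\ell)}\cdot\mathrm{poly}(n)$, so Lenstra's algorithm decides feasibility in time $p^{O(p)}\cdot\mathrm{poly}(n)=2^{O(\ell 2^{2\ell})}\cdot\mathrm{poly}(n)$; since the polynomial factor can be absorbed into the leading term or into the additive $O(m+n)$ spent on preprocessing, we obtain the claimed $2^{O(\ell 2^{2\ell})}+O(m+n)$ bound.

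I expect the main obstacle to be the second paragraph: proving that the universally quantified density condition collapses to exactly one linear inequality per $S'\subseteq S$. This rests on the observation that, for fixed $S'$, the density-maximising way to extend $S'$ into $I$ is the greedy rule ``take a surviving type-$t'$ vertex iff $|t'\cap S'|>\tau_\rho$'', and one must argue carefully that the borderline vertices (those with $|t'\cap S'|=\tau_\rho$, hence $\tau_\rho$ necessarily an integer) are genuinely neutral, and that nothing is lost by restricting to induced subgraphs. The bookkeeping relating actual deletion sets $F$ to the data $(z,y)$ in both directions with matching cost is straightforward but should be spelled out to make the equivalence airtight.
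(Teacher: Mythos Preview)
Your approach is essentially the paper's: compute a minimum vertex cover, partition the independent set by neighbourhood type, encode edge deletions via integer variables $y_{t,t'}$ (the paper's $x_{i\to j}$) and $z_e$, and decide feasibility of the resulting ILP in $O(2^{2\ell})$ variables with Lenstra's algorithm. The one genuine difference is how you encode the density bound. The paper quantifies over all pairs $(C',\mathcal{I})$ with $C'\subseteq C$ and $\mathcal{I}\subseteq[2^\ell]$, producing $2^\ell\cdot 2^{2^\ell}$ linear inequalities, whereas you observe that for each fixed $S'\subseteq S$ the density-maximising extension into the independent set is the greedy one (take a surviving-type-$t'$ vertex iff $|t'\cap S'|>\tau_\rho$), which collapses the inner quantifier and leaves only $2^\ell$ inequalities. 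This is a nice refinement and makes the ILP exponentially smaller in the number of constraints, though it does not improve the stated bound since Lenstra is exponential only in the number of variables.

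One small gap in your running-time bookkeeping: you bound the ILP's bit size by $2^{O(\ell)}\cdot\mathrm{poly}(n)$ and then assert that the resulting $\mathrm{poly}(n)$ factor ``can be absorbed into the leading term or into the additive $O(m+n)$''. That absorption fails as written, since $2^{O(\ell 2^{2\ell})}\cdot n^c$ is not $2^{O(\ell 2^{2\ell})}+O(n)$ when $\ell$ is bounded and $n\to\infty$. The fix is to observe that every coefficient in the ILP (the $n_t$, $k$, and the cleared-denominator density constants with $p,q\le n^2$) has bit length $O(\log n)$, so the encoding length is $2^{O(\ell)}\cdot O(\log n)$; a standard case split on whether $\log n\le 2^{\ell 2^{2\ell}}$ then recovers the additive $O(m+n)$ bound. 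The paper handles this the same way (implicitly), so this is a cosmetic issue rather than a structural one.
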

\begin{proof}
    Consider an instance $(G,k,\tau_\rho)$ of \densestEdgeDeletion{}. 

	We provide an algorithm that first computes a minimum vertex cover~$C \subseteq V(G)$, $|C| = \ell$, in~$O(2^{\ell} + n + m)$ time~\cite{CFKL2015}.
	Then it computes edge deletions within~$G[C]$ and incident to~$S = V(G) \setminus C$ with an ILP, i.\,e., computes $F$. 
	To this end, we divide the vertices in $S$ into at most $2^\ell$ classes $I_1, \ldots, I_{2^{\ell}}$, where two vertices are in the same class if and only if they have the same neighbors.
	Hence, we can define for a class~$I_i$ the neighborhood~$N(I_i) = N(v)$ for~$v \in I_i$.
	We denote with~$|I_i|$ the number of vertices in the class~$I_i$. 
	The usefulness of these classes hinges on the fact that at least one densest subgraph $G'$ of $G$ is such that it either contains all the vertices of a class $I_i$ or none.
	This is a consequence of \cref{lem:fractions}: if removing (adding) a vertex of $I_i$ from a subgraph of $G$ increases its density, then removing (adding) any other vertices from $I_i$ must do the same, as they are twins and non-adjacent.
 
	We say a class~$I_j$ is \emph{obtainable} from a class~$I_i$ if~$N(I_j) \subseteq N(I_i)$, that is, by deleting some edges a vertex from~$I_i$ can get into class~$I_j$.
	Note that~$I_j$ is obtainable from~$I_j$.
	Denote with~$\ob(I_i)$ all classes obtainable from~$I_i$, formally, $\ob(I_i) = \{I_j \mid N(I_j) \subseteq N(I_i)\}$.
	Similarly, we denote with~$\ob^{-1}(I_j)$ all classes~$I_i$ so that~$I_j$ is obtainable from~$I_i$, formally, $\ob^{-1}(I_j) = \{I_i \mid N(I_j) \subseteq N(I_i)\}$.
	If~$I_j$ is obtainable from~$I_i$, then we set~$\cost(i \to j)= |N(I_i) \setminus N(I_j)|$ to be the number of edges that need to be deleted from a vertex~$v\in I_i$ to make it a vertex in~$I_j$.

	We now give our ILP. 
	To handle edges with one endpoint in the independent set~$S$ we do the following:
    For each pair of classes~$I_i,I_j$, we add a variable~$x_{i\to j}$ whose purpose is to denote how many vertices from class~$I_i$ will end up in class~$I_j$ by deleting edges.
	For convenience, we further have a variable~$y_j=\sum_{I_i \in \ob^{-1}(I_j)} x_{i\to j}$ denoting the total number of vertices that end up in a class~$I_j$ after deleting edges.
    To ensure correctness we require that no vertex from class~$I_i$ gets transformed into a vertex from a class not obtainable from~$I_i$ which we represent with the constraint~$\sum_{I_j \in \ob(I_i)} x_{i\to j}  = |I_i|$ for all~$ i\in [2^\ell]$.
    Moreover, both $x_{i\to j}$ and $y_j$ must be integers for all $i,j\in[2^\ell]$.

    To handle edges within $C$ we do the following:
    For each edge $e\in E(C)$ we create a binary variable $z_e$, where $z_e=1$ if $e$ survives and $0$ if it gets deleted.
	Again for convenience, for a subset~$C' \subseteq C$ we denote with~$m_{C'}=\sum_{e\in E(C')}z_e$ the number of edges within~$C'$ that survive.
    The ILP is as follows:

	\begin{align}
		\text{Minimize} && \sum_{j = 1}^{2^\ell}\sum_{I_i \in \ob(I_j)} \cost(i\to j) \cdot x_{i \to j} &+\sum_{e\in E(C)}(1-z_e)\\
		\text{such that}&& \sum_{I_j \in \ob(I_i)} x_{i \to j} & = |I_i| & \forall i\in [2^\ell] \\
						&& \sum_{I_i \in \ob^{-1}(I_j)} x_{i\to j} & = y_j & \forall j\in [2^\ell] \\
                        && \sum_{e\in E(C')}z_e & =m_{C'} & \forall C'\subseteq C\\
						&& m_{C'} + \sum_{i \in \mathcal{I}} y_i|N(I_i) \cap C'| & \le \tau_\rho (|C'| + \sum_{i \in \mathcal{I}} y_i)  & \forall C' \subseteq C, \forall \mathcal{I} \subseteq [2^\ell]\label{eq:densiy-check}\\
						&& x_{i\to j}, y_{i} & \in \{0,1,2,\ldots\} & \forall i,j\in [2^\ell]\\
                        && z_e & \in\{0,1\} & \forall e\in E(C)
	\end{align}
 
    To prove correctness it remains to show that the ILP admits a solution such that the objective value is at most $k$ 
    if and only if there exist an edge deletion set $F$ of size at most $k$ such that $\rho^*(G-F)\leq \tau_\rho$.

    Firstly, suppose there exists a feasible solution to the ILP that achieves an objective value of at most~$k$. 
    Because the vertices in~$I_i$ are indistinguishable, the set of variables $\{x_{i\to j} \mid j\in [2^\ell]\}$ uniquely determines the set of edges incident to~$I_i$ that need to go into~$F$ for each class~$I_i, i\in [2^\ell]$. 
    The other edges in $F$ can be directly read off the solution $E(C)\cap F = \{z_e \mid z_e=0\}$.
    A densest subgraph~$G'$ of~$G$ contains some vertices from~$C$ and from~$S$.
    If~$G'$ does not contain all vertices from some class~$I_i$, then either adding or removing all vertices from~$I_i$ will give a graph~$G''$ that is as least as dense as~$G'$ (cf.~\cref{lem:fractions}).
    Inequality~\eqref{eq:densiy-check} ensures that all such possible subgraphs~$G''$ have density at most~$\tau_\rho$.
    Therefore, this constraint guarantees that all subgraphs of $G-F$ have density at most $\tau_\rho$.
    In the objective function, $\sum_{e\in E(C)}(1-z_e)$ counts the number of edges that are deleted from $C$, whilst $\sum_{j = 1}^{2^\ell}\sum_{I_i \in \ob(I_j)} \cost(i\to j) \cdot x_{i \to j}$ counts the number of edges that are deleted from $E(C,S)$. Since the feasible solution achieves an objective value of at most $k$, we have that $|F|\leq k$.
    Thus, $F$ as built above is the desired edge deletion set of size at most $k$.

    Secondly, consider an edge subset $F\subseteq E(G)$ of size at most $k$ such that $\rho^*(G-F)\leq \tau_\rho$. We can assign values to $x_{i\to j}$ by reading the number of edges in $F$ that are incident to each vertex $v$ in class $I_i$. The first and second constraint will be satisfied by construction. The third constraint will be satisfied because $\rho^*(G-F)\leq \tau_\rho$. And since $|F|\leq k$ the minimum of the objective function will have value at most $k$.

    The ILP has at most $2^{2\ell}$ variables of type $x_{i\to j}$ %
    and at most $\binom{\ell}{2}$ of type $z_e$. The linear program can be expressed without variables of types $y_j$ and $m_{C'}$ simply by substituting their expressions into the other constraints. 
    This yields a total of $O(2^{2\ell+1})$ variables. In addition, we have $O(2^\ell \cdot 2^{2^\ell})$ constraints.
    Since an ILP instance $\mathcal{I}$ on $p$ variables can be solved in time $O(p^{2.5p+o(p)}\cdot |\mathcal{I}|)$ \cite{L1983, K1987,FT1987} we obtain a total (FPT) running time of $2^{O(\ell 2^{2\ell})}+O(m+n)$ for our algorithm. We note that one can achieve a running time of $\ell^{O(2^{2\ell})} + O(n+m)$ by using a randomized algorithm that solves ILPs on $p$ variables in time $\log(2p)^{O(p)}$\cite{RR2023}. 
\end{proof}

\subparagraph*{W[1]-Hardness wrt.\ feedback edge number and solution size.}

We next prove that \densestEdgeDeletion{} is W[1]-hard with respect to the combined parameter solution size and feedback edge number.
To this end, we use the construction of Enciso et al.~\cite{EFGK2009}, who reduced \textsc{Multicolored Clique} to \textsc{Equitable Connected Partition}.
These problems are defined as follows.

\begin{problem}
	\problemtitle{\textsc{Multicolored Clique}}
	\probleminput{An undirected graph~$G$ properly colored with~$\ell$ colors.}
	\problemquestion{Does~$G$ contain a clique on~$\ell$ vertices?}
\end{problem}

\begin{problem}
	\problemtitle{\textsc{Equitable Connected Partition}}
	\probleminput{An undirected graph~$G$ and a positive integer~$c$.}
	\problemquestion{Is there a partition of~$G$ into~$c$ equally sized connected components, that is, a partition of~$V(G)$ into~$V_1, \ldots, V_c$ so that each~$G[V_i]$ is connected and~$||V_i| - |V_j|| \le 1$ for each~$i,j \in [c]$?
	}
\end{problem}

Note that since~$G$ is properly colored any clique  in~$G$ is multicolored, that is, contains at most one vertex per color.

As a byproduct, we also obtain W[1]-hardness for $T_{h+1}$-\textsc{Free Edge Deletion} parameterized by the combined parameter feedback edge number and solution size. %
This confirms a conjecture by Enright and Meeks~\cite{EM2018}: $T_{h+1}$-\textsc{Free Edge Deletion} is indeed W[1]-hard with respect to treewidth.

\subparagraph*{Connection between problems.} 
Let us briefly discuss the connections between the three problems to see why the construction of Enciso et al.~\cite{EFGK2009} works for all three:
Consider a cycle on~$3n$ vertices. 
Requiring~$c = 3$ for \textsc{Equitable Connected Partition} guarantees one solution (up to symmetry): 
delete 3 edges so that 3 paths on~$n$ vertices remain.
Similarly, requiring~$k = 3$ and~$h = n$ for $T_{h+1}$-\textsc{Free Edge Deletion} guarantees the same solution.
The same applies for \densestEdgeDeletion{} with~$\tau_\rho = (n-1)/n$ and~$k = 3$.

The construction of Enciso et al.~\cite{EFGK2009} will enforce the following: 
Any connected component in a solution will be a tree. 
Hence, we can control its maximum size via the density~$\tau_\rho$ or directly via~$h$.
Combining the budget~$k$ with the size constraint ensures that the minimum number of vertices per component is equal to the maximum number of vertices per component.
Thus, for the constructed graph all three problems will ask essentially for the same solution.

\begin{theorem}\label{thm:edge-del-w-hard}
	\densestEdgeDeletion{} and $T_{h+1}$-\textsc{Free Edge Deletion} are W[1]-hard with respect to the combined parameter solution size and feedback edge number.
\end{theorem}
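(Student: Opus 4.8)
The plan is to reduce from \textsc{Multicolored Clique}, reusing the construction of Enciso et al.~\cite{EFGK2009} that turns a \textsc{Multicolored Clique} instance~$(H,\ell)$ into an \textsc{Equitable Connected Partition} instance~$(G,c)$. The first step is to isolate the three properties of that construction I actually need: (i)~both~$c$ and the feedback edge number~$f$ of~$G$ are bounded by a function of~$\ell$ only; (ii)~$|V(G)| = c\cdot s$ for an integer~$s$, and~$(H,\ell)$ is a yes-instance if and only if~$V(G)$ admits a partition into~$c$ connected parts, each of size exactly~$s$; and (iii)~every cycle of~$G$ has more than~$s$ vertices, equivalently every connected subgraph of~$G$ on at most~$s$ vertices is a tree. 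Property~(iii) holds because the connectivity gadgets of the construction are long paths and cycles; should the published version not literally satisfy it, a harmless subdivision of the gadget edges enforces it without affecting (i) or (ii). Throughout I assume~$G$ is connected, so that~$m_G = n_G - 1 + f$; if not, one applies the counting below componentwise.

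For both target problems I output the \emph{same} graph~$G$ and budget~$k := f + c - 1$, which is a function of~$\ell$; hence this is a parameterized reduction for the combined parameter ``solution size plus feedback edge number''. For $T_{h+1}$-\textsc{Free Edge Deletion} I additionally set~$h := s$, and for \densestEdgeDeletion{} I set~$\tau_\rho := (s-1)/s < 1$. The forward direction is identical in both cases: given a connected partition~$V_1,\dots,V_c$ with~$|V_i| = s$, fix a spanning tree~$T_i$ of each~$G[V_i]$ and delete~$F := E(G)\setminus\bigcup_i E(T_i)$; then~$G-F$ is the disjoint union of~$c$ trees on~$s$ vertices, so every component has at most~$h$ vertices and~$\rho^*(G-F) = (s-1)/s \le \tau_\rho$, while~$|F| = m_G - c(s-1) = m_G - n_G + c = f + c - 1 = k$.

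The backward direction is the counting step. Suppose~$F$ with~$|F|\le k$ makes every component of~$G-F$ have at most~$s$ vertices; in the $T_{h+1}$-\textsc{Free Edge Deletion} instance this is exactly $T_{h+1}$-freeness for~$h=s$, and in the \densestEdgeDeletion{} instance the condition~$\rho^*(G-F)\le(s-1)/s<1$ forces, by \cref{lem:nice-props}~(\ref{niceprop-trees-density}.), that~$G-F$ is a forest all of whose trees have at most~$s$ vertices, which in particular bounds component sizes by~$s$. Let~$c'$ be the number of components of~$G-F$; since~$G-F$ has~$n_G = cs$ vertices and each component has at most~$s$ of them, $c' \ge c$. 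By property~(iii) each component, having at most~$s$ vertices, is acyclic, so~$G-F$ is a forest with~$n_G - c'$ edges; hence~$|F| = m_G - (n_G - c') = f + c' - 1 \le k = f + c - 1$, forcing~$c' \le c$ and thus~$c' = c$. Therefore~$G-F$ consists of exactly~$c$ trees with a total of~$cs$ vertices, so each has exactly~$s$ vertices, and their vertex sets form an equitable connected partition of~$G$ into~$c$ parts. (For the density version property~(iii) is not even needed, since~$\tau_\rho<1$ already forbids cycles.)

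Putting it together, $(H,\ell)$ is a yes-instance if and only if~$(G,k,h=s)$ is a yes-instance of $T_{h+1}$-\textsc{Free Edge Deletion}, if and only if~$(G,k,\tau_\rho=(s-1)/s)$ is a yes-instance of \densestEdgeDeletion{}; both maps are polynomial-time parameterized reductions for solution size plus feedback edge number, so the W[1]-hardness of \textsc{Multicolored Clique} (with respect to the number of colors) yields the claim. The main obstacle is not the counting above, which is short and uniform across both problems, but checking that the Enciso et al.\ construction simultaneously delivers properties (i)--(iii): one must re-inspect their gadgets to confirm the feedback edge number is a function of~$\ell$, that solutions are (up to symmetry) partitions into equal-size connected pieces, and that cycles can be made long by a benign subdivision so that bounded-size components are necessarily trees.
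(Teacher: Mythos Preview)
Your plan is the same reduction the paper uses (Multicolored Clique through the Enciso et al.\ construction), and for \densestEdgeDeletion{} your counting argument is correct and in fact cleaner than the paper's: using $\tau_\rho<1$ to force a forest and then $|F|=m_G-(n_G-c')=f+c'-1\le k=f+c-1$ is exactly right. The paper instead argues structurally that the large $\alpha$-anchors must lie in different components, hence every choice gadget is cut, and since there are precisely $k$ gadgets each is cut exactly once; both routes yield $c'=c$.

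There is, however, a genuine gap in your $T_{h+1}$-\textsc{Free Edge Deletion} argument. Property~(iii) does \emph{not} hold in the Enciso et al.\ construction: for any two colours $i\neq j$ the meta-$4$-cycle $P^i_j\!-\!N^i_j\!-\!P^j_i\!-\!N^j_i\!-\!P^i_j$ (alternating an $A_E$-choice and an $[m{+}1]$-choice) gives an actual cycle in~$G$ of length at most $|A_E^i|+|A_E^j|+2(m-1)\le 4m$, whereas $s=h=\alpha+\beta n/\ell+m+1>7m$ by the choice of $\alpha,\beta$. So connected subgraphs of size $\le s$ need not be trees, and your inequality $|F|=m_G-(n_G-c')$ no longer applies. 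Your proposed fix, a ``harmless subdivision'', is not harmless: uniformly subdividing edges changes both $n_G$ and the internal-edge counts of parts, so the equitable-partition equivalence (ii) with the \emph{same} $c$ and a well-defined new $s'$ does not survive without redoing the gadget calibration; in particular the inter-colour $[m{+}1]$-choices would have to be lengthened in a way that rebalances all components simultaneously, which is exactly re-proving the Enciso et al.\ reduction with new parameters.

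The paper sidesteps property~(iii) entirely for the $T_{h+1}$-free case by exploiting a single structural fact of the construction: $2\alpha>h$. From $T_{h+1}$-freeness every component has at most $h$ vertices, so the $c=2\ell(\ell-1)$ many $\alpha$-anchors are pairwise separated; thus $F$ contains at least one path-edge from each of the $k$ choice gadgets, and since $|F|\le k$ it contains exactly one from each and nothing else, whence $G-F$ has exactly $c$ components which by vertex-counting all have size $h$. If you want to keep your black-box presentation, replace (iii) by the property ``there exist $c$ pairwise non-adjacent vertex sets each of size $>s/2$, and the $k=f+c-1$ choice gadgets are internally vertex-disjoint paths whose removal (one edge each) separates all of them''; this is what the $\alpha$-anchors deliver and it yields $c'\ge c$ and $|F|\ge k$ directly, without any acyclicity assumption.
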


\begin{proof}
We use the same reduction as Enciso et al.~\cite{EFGK2009}. 
Let~$(G,\ell)$ be an instance of \textsc{Multicolored Clique}, which is W[1]-hard with respect to~$\ell$~\cite{FHRV09}.
Assume without loss of generality that~$G$ contains exactly $n/\ell$ vertices of each color $i\in [\ell]$. %
(We can add isolated vertices of the appropriate color to~$G$ to meet this demand.)
Denote with $\lambda\colon E(G) \to [m]$ a bijection that assigns each edge in~$G$ a unique integer from~$[m]$.
Further, denote with~$v^i_1, \ldots, v^i_{n/\ell}$ the vertices of color~$i$ in~$G$.

Construct instances~$(H,k,\tau_\rho)$ of \densestEdgeDeletion{} and~$(H,k,h)$ of $T_{h+1}$-\textsc{Free Edge Deletion} as follows.
Let~$\alpha$ and~$\beta$ be the smallest integers satisfying~$\beta > 2m + 10$ and~$\alpha > \beta \cdot n/\ell + 2m + 10$.
Then set~$h = \alpha + \beta \cdot n/\ell + m + 1$ which will be the maximum number of vertices any subtree in the resulting graph~$H-F$ is allowed to have.
To ensure this, set~$\tau_\rho = 1 - 1/h$.

The basic building block for~$H$ is the so-called \emph{anchor}.
For~$q \ge 1$, a $q$-anchor is a vertex adjacent to~$q-1$ many vertices of degree one.
The idea is that we cannot afford to cut off single vertices, thus all anchors in the constructed graph stay intact.
Hence, a $q$-anchor acts as a single vertex with weight~$q$ (the vertex contributes~$q$ to the size of its connected component).
We use anchors in the \emph{choice} gadget (see \cref{fig:w[1]-hard-edge-removal-choice-gadget} for an illustration):
	\tikzstyle{alpha-anchor}=[diamond,draw,fill,inner sep=2pt]
	\tikzstyle{beta-anchor}=[circle,draw,fill,inner sep=2pt,minimum size=2pt]
	\tikzstyle{betaAnchors} = [black!80,decoration={markings,mark=between positions 0.25 and 0.75 step 0.25 with { \draw (0pt,0pt) -- (0pt,-15pt);}},postaction={decorate}]
	\tikzstyle{Edges} = [black!80,decoration={markings,mark=between positions 0.1 and 0.95 step 0.05 with { \draw (0pt,0pt) -- (0pt,-4pt);}},postaction={decorate}]
	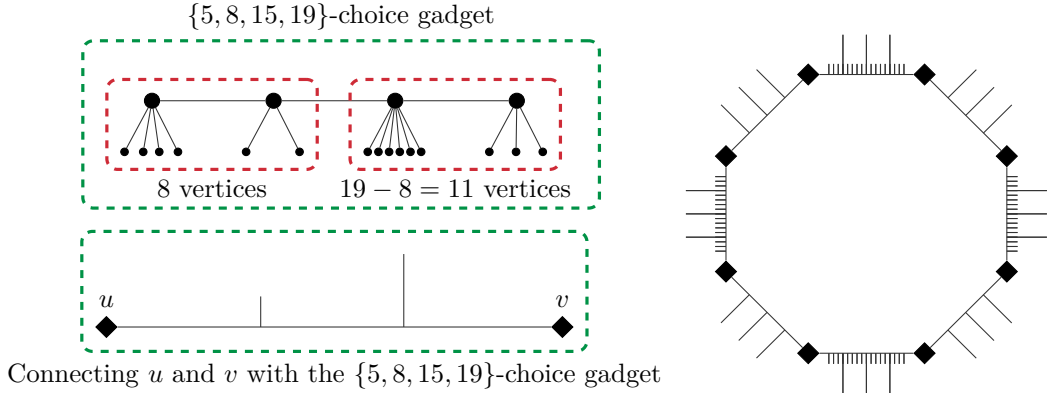
\begin{figure}[t]
		\centering
		
		\begin{tikzpicture}[scale = 1]
			\def\pathWidth{4}
			\node[alpha-anchor,label=above:$u$] at (- 0.75*\pathWidth,-3) (aa) {};
			\node[alpha-anchor,label=above:$v$] at (0.75*\pathWidth,-3) (bb) {};
			\foreach[count=\tt] \t in {0.2,0.4,0.6,0.8} 
			{
				\node[beta-anchor] at (-\pathWidth + \t * 2 * \pathWidth,0) (a\tt) {};
			}
			\foreach[count=\tt] \t in {2,3,4} 
			{
				\draw[black!80] (a\tt) -- (a\t);
			}
			\draw[black!80] (aa) -- (bb) node foreach[count=\tt] \t in {0.33,0.66} [pos=\t] (aa\tt) {};
			\foreach[count=\tt] \i in {4,2,6,3}
			{
				\foreach \j in {1,...,\i}
				{
					\pgfmathtruncatemacro{\pos}{(20*\j-20)/(\i-1) - 13}
					\node[circle,draw,fill,inner sep=1pt,minimum size=1pt,below left= 16 pt and \pos pt of a\tt] (a\tt-\j) {};
					\draw[black!80] (a\tt) -- (a\tt-\j);
				}
			}
			\foreach[count=\tt] \i in {2,6}
			{
				\node[above= 4*\i pt of aa\tt.center] (aa\tt-1) {};
				\draw[black!80] (aa\tt.center) -- (aa\tt-1.center);
			}
			\node[inner sep = 5pt,very thick, italyRed, draw,rounded corners, dashed,fit=(a1-4) (a2) (a2-1),label=below:{$8$ vertices}] {};
			\node[inner sep = 5pt,very thick, italyRed, draw,rounded corners, dashed,fit=(a3-6) (a3) (a4-1),label=below:{$19-8=11$ vertices}] (label-vertex) {};
			\node[inner sep = 14pt,very thick, italyGreen, draw,rounded corners, dashed,fit=(a1-4) (a4) (label-vertex),label=above:{$\{5,8,15,19\}$-choice gadget}] {};
			\node[inner sep = 5pt,very thick, italyGreen, draw,rounded corners, dashed,fit=(aa) (bb) (aa2-1),label=below:{Connecting~$u$ and~$v$ with the $\{5,8,15,19\}$-choice gadget}] {};

			\begin{scope}[xshift=7cm,yshift=-1.5cm]
				\def\k{8}
				\def\radius{2}
				\pgfmathtruncatemacro{\kk}{\k - 1}
				\foreach \i in {1,...,\k}
				{
					\node[alpha-anchor] at ({\i * 360 / \k - 180 / \k}:\radius) (v-\i) {};
				}
				\foreach[count=\xi from 2] \i in {1,...,\kk}
				{
					\pgfmathsetmacro\Edges{iseven(\i) ? "Edges": ""}
					\draw[betaAnchors,\Edges] (v-\i) -- (v-\xi);
				}
				\draw[betaAnchors,Edges] (v-\k) -- (v-1);
			\end{scope}
		\end{tikzpicture}
		\caption{
			\emph{Left above:} A $\{5,8,15,19\}$-choice gadget consisting of four anchors. The two red dashed boxed indicate a possible split into 8 and 11 vertices.
			\emph{Left below:} A more compact representation (used in further figures) of the same gadget used to connect~$u$ and~$v$.
			The length of the vertical lines correspond to the number of degree-one neighbors of the corresponding anchor; we omit the line for the first and last anchor.
			The diamond shaped vertices indicate $\alpha$-anchors. Each choice-gadget in the construction connects two $\alpha$-anchors as visualized above.
			\emph{Right:} A gadget constructed for each color. It consists of~$2(\ell-1)$ many $\alpha$-anchors (so $\ell=5$ in the example) that are connected in a cycle via choice-gadgets (in the example there are 4 vertices per color). 
		}
		\label{fig:w[1]-hard-edge-removal-choice-gadget}
	\end{figure}
Let~$A = \{a_1, \ldots, a_q\}$ be a set of integers so that~$1 \le a_i < a_j$ for all~$1 \le i < j$.
An~$A$-\emph{choice} is a path on~$p$ vertices~$u_1, \ldots, u_p$ where vertex~$u_i$, $i\in[p]$, is the center of an~$(a_i - a_{i-1})$-anchor ($u_1$ is the center of an~$a_1$-anchor). 
Note that an $A$-choice has exactly $a_p$ many vertices ($a_p$ is the maximum of $A$), and that removing the edge~$\{u_i,u_{i+1}\}$ splits the $A$-choice gadget into two connected components with~$a_i$ and~$a_q-a_i$ vertices respectively.
We say we \emph{cut} the $A$-choice at~$a_i$, $i \in [|A|-1]$, to indicate the removal of the edge~$\{u_i,u_{i+1}\}$.
To \emph{connect} two vertices~$u$ and~$v$ by an $A$-choice means to merge the first and last anchor of the $A$-choice gadget with~$u$ and~$v$, respectively.
Merging a vertex~$v$ with an $q$-anchor means to remove the anchor, add~$q$ degree-one vertices adjacent only to~$v$ and make~$v$ adjacent to all vertices the center of the anchor was adjacent to.
In other words, identify~$v$ with the center of the anchor and add one vertex only adjacent to~$v$ (to ensure the overall number of vertices stays the same).
Note that connecting two vertices~$u$ and~$v$ by an $A$-choice still leaves $|A|-1$ many possible cuts of the~$A$-choice.
We only connect center vertices of $\alpha$-anchors by choice gadgets, thus enforcing a cut in each choice gadget.

For each of the~$\ell$ colors in~$G$ add~$2(\ell-1)$ many $\alpha$-anchors to the initially empty graph~$H$.
Note that~$\alpha$ is sufficiently large to ensure that no two anchors can be in the same connected component, that is, $2\alpha > h$.
Denote with~$N^i_j,P^i_j$, $j \in [\ell]\setminus\{i\}$, the center vertices of the anchors for color~$i\in[\ell]$.
Set~$A_V = \{1\} \cup \{p\beta \mid p \in [n/\ell]\}$ containing~$1 + n/\ell$ elements, thus accommodating one cut for each vertex of color~$i$.
Let~$j'$ be the ``successor index'' of~$j$, formally:

\[j' = \begin{cases}
		1, & \text{if } (j=\ell \land i\neq 1) \lor (j+1= i=\ell) \\
  		2, & \text{if } j=\ell \land 1=i \\
		j+1, & \text{if } j+1 \neq i \land j+1 \le \ell \\
		j+2, & \text{if } j+1 = i \land j+2 \le \ell
      \end{cases}
\]
For each~$j \in [\ell]\setminus\{i\}$, connect~$N^i_j$ and~$P^i_{j'}$ by an $A_V$-choice gadget.
Set~$A^i_E = \{(p-1)\beta + \lambda(\{u,v^i_p\}) \mid p \in [n/\ell] \land u\in V(G) \land \{u,v^i_p\} \in E(G)\} \cup \{\beta n/\ell + m + 1\}$ containing one element for each edge incident to each vertex of color~$i$ and a large final element to make sure each $A^i_E$-choice has exactly $\beta n/\ell + m + 1$ many vertices.
Next, connect~$P^i_j$ and~$N^i_{j}$ by an $A^i_E$-choice gadget.
For~$i,j \in [\ell], i\neq j$ connect~$P^i_j$ and~$N^j_i$ by an $[m+1]$-choice gadget (leaving one $m$ cut possibilities), see \Cref{fig:w[1]-hard-edge-removal-edge-encoding} for an illustration and some intuition.
Finally, set~$k = 2(\ell-1)\ell + 2\binom{\ell}{2} = 3(\ell-1)\ell$.

	\begin{figure}[t!]
		\centering
		\begin{tikzpicture}[scale = 1]
			\def\m{16}
			\fill[fill=black!10,rounded corners] (-6,2) rectangle (6,-0.2);
			\fill[fill=black!10,rounded corners] (-6,-5) rectangle (6,-2.8);
			\foreach[count=\i] \x / \y / \txt in { {-2.2}/1/-1, {-1}/0/{}, 1/0/{}, {2.2}/1/{+1}}
			{
				\pgfmathtruncatemacro{\ii}{\i / 2}
				\pgfmathsetmacro\VertexName{iseven(\i) ? "N": "P"}
				\node[alpha-anchor,label=above:{$\VertexName^i_{j\txt}$}] at (\x * 2.5,\y) (a\i) {};
				\node[alpha-anchor,label=below:{$\VertexName^j_{i\txt}$}] at (-\x* 2.5,-\y - 3) (b\i) {};
			}
			\draw[white] (a2) -- (a3) node foreach \t in {0,...,16} [pos= \t/16] (aa-\t) {};
			\draw[white] (b2) -- (b3) node foreach \t in {0,...,16} [pos= \t/16] (bb-\t) {};

			\node[] at (0,1.5) () {part of gadget for color~$i$};
			\node[] at (0,-4.5) () {part of gadget for color~$j$};
			
			\foreach \i in {1,2,3}
			{
				\pgfmathtruncatemacro{\ii}{\i + 1}
				\draw[betaAnchors] (a\ii) -- (a\i);
				\draw[betaAnchors] (b\ii) -- (b\i);
			}
			
			\pgfmathtruncatemacro{\mm}{\m + 1}
			\draw[black!80] (a2) -- (b3) node foreach \t in {1,...,\m} [pos= \t / \mm] (ab-\t) {};
			\draw[black!80] (b2) -- (a3) node foreach \t in {1,...,\m} [pos= \t / \mm] (ba-\t) {};
			\foreach \i in {1,...,\m}
			{
				\node[left= 0pt of ab-\i.center] (a-\i-l) {};
				\node[right= 0pt of ba-\i.center] (b-\i-r) {};
				\draw[black!80] (a-\i-l.center) -- (ab-\i.center);
				\draw[black!80] (b-\i-r.center) -- (ba-\i.center);
			}
			\foreach \i in {2,3,6,9,10,11,14}
			{
				\node[above= 2pt of aa-\i.center] (aa-\i-b) {};
				\draw[black!80] (aa-\i-b.center) -- (aa-\i.center);
				\node[below= 3pt of bb-\i.center] (bb-\i-a) {};
				\draw[black!80] (bb-\i-a.center) -- (bb-\i.center);
			}
			
			\node[below left=-2pt and 10pt of ab-5.center] (aL1) {};
			\node[below right=-2pt and 10pt of ab-5.center] (aL2) {};
			\node[below left=5pt and 2pt of aa-10.center] (aL3) {};
			\node[above left=14pt and 2pt of aa-10.center] (aL4) {};
			\node[above left=25pt and 20pt of a2.center] (aL5) {};
			\node[left=30pt of a2.center] (aL6) {};
			\draw[italyRed,thick,densely dotted,rounded corners] (aL1.center) -- (aL2.center) -- (aL3.center) -- (aL4.center) -- (aL5.center) -- (aL6.center) -- cycle;

			\node[above left=-1pt and 10pt of ba-5.center] (aR1) {};
			\node[above right=-1pt and 10pt of ba-5.center] (aR2) {};
			\node[below left=20pt and 24pt of a4.center] (aR3) {};
			\node[above left=2pt and 34pt of a4.center] (aR4) {};
			\node[above left=14pt and 0pt of aa-10.center] (aR5) {};
			\node[below left=5pt and 0pt of aa-10.center] (aR6) {};
			\draw[italyGreen,thick,densely dotted,rounded corners] (aR1.center) -- (aR2.center) -- (aR3.center) -- (aR4.center) -- (aR5.center) -- (aR6.center) -- cycle;

			\node[above left=-2pt and 10pt of ba-5.center] (bR1) {};
			\node[above right=-2pt and 10pt of ba-5.center] (bR2) {};
			\node[above left=7pt and 7pt of b1.center] (bR3) {};
			\node[below left=10pt and 14pt of b1.center] (bR4) {};
			\node[below left=14pt and 0pt of bb-2.center] (bR5) {};
			\node[above left=5pt and 0pt of bb-2.center] (bR6) {};
			\draw[italyRed,thick,densely dotted,rounded corners] (bR1.center) -- (bR2.center) -- (bR3.center) -- (bR4.center) -- (bR5.center) -- (bR6.center) -- cycle;

			\node[below left=-0pt and 10pt of ab-5.center] (bL1) {};
			\node[below right=-0pt and 10pt of ab-5.center] (bL2) {};
			\node[above left=5pt and 2pt of bb-2.center] (bL3) {};
			\node[below left=14pt and 2pt of bb-2.center] (bL4) {};
			\node[below left=20pt and 4pt of b3.center] (bL5) {};
			\node[above left=2pt and 14pt of b3.center] (bL6) {};
			\draw[italyGreen,thick,densely dotted,rounded corners] (bL1.center) -- (bL2.center) -- (bL3.center) -- (bL4.center) -- (bL5.center) -- (bL6.center) -- cycle;
			
			\node[color=black!70] (v-choiceR) at (5,-1.5) {vertex selection};
			\draw[->,black!30] (v-choiceR) to (aR3);
			\draw[->,black!30] (v-choiceR) to (bR3);

			\node[color=black!70] (v-choiceL) at (-5,-1.5) {vertex selection};
			\draw[->,black!30] (v-choiceL) to (aL6);
			\draw[->,black!30] (v-choiceL) to (bL6);

			\node[color=black!70] (e-choice) at (0.4,-1.5) {edge selection};
			\foreach \u in {aR6,bR1,bL3,aL2}
			{
				\draw[->,black!30] (e-choice) to (\u);
			}
		\end{tikzpicture}
		\caption{
			Illustration of the representation of edges of the original graph in the constructed graph.
			The top part (highlighted in gray) indicates part of the gadget (cycle) created for color~$i$; correspondingly on the bottom for color~$j$.
			The intuition for the reduction is as follows.
			Consider a solution for~$H$, that is, a set of edges~$F$ so that~$\rho^*(H-F) \le \tau_\rho$.
 			Some connected components in~$H-F$ are indicated by dashed lines enclosing parts of the vertices.
 			The size constraint for each connected component (enforced by the density threshold~$\tau_\rho < 1$) ensures that each connected component of~$H-F$ contains at most one $\alpha$-anchor (indicated by diamond-shaped vertices), at most $n/\ell$ many $\beta$-anchors (indicated by the longer vertical lines in the choice-gadgets, here~$n/\ell = 4$), and an additional~$m+1$ vertices.
 			In the vertex-selection part (the $A_V$-choice gadgets that only contains $\beta$-anchors) there are~$n/\ell$ many choices to cut; each corresponding to selecting one of the~$n/\ell$ vertices of the respective color. 
 			The bound of at most $n/\ell$ many $\beta$-anchors per connected component enforces the same choice throughout the color gadget (see right side of \cref{fig:w[1]-hard-edge-removal-choice-gadget}).
 			The edge selection follows the same idea as the vertex selection for each color (the number are just smaller):
 			The $[m+1]$-choice gadgets between~$N^i_j$ and~$P^j_i$ as well as between~$P^i_j$ and~$N^j_i$ imply that~$2m+2$ vertices need to be distributed to the connected components around~$N^i_j,P^j_i,P^i_j,N^j_i$.
 			Moreover, the vertex pairs~$\{N^i_j,P^i_j\}$ and~$\{N^j_i,P^j_i\}$ are connected via an $A^i_E$- resp. $A^j_E$-choice gadget. 
 			Both of these gadgets contain~$n/\ell + m + 1$ vertices. 
 			Thus, $4m+4$ vertices needs to be distributed to the connected components around~$N^i_j,P^j_i,P^i_j,N^j_i$.
 			By construction, this requires a cut at an anchor representing the same edge in all four connections.
 			This implies that the selected vertices in the color gadgets are adjacent in the original graph.
 			As one vertex needs to be selected per color a clique needs to be selected.
		}
		\label{fig:w[1]-hard-edge-removal-edge-encoding}
	\end{figure}
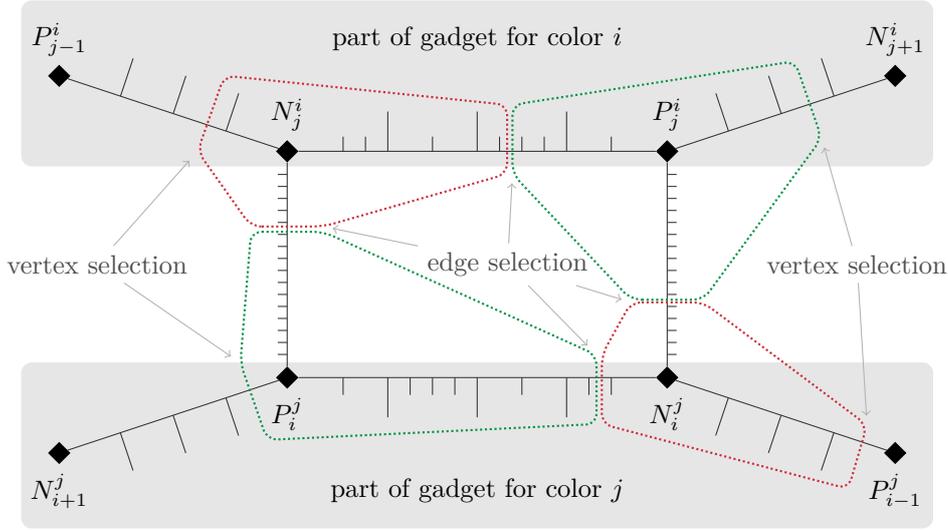

Observe that removing in~$H$ the degree-one vertices, we are left with~$\ell$ cycles that are pairwise connected by two paths. 
Thus, the feedback edge number of the constructed graph~$H$ is~$\ell+2\binom{\ell}{2}-(\ell-1) = 2\binom{\ell}{2}+1$.

We show that any solution for the instance~$(H,k,\tau_\rho)$ creates connected components of exactly the same size.
Let~$F$ be a solution to the constructed \densestEdgeDeletion{} instance~$(H,k,\tau_\rho)$, that is, $|F| \le k$ and the densest subgraph in~$G-F$ has density at most~$\tau_\rho$.
Observe that each ``large'' $\alpha$-anchor is in a different connected component in~$H-F$ as~$1 / (1-\tau_\rho) = h < 2\alpha$.
Thus, $F$ contains at least one edge of each choice gadget as each choice gadget connects two $\alpha$-anchors.
Since there are~$k$ choice gadgets, it follows that~$F$ contains exactly one edge from each choice gadget.
Thus, $H-F$ contains exactly~$2(\ell-1)\ell$ many connected components.
Note that~$H$ consists of~$2(\ell-1)\ell$ many $\alpha$-anchors, $(\ell-1)$ many $A_V$-choice and~$A^i_E$-choice gadgets for each~$i \in [\ell]$, and $2\binom{\ell}{2}$ many $[m+1]$-choice gadgets.
Recall that a $A$-choice contains exactly~$\max_{a \in A}\{a\}$ vertices.
Thus, the number of vertices in~$H$ is
\[2(\ell-1)\ell \cdot \alpha + (\ell-1)\ell \cdot (2\beta n/\ell + m + 1) + (\ell-1)\ell \cdot (m+1) = 2(\ell-1)\ell \cdot (\alpha + \beta n/\ell + m + 1) = 2(\ell-1)\ell \cdot h.\]
Hence, by pigeon principle, each connected component in~$H-F$ contains exactly~$h$ vertices.
Thus, $F$ certifies also a solution to the \textsc{Equitable Connected Partition} instance~$(H,2(\ell-1)\ell)$.
The correctness of the reduction follows from the arguments of Enciso et al.~\cite{EFGK2009}, as they use the same reduction for \textsc{Equitable Connected Partition}.
\end{proof}

We remark that the above theorem also implies W[1]-hardness with respect to the treewidth for \densestEdgeDeletion{} and $T_{h+1}$-\textsc{Free Edge Deletion} as the treewidth of is upper bounded by twice the feedback edge number.
This resolves an open question of Enright and Meeks~\cite{EM2018}.

\section{\densestVertexDeletion{}}\label{Sec:Critical_k_vertex_deletion}

In this section we show that \densestVertexDeletion{} is \NP-complete on several graph classes. We also prove the W[2]-hardness with respect to the parameter $k$, the number of vertices to remove.

\subsection{Polynomial-time algorithm for trees}\label{sec:vertex-del-poly-time}
In this part we discuss the case of trees and show that the problem is polynomial-time solvable. 
As in the edge deletion variant, we only need to consider the case that~$\tau_\rho < 1$ (see the discussion before \cref{thm:EdgeDel-poly-on-trees}).
Thus, we need to delete vertices such that each connected component in the resulting graph has at most~$h = 1/(1-\tau_\rho)$ many vertices.

Shen and Smith~\cite{SS2012} consider the problem of deleting $k$ vertices to minimize the size of the largest connected component. 
They established an algorithm in $O(n^3 \log n)$ time for trees.  
We improve on this as follows.

\begin{theorem}\label{thm:vertexDel-Trees}
	\densestVertexDeletion{} can be solved in~$O(n)$ time on trees.
\end{theorem}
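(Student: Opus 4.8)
The plan is to give a single bottom-up sweep of the tree. Recall (as stated above) that after dealing with the trivial case $\tau_\rho\ge 1$ we may assume $\tau_\rho<1$, so the problem is exactly: delete at most $k$ vertices from the tree $T$ so that every connected component of the remaining forest has at most $h=\lfloor 1/(1-\tau_\rho)\rfloor$ vertices. We solve the minimisation version. Root $T$ at an arbitrary vertex and process its vertices in post-order, maintaining for each already-processed vertex $v$ a number $r(v)\in\{0,1,\dots,h\}$: the size of the component currently containing $v$ inside the subtree $T_v$ after the deletions made so far, where $r(v)=0$ encodes that $v$ itself has been deleted. When $v$ with children $c_1,\dots,c_d$ is reached, put $s=1+\sum_{i:\,r(c_i)>0}r(c_i)$; if $s\le h$ keep $v$ and set $r(v)=s$, otherwise delete $v$, set $r(v)=0$, and increase a counter. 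At the end answer \textsc{yes} iff the counter is at most $k$ (and return the counter for the minimisation variant).

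The correctness claim is that this greedy deletes the fewest possible vertices, and I would prove it by a bottom-up exchange argument. The crucial point is that deleting $v$ whenever $s>h$ is \emph{weakly dominant}: any feasible solution restricted to $T_v$ that does not delete $v$ but keeps the $v$-component of size at most $h$ must make at least one deletion strictly inside $T_v$ that the greedy does not (it has to shrink the contribution of some child below the value the greedy computed, which forces cutting at least one vertex out of that child's component), so it spends at least as many deletions in $T_v$ as the greedy; and deleting $v$ leaves the smallest conceivable residual, $0$, for the parent, which can only help higher up. Concretely, I would prove by induction over the rooted subtrees (children taken in the same post-order) the statement: for every $v$, the greedy uses the minimum number of deletions on $T_v$ over all feasible solutions for $T_v$, and if the greedy keeps $v$ then $r(v)$ is the smallest residual over all minimum-size feasible solutions for $T_v$ that keep $v$. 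The inductive step then compares the greedy's choice at $v$ against an arbitrary feasible solution via the domination argument above, rewriting the latter from the leaves upward into the greedy one without increasing its size.

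For the running time, the traversal visits each vertex once and spends time proportional to its number of children, and no sorting of the children is ever needed (the greedy only ever decides to delete $v$ as a whole, never an individual child), so the total time is $O(n)$, with the trivial case $\tau_\rho\ge 1$ also handled in $O(n)$. The step I expect to be the main obstacle is making the exchange/induction rigorous --- in particular, ruling out that the greedy over-deletes because it keeps as many descendants as possible (so its residuals $r(\cdot)$ are as large as they can be): one must argue that trading a deletion at $v$ for one or more deletions deeper in $T_v$ that shrink a residual is never strictly beneficial, since any such deeper deletion costs at least one and never produces a residual at $v$ that improves on the value $0$ obtained by simply deleting $v$.
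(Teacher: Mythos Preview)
Your proposal is correct and takes essentially the same approach as the paper: the identical bottom-up greedy that deletes a vertex whenever its accumulated subtree size exceeds~$h$, with correctness argued via an exchange argument. Your correctness argument is stated more carefully (with an explicit inductive invariant on both deletion count and residual size) than the paper's terse sketch, but the underlying idea is the same.
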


\begin{proof}
	Let~$T$ be the input tree.
	We propose a simple greedy algorithm that works bottom up from the leaves.
	In each vertex we visit we store the size (number of vertices) of the current subtree.
	For a vertex~$v$ this can be easily computed by summing up the values in its children and adding one.
	Whenever a vertex reaches subtree size above~$h$, we delete the vertex. 
	Naturally, when computing the subtree size, then deleted children will be ignored.
	This algorithm runs in~$O(n)$ time.
	
	Clearly, the algorithm provides a graph where each connected component contains at most~$h$ vertices.
	It remains to show that there is no smaller solution.
	To this end, consider a subtree~$T_v$ rooted at vertex~$v$.
	If~$T_v$ contains more than~$h$ vertices, then any solution must delete at least one vertex in~$T_v$.
	If $T_v$ has size less than~$h$, then we claim there is an optimal solution not deleting any vertex in~$T_v$:
	Assume otherwise and let~$S \subseteq V(T)$ be an optimal solution deleting a vertex~$u$ in~$T_v$.
	Then removing~$u$ from~$S$ and adding the parent from~$v$ to~$S$ results in another solution of the same cost---a contradiction.
	Hence, our algorithm produces an optimal solution.
\end{proof}

\subsection{NP-hardness results}\label{sec:vertex-del-NP-hardness}
We prove in the following the \NP-hardness of \densestVertexDeletion{} by reduction from \textsc{Feedback Vertex Set}, which remains \NP-hard on Hamiltonian planar 4-regular graphs~\cite{CF21}.
Given an instance of \textsc{Feedback Vertex Set}, that is a graph $G$, and an integer $k$, the problem consists in deciding the existence of a subset  $V'\subseteq V(G)$ with $|V'|\leq k$ such that the $G- V'$ is cycle-free.

\densestVertexDeletion{}  is closely related to \textsc{Feedback Vertex Set}.  
Given a graph $G$ and a subset $V'\subseteq V(G)$, we have that $G- V'$ is cycle-free if and only if $\rho^*(G-V')<1$.  
Thus \densestVertexDeletion{} is NP-complete on all classes of graphs where \textsc{Feedback Vertex Set} is NP-complete, for example on planar and maximum degree 4 graphs.
We can even prove a stronger result as follows:

Let $G$ be an undirected graph. 
The bipartite incidence graph of $G$ (also called subdivision of~$G$) is the bipartite graph $H$ whose vertex set is $V(G) \cup E(G)$ and there is an edge in $H$ between $v \in V(G)$ and $e \in  E(G)$ if and only if $e$ is incident to $v$ in $G$. %

\begin{theorem}\label{thm:VertexDeletion_NP_Hard-bipartite}
\densestVertexDeletion{}  is \NP-complete for $\tau_\rho<1$ even for planar bipartite graphs with maximum degree 4. 
\end{theorem}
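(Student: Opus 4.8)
The plan is to reduce from \textsc{Feedback Vertex Set} on Hamiltonian planar $4$-regular graphs, which is \NP-hard by~\cite{CF21}, using the bipartite incidence graph together with the elementary fact that a graph is a forest exactly when its densest subgraph has density strictly below one. Membership in \NP{} is immediate, since for a candidate deletion set a densest subgraph can be computed in polynomial time~\cite{G1984,PQ82} and compared with~$\tau_\rho$. For the hardness, given an instance~$(G,k)$ with~$G$ planar and $4$-regular, I would let~$H = S(G)$ be the subdivision of~$G$, put~$n_H = |V(G)| + |E(G)|$, and set~$\tau_\rho = (n_H - 1)/n_H < 1$. The first routine step is to verify that~$H$ lies in the claimed class: $H$ is bipartite with parts~$V(G)$ and~$E(G)$; planarity is preserved by subdividing edges; and every vertex of~$V(G)$ still has degree~$4$ while every vertex of~$E(G)$ has degree~$2$, so~$\Delta(H) = 4$.

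Next I would establish the key equivalence that, for every~$W \subseteq V(H)$, we have~$\rho^*(H - W) \le \tau_\rho$ if and only if~$H - W$ is a forest. If~$H - W$ is a forest then, by \cref{lem:nice-props}~(\ref{niceprop-trees-density}.), a densest subgraph of~$H - W$ is a tree on some~$b \le n_H$ vertices and has density~$(b-1)/b \le (n_H-1)/n_H = \tau_\rho$ (or density~$0$ if~$H-W$ is edgeless); conversely, any cycle has density exactly~$1 > \tau_\rho$, so~$\rho^*(H-W) \le \tau_\rho$ forces~$H-W$ to be acyclic.

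With this in hand, the two directions are short. If~$V' \subseteq V(G)$ is a feedback vertex set of~$G$ with~$|V'| \le k$, then any cycle of~$H - V'$ would project onto a cycle of~$G - V'$, so~$H - V'$ is a forest and~$W := V'$ solves the constructed instance of \densestVertexDeletion{}. For the converse, let~$W$ with~$|W| \le k$ be such that~$H - W$ is a forest. I would first argue that~$W$ may be taken inside~$V(G)$: whenever~$W$ contains the subdivision vertex of an edge~$\{u,v\}$ of~$G$, replace it by~$u$; every cycle of~$H$ through that subdivision vertex also passes through~$u$, so the modified set still meets every cycle of~$H$ and has size at most~$|W| \le k$. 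Once~$W \subseteq V(G)$, if~$G - W$ contained a cycle~$C$ then its subdivision~$S(C)$ would be a cycle of~$H - W$ (its subdivision vertices lie outside~$V(G) \supseteq W$), contradicting that~$H - W$ is a forest; hence~$W$ is a feedback vertex set of~$G$ of size at most~$k$.

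The step I expect to be the crux is the cleanup showing that an optimal deletion set can be assumed to use only original vertices of~$G$. If deleting subdivision vertices were freely allowed, the problem on~$H$ would essentially become \textsc{Feedback Edge Set}, which is polynomial-time solvable, so this swapping argument is precisely what keeps the reduction from collapsing; everything else is routine bookkeeping about subdivisions and the cycle-versus-density dichotomy.
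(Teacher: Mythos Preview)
Your proposal is correct and follows essentially the same approach as the paper: reduce from \textsc{Feedback Vertex Set} on planar degree-$4$ graphs via the subdivision (bipartite incidence graph), use the forest-versus-density dichotomy, and clean up the solution by swapping deleted subdivision vertices for one of their two neighbors in~$V(G)$. The only cosmetic difference is the choice of threshold ($\tau_\rho = (n_H-1)/n_H$ for you versus $1 - 1/n_G^2$ in the paper), which is immaterial since any value in $[\,(n_H-1)/n_H,\,1)$ works.
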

\begin{proof}
We prove the \NP-hardness by reduction from \textsc{Feedback Vertex Set}.
Considering a graph $G$ on $n$ vertices and an integer $k$, instance of \textsc{Feedback Vertex Set}, let $H$ be the bipartite incidence graph of $G$. 
Remark that if $G$ is planar of maximum degree 4 then $H$ is still planar with maximum degree 4.  We show in the following that  $G$ contains  a subset  $V'\subseteq V(G)$ with $|V'|\leq k$ such that $G- V'$ is cycle-free if and only if  $H$ contains a subset  $F\subseteq V(G)\cup E(G)$ with $|F|\leq k$ such that the $\rho^*(H-F)\leq 1-1/n^2$.

Note that to any cycle  $v_1,v_2,\ldots, v_i, v_1$ in $G$ it corresponds a cycle $v_1, v_1v_2, v_2,\ldots, v_i,v_iv_1,v_1$ in $H$. Moreover, to any cycle from $H$ where the vertices alternate between vertices from $V(G)$ and $E(G)$, there corresponds a cycle in $G$.
Furthermore, a subgraph $G[V']$ is  cycle-free if and only if $\rho(G[V'])< 1$.

If  $G$ contains a subset  $F\subseteq V$ with $|F|\leq k$ such that the $G- F$ is cycle-free, then $H-F$ contains no cycle and thus $\rho^*(H-F)\leq 1-1/n^2$.

Consider, now, that $H$ contains a subset  $F\subseteq V(G)\cup E(G)$ with $|F|\leq k$ such that $\rho^*(H-F)\leq 1-1/n^2$. Any vertex $e \in F\cap E(G)$ from $H$ has two neighbors. We can replace it by one of its neighbors in $F$. This exchange makes the degree of $e$ less than or equal to $1$ in $H-F$, and therefore guarantees that $e$ is not in any cycle. Thus, exchanging $e$ in $F$ for one of its neighbors cannot create any cycles, and moreover $|F|\leq k$. 
At the end of these exchanges, $F$ contains only vertices from $V(G)$ and moreover $H-F$ contains no cycle. Thus $F$ is a solution for the instance $(G,k)$ of \textsc{Feedback Vertex Set}. 
\end{proof}

In the next reduction we use the following problem:
 
\begin{problem}
    \problemtitle{\textsc{Dissociation Set}}
    \probleminput{A graph $G$ and an integer $k$}
    \problemquestion{Is there a subset of vertices $S\subseteq V(G)$ of size at least $k$  such that $G[S]$ has maximum degree at most $1$?}    
\end{problem}
\textsc{Dissociation Set} is \NP-hard even in line graphs of planar bipartite graphs \cite{ODFG2011}.

\begin{theorem}\label{thm:NP-harness-rho12}
		\densestVertexDeletion{} is \NP-complete for $\tau_\rho=1/2$ even for line graphs of planar bipartite graphs.
\end{theorem}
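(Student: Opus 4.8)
The plan is to reduce from \textsc{Dissociation Set} on line graphs of planar bipartite graphs, which is \NP-hard by Orlovich et al.~\cite{ODFG2011}. Given an instance $(G,k)$ of \textsc{Dissociation Set}, I first want to pin down what $\tau_\rho = 1/2$ forces on the resulting graph: by \cref{lem:nice-props} (\ref{niceprop-trees-density}.), a graph has $\rho^* \le 1/2$ if and only if every connected component is either a single vertex or a single edge (a $K_2$), i.e.\ the graph has maximum degree at most $1$. So \densestVertexDeletion{} with $\tau_\rho = 1/2$ on input $(G',k')$ asks precisely whether there is a set of at most $k'$ vertices whose deletion leaves a graph of maximum degree at most $1$ — equivalently, whether $G'$ has a dissociation set of size at least $n_{G'} - k'$. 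Thus the two problems are essentially the same problem viewed from the complementary side.

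Concretely, I would set $G' = G$, $k' = n_G - k$, and $\tau_\rho = 1/2$. The correctness argument is then immediate from the equivalence above: $G$ has a dissociation set $S$ with $|S| \ge k$ if and only if $V(G) \setminus S$ is a set of at most $n_G - k = k'$ vertices whose removal leaves $G[S]$, which has maximum degree at most $1$, hence $\rho^*(G - (V(G)\setminus S)) = \rho^*(G[S]) \le 1/2 = \tau_\rho$. Conversely, any deletion set $F$ of size at most $k'$ with $\rho^*(G - F) \le 1/2$ yields, by \cref{lem:nice-props} (\ref{niceprop-trees-density}.), a graph $G - F$ of maximum degree at most $1$, so $S = V(G)\setminus F$ is a dissociation set of size at least $n_G - k' = k$. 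Membership in \NP{} is clear, since one can verify in polynomial time that a guessed deletion set has the right size and that the remaining graph has density at most $\tau_\rho$ (the densest subgraph density $\rho^*$ is polynomial-time computable~\cite{G1984,PQ82}, or here one just checks the maximum degree is at most one).

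Finally, I must ensure the target graph class is preserved. Since I take $G' = G$ verbatim and the \textsc{Dissociation Set} instance $G$ is already a line graph of a planar bipartite graph, this is automatic — no gadget construction is needed. I expect the only point requiring a little care is making the statement of \cref{lem:nice-props} (\ref{niceprop-trees-density}.) do exactly the work claimed at $\tau_\rho = 1/2$: a component with two edges contains a $P_3$ of density $2/3 > 1/2$, and a component that is a single $K_2$ has density exactly $1/2$, so the characterization ``$\rho^* \le 1/2 \iff \Delta \le 1$'' holds; this is the one observation the whole reduction rests on, and it is a direct consequence of the lemma. There is no genuine obstacle here — the reduction is a trivial identity reduction — so the ``hard part'' is merely stating the $\tau_\rho = 1/2$ density characterization cleanly and noting that it transfers the known hardness and graph-class restriction without modification.
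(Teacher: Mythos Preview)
Your proposal is correct and takes essentially the same approach as the paper: both reduce from \textsc{Dissociation Set} on line graphs of planar bipartite graphs via the identity reduction $G'=G$, $k'=n_G-k$, $\tau_\rho=1/2$, using that $\rho^*\le 1/2$ is equivalent to maximum degree at most~$1$. The paper's proof is in fact more terse; your added justification via the $P_3$ observation and the explicit \NP-membership remark are fine elaborations of the same argument.
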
 
\begin{proof} 
	Given an instance $(G,k')$ of  \textsc{Dissociation Set} on planar line graphs of planar bipartite graphs we construct the instance $(G,k = n-k',\tau_\rho=1/2)$ of \densestVertexDeletion{}. 
	We can easily see that $G$ has a solution $S$ of size at least $k'$ if and only if $\rho^*(G[S]) \leq 1/2$ as $G[S]$ has density at most $1/2$ and $V(G)\backslash S$ has size at most $k$.
\end{proof}

The two previous results show hardness for small values of~$\tau_\rho$.
Next we show that \densestVertexDeletion{} remains NP-hard for most values of~$\tau_\rho$.

\begin{theorem}
	\label{thm:NP-hardness-any-rho}
	\densestVertexDeletion{} is \NP-complete for any rational $\tau_\rho$  such that $0\leq \tau_\rho \leq n^{1-1/c}$, where $c$ is any constant.
\end{theorem}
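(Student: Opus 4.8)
The plan is to reduce from an \NP-hard vertex-deletion problem and to ``amplify'' densities with a fixed dense gadget glued along the edges. Small targets need nothing new: for a fixed rational~$\tau_\rho<1$ the problem is exactly ``delete~$k$ vertices so that every remaining connected component has at most~$h=\lfloor 1/(1-\tau_\rho)\rfloor$ vertices'', which is \NP-hard for each fixed~$h\ge1$ (the cases~$\tau_\rho=1/2$ and~$\tau_\rho<1$ are \cref{thm:NP-harness-rho12} and \cref{thm:VertexDeletion_NP_Hard-bipartite}), so I will assume~$\tau_\rho\ge1$.

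Fix a clique size~$N$ with~$\tfrac{N-2}{2}\le\tau_\rho<\tfrac{N-1}{2}$ (when~$\tau_\rho$ is a half-integer, where a plain clique leaves no slack, replace~$K_N$ everywhere by a large \emph{strongly balanced} graph of least density exceeding~$\tau_\rho$, which exists by \cref{thm:strongly_balanced}). From a bounded-degree instance~$(G_0,k_0)$ of the source problem, build~$G$ by keeping~$V(G_0)$ as \emph{ports} and, for each edge~$e=\{u,v\}$ of~$G_0$, adding~$N-2$ fresh private vertices so that~$\{u,v\}$ together with these forms a clique~$K_N^e$. Then~$\rho^*(K_N^e)=\tfrac{N-1}{2}>\tau_\rho$, so every solution must delete a vertex of every gadget; and by \cref{lem:nice-props} a densest subgraph of~$G$ minus a deletion set uses from each gadget the whole surviving sub-clique, and these are glued to one another only at shared ports, so the subgraphs that can exceed~$\tau_\rho$ are exactly the ``chains'' of near-full gadgets (copies of~$K_{N-1}$) strung along a surviving connected piece of~$G_0$. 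A chain of~$\ell$ such cliques has density~$\tfrac{\ell\binom{N-1}{2}}{\ell(N-2)+1}\to\tfrac{N-1}{2}$, which stays at most~$\tau_\rho$ precisely when~$\ell$ is below a bound~$L=L(\tau_\rho,N)$ governed by the slack~$\tau_\rho-\tfrac{N-2}{2}$, whereas any chain that closes into a cycle of~$G_0$ has density exactly~$\tfrac{N-1}{2}>\tau_\rho$. This is why the source problem is a bounded-component-size vertex-deletion problem (``delete~$k$ vertices so that every remaining component has bounded size''), which is \NP-hard for every fixed bound~$L\ge1$ and reduces to \textsc{Vertex Cover} for~$L=1$; the slack available at the given~$\tau_\rho$ dictates which~$L$ we use.

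With~$k$ the source budget, a size-$\le k$ bounded-component deletion set of~$G_0$ (together with one cheap private deletion per surviving gadget, absorbed by the choice of constants) leaves only short chains and isolated small cliques in~$G$, so~$\rho^*\le\tau_\rho$; conversely a solution~$F$ with~$|F|\le k$ hits every gadget and, since it cannot leave a long surviving chain or a surviving cycle, the trace of~$F$ on the original vertices (together with the edges whose gadget was hit ``internally'') certifies a bounded-component deletion set of~$G_0$ of size~$\le k$ — in the~$L=1$ case this is the identity~$\min_Y\bigl(|Y|+\tau(G_0\setminus Y)\bigr)=\tau(G_0)$, which holds because deleting an edge lowers the vertex-cover number by at most one. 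Finally,~$G$ has~$n=\Theta(N\cdot|E(G_0)|)=\mathrm{poly}(\tau_\rho,|G_0|)$ vertices, and padding~$G$ with isolated vertices (which changes neither the answer nor~$\rho^*$) forces~$n\ge\tau_\rho^{\,c/(c-1)}$, i.e.~$\tau_\rho\le n^{1-1/c}$, for any fixed constant~$c$, while the reduction remains polynomial (the hypothesis~$\tau_\rho\le n^{1-1/c}$ already bounds~$\tau_\rho$ polynomially).

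The hard part will be the density bookkeeping: one must verify that \emph{no} subgraph of the constructed~$G$ — not just the intended gadget-chains, but also books of gadgets around a single surviving port, mixtures of partially-deleted gadgets, and chains that close into cycles of~$G_0$ — ever exceeds~$\tau_\rho$, and one must align~$N$ (or the balanced gadget), the slack, the source-problem parameter~$L$, and the budget so that both directions of the reduction match for the \emph{given} rational~$\tau_\rho$, the borderline half-integer values being the most delicate.
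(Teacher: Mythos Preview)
Your treatment of $\tau_\rho<1$ is fine, but the plan for $\tau_\rho\ge 1$ has a genuine gap in the budget accounting. You write that the forward direction deletes ``a size-$\le k$ bounded-component deletion set of $G_0$ together with one cheap private deletion per surviving gadget, absorbed by the choice of constants'', but the budget is a single integer and cannot absorb an unspecified number of extra moves. If $k=k_0$, the forward direction already fails: deleting only ports (a vertex cover of~$G_0$) leaves each gadget as a $K_{N-1}$, and the book of $d=\deg_{G_0}(v)$ such cliques at a surviving port~$v$ has density $d\binom{N-1}{2}/(1+d(N-2))$, which exceeds~$\tau_\rho$ whenever~$\tau_\rho$ is near $(N-2)/2$ and $d\ge 2$ (for instance $N=4$, $\tau_\rho=1$, $d=2$ gives density~$6/5$). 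If instead you inflate the budget to, say, $k_0+|E(G_0)|$, the reverse direction becomes the problem: a solution can now spend almost the entire budget on private vertices, and your identity $\min_Y(|Y|+\tau(G_0\setminus Y))=\tau(G_0)$ only controls the cost of \emph{hitting} every gadget once, not the additional cost of breaking chains and books down to the required length~$L$. What you call ``bookkeeping'' is really the heart of the argument, and as stated the two directions do not meet at any common value of~$k$.

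The paper sidesteps all of this with two changes: glue the gadget at \emph{vertices} rather than edges, and choose it to sit exactly \emph{at} the threshold rather than strictly above it. Take a balanced graph~$H$ with $\rho(H)=\tau_\rho$ (constructible in polynomial time via \cref{thm:strongly_balanced}), fix a vertex $u\in V(H)$, and for each vertex~$v_i$ of a \textsc{Vertex Cover} instance~$G_0$ attach a private copy~$H_i$ by identifying its copy of~$u$ with~$v_i$; keep all edges of~$G_0$ and set $k=k_0$. If $C$ is a vertex cover, deleting $\{v_i:v_i\in C\}$ leaves only disjoint copies of~$H$ or of $H-\{u\}$, both of density at most~$\tau_\rho$ since~$H$ is balanced. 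Conversely, if some edge $\{v_i,v_j\}$ of~$G_0$ has $F\cap V(H_i)=F\cap V(H_j)=\emptyset$, then $H_i\cup H_j$ together with the edge $v_iv_j$ survives with density $(2m_H+1)/(2n_H)>\tau_\rho$; hence $\{v_i:F\cap V(H_i)\ne\emptyset\}$ is a vertex cover of size at most $|F|\le k$. There are no private deletions and no chain-length calibration, and the $n^{1-1/c}$ range comes simply from letting $n_H$ be polynomial in~$n_{G_0}$.
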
 
\appendixproof{thm:NP-hardness-any-rho}
{
\begin{proof} 
	Given an instance $(G,k)$ of  \textsc{Vertex Cover} we construct an instance $(G',k,\tau_\rho)$ of \densestVertexDeletion{} as follows. 
	Consider a balanced graph $H$ on $n_H$ vertices and $m_H$ edges such that $ \tau_\rho= m_H/n_H $ and let $u$ be a vertex from $H$. 
	A graph that is  strongly balanced  and thus also balanced of density $ \tau_\rho$ can be constructed in polynomial time  \cite{RV1986}.
	Create $n_G$ copies of it $H_1,\ldots, H_{n_G}$ and identify the vertex $u$ of $H_i$ with a vertex $v_i$ of $G$ for all $i \in \{1,\ldots, n_G\}$.
	Denote this graph by $G'$.
	The instance $(G',k, \tau_\rho)$ of \densestVertexDeletion{} problem is the desired one. 
	We claim that $G$ has a vertex cover $C$ of size at most $k$ if and only if there is a subset $F \subseteq V(G')$ with $|F|\leq k$ such that $\rho^*(G'-F) \leq \tau_\rho$.

	 ``$\Rightarrow:$''  If $C$ is a vertex cover of size at most $k$ then consider $F$ the   vertices  $v_i \in C$. Thus $|F|\leq  k$. Since $C$ is a vertex cover in $G$, the graph $G'-F$ contains as connected components copies of $H$ or copies of  $H-\{u\}$ that are of density  $\tau_\rho$ or at most  $\tau_\rho$ since $H$ is a balanced graph. 
	
	``$\Leftarrow:$''  Let $F$ be a set of at most $k$ vertices from $G'$ such that $\rho^*(G'-F) \leq \tau_\rho$.
	Define $C$ as the set of vertices  $v_i$ such that there exists a vertex $x\in F\cap H_i$. Suppose by contradiction that $C$ is not a vertex cover in $G$, then there are two vertices $v_i,v_j \notin V(G) \setminus C$ such that $v_iv_j\in E(G)$. This corresponds to a subgraph in $G'$ containing the two copies $H_i,H_j$ and edge $\{v_i,v_j\}$ of density $(2m_H+1)/(2n_H) > m_H/n_H=\tau_\rho$. 
	The maximum value $\tau_\rho$ could have is in the case where $n_H = n_G^c$ for some constant $c$ and $H$ is the complete graph $H=K_{m_H}$ and then $G'$ has $n_{G'}=n_H\cdot n_G$ vertices, so $n_G= n_{G'}^{1/(c+1)}$ and $n_H = n_{G'}^{c/(c+1)} = n_{G'}^{1 - 1/(c+1)} = n_{G'}^{1 - 1/c'}$ for~$c' = c - 1$. 
\end{proof}
}

The last result in this subsection concerns split graphs.
The following hardness result holds for large (non-constant) values of~$\tau_\rho$.
\begin{theorem}
	\label{thm:vertex-del-split-graph}
	\densestVertexDeletion{} remains NP-hard on split graphs.
\end{theorem}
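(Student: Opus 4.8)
The plan is to prove \NP-hardness by a polynomial-time reduction from \textsc{Clique} (equivalently \textsc{Independent Set}; a reduction from \textsc{Vertex Cover} would also be conceivable) into \densestVertexDeletion{} restricted to split graphs. The guiding observation is that in a split graph with clique side $B$ the clique on $a$ surviving vertices of $B$ always has density exactly $(a-1)/2$, and every independent-side vertex has degree at most $|B|$; hence the densest subgraph of a split graph is governed by \emph{which} vertices of $B$ survive together and \emph{how many} independent-side vertices are complete to them, and we want to calibrate the instance so that achieving a low density after the deletions forces the surviving part of $B$ to encode a solution of the source instance.

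Concretely, given a \textsc{Clique} instance $(G,q)$ with $n=|V(G)|$, I would let the clique side of $G'$ be $V(G)$ made into the clique $K_n$ (adding all $\binom{n}{2}$ edges), and attach on the independent side, for every $v\in V(G)$, a bundle of $N$ pairwise non-adjacent twins, each adjacent precisely to the closed neighbourhood $N_G[v]$ viewed as vertices of $K_n$, where $N$ is a suitably large polynomial in $n$. Note that if $Q$ is a $q$-clique of $G$ then $Q\subseteq N_G[v]$ for every $v\in Q$, so deleting everything except $Q$ and the bundles of the vertices of $Q$ leaves a $K_q$ joined to $qN$ independent vertices, whose densest subgraph is the whole graph with density $\frac{\binom{q}{2}+q^2N}{q+qN}$. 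I would set $\tau_\rho$ equal to this value and $k=|V(G')|-q-qN$. The forward direction is then immediate, and for the converse one argues that a deletion set $F$ of size at most $k$ with $\rho^*(G'-F)\le\tau_\rho$ must (i) leave at most a bounded number of vertices of $B$ (since $\rho(K_a)=(a-1)/2$ must stay below $\tau_\rho$), (ii) leave essentially all bundles intact (they are too numerous to delete within the budget), and (iii) by \cref{lem:fractions} and \cref{lem:nice-props} have the property that whenever a bundle whose base vertex $v$ survives is complete to the surviving part $B'$ of $B$, the density of $B'$ together with that bundle is within $o(1)$ of $|B'|$; combining these, a non-clique choice of $B'$ would push the density above $\tau_\rho$, so $B'$ must induce a clique of $G$ of size $\ge q$.

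The main obstacle is precisely this last calibration: because the clique side is rigid and the independent side contributes at most its own size in density, the ``gap'' between the YES and NO cases is narrow, so $\tau_\rho$ (a non-constant rational, roughly $q$, consistent with the remark preceding the theorem) and the bundle size $N$ must be chosen so that the densest subgraph in the YES case equals $\tau_\rho$ while in the NO case it provably exceeds it. I expect this to require (a) padding so that one may assume $q$ is at least a constant fraction of $n$, which is what makes the budget argument in (ii) work, and (b) possibly additional rigidifying gadgets on the independent side so that no deletion set can ``cheat'' by removing an unexpected mixture of $B$-vertices and bundle vertices. Once the arithmetic is pinned down, both directions are routine counting using \cref{lem:fractions} and \cref{lem:nice-props}.
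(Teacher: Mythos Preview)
Your plan has a genuine gap in the backward direction that is not merely arithmetic. With $k=(n-q)(1+N)$, as soon as $N\ge n$ one has $k\ge n$ for every $q<n$, so deleting the \emph{entire} clique side $V(G)$ is within budget and leaves an independent set with $\rho^*=0\le\tau_\rho$; hence every instance of \textsc{Clique} maps to a YES-instance of \densestVertexDeletion{}. More fundamentally, the monotonicity in your step~(iii) is inverted: if $B'$ fails to be a clique of $G$, the surviving bundles have \emph{smaller} degree into $B'$, which \emph{lowers} the density rather than pushing it above~$\tau_\rho$. Concretely, take $G=K_n$ minus a perfect matching and $q=n/2+1$ (a NO instance, as $\omega(G)=n/2$); keep as $B'$ any $q$-set containing exactly one matched pair, and spend the remaining budget $(n-q)N=(q-2)N$ on deleting precisely the bundles complete to $B'$. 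Every surviving bundle then has degree $q-1$ to $B'$, and one checks $\rho^*(G'-F)<q-1<\tau_\rho$ (the latter since $N>(q-1)/2$). So a non-clique $B'$ is rewarded, not penalised, and neither padding $q$ toward $n$ nor enlarging $N$ repairs this.

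The paper sidesteps the issue by reducing from \textsc{3-Set Cover} with \emph{complement} incidences: the clique side consists of the set-vertices together with $\ell$ dummies, the independent side consists of the element-vertices, and set $S$ is joined to element $u$ iff $u\notin S$; one takes $\tau_\rho=\ell-\tfrac12$ and $k=|\mathcal F|-\ell$. Now the clique density alone forces exactly $k$ deletions inside the clique, and any uncovered element yields an independent-side vertex of degree $\ell>\tau_\rho$ into the surviving $2\ell$-clique, which by \cref{lem:fractions} strictly raises the density above $\tau_\rho$. The key point is that here a wrong choice on the clique side makes some independent vertex \emph{too well-connected}---exactly the direction of monotonicity your construction lacks.
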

\appendixproof{thm:vertex-del-split-graph}
{
\begin{proof}
	We reduce from the NP-hard \textsc{3-Set Cover} \cite{GJ79}.
	\begin{problem}
		\problemtitle{\textsc{3-Set Cover}}
		\probleminput{A set family~$\mathcal{F} = \{S_1, \ldots, S_h\}$ over a universe~$U$ with~$|S_i| = 3$ for all~$i\in [h]$ and an integer~$\ell$.}
		\problemquestion{Is there a size-$\ell$ set cover, that is, a set~$\mathcal{F}' \subseteq \mathcal{F}$ with~$|\mathcal{F}'| \le \ell$ and~$\bigcup_{S \in \mathcal{F}'}=U$?}
	\end{problem}

	Given an instance $I=(U,\mathcal{F},\ell)$ of \textsc{3-Set Cover}, we construct an instance $(G, k, \tau_\rho)$ of \densestVertexDeletion{} as follows:
	Add a set-vertex for every set~$S_i \in \mathcal{F}$ and an element-vertex for every element~$u_j \in U$. 
	Make the set-vertices a clique, add $\ell$ dummy vertices to the clique, and add an edge between a set-vertex and element-vertex if the element is \emph{not} in the set.
	(The dummy vertices are not adjacent to any element-vertex.)
	Set~$\tau_\rho = \ell - (1/2)$ and~$k = h - \ell$.
	The element-vertices form an independent set, thus, the constructed graph is indeed a split graph. We claim that there exists a set~$\mathcal{F}' \subseteq \mathcal{F}$ with~$|\mathcal{F}'| \le \ell$ and~$\bigcup_{S \in \mathcal{F}'}=U$ if and only if there exits a subset $S$ of vertices from $G$, $|S|\leq k$ such that $\rho^*(G-S)\leq \tau_\rho$.
	
	$\Rightarrow$: Let~$\mathcal{F}'$ be a set cover of size exactly~$\ell$ (if the size is less than~$\ell$ add arbitrary sets until the size is~$\ell$).
	We claim deleting the vertices~$S$ corresponding to~$\mathcal{F} \setminus \mathcal{F}'$ is a solution:
	Clearly, $|S| = h - \ell$.
	Each element-vertex has, by construction, degree at most~$\ell - 1$.
	Thus, by \cref{rr:low-degree} we can remove them.
	What remains is a clique on~$2\ell$ vertices of density~$\ell-(1/2) = \tau_\rho$. 
	
	$\Leftarrow$: Let~$S$ be a solution of the instance $(G, k, \tau_\rho)$.
	If~$S$ contains less than~$h-\ell$ vertices from the clique, then the remaining clique has size at least~$2\ell + 1$ and, thus, density~$\ell > \tau_\rho$.
	Thus, $S$ needs to contain exactly~$h-\ell$ vertices in the clique. If $S$ contains any dummy-vertices we can simply exchange them with set-vertices without increasing the density.
	We claim that the remaining~$\ell$ set-vertices~$C$ correspond to a set cover.
	Assume otherwise.
	Then there is an element-vertex~$v$ adjacent to all remaining set-vertices, that is, $\deg(v)=\ell$.
    Since $\ell>\ell-(1/2)$, by \cref{lem:fractions}, the graph induced by the remaining clique together with $v$ has density strictly greater than $\tau_\rho$---a contradiction.
\end{proof}
}

\subsection{Parameterized complexity results}\label{sec:vertex-del-parameterized}

We show that \densestVertexDeletion{} is FPT with respect to vertex cover number and W[2]-hard with respect to $k$, the number of vertices to remove.

\begin{theorem}\label{thm:VertexDeletion_FPT_VC}
	\densestVertexDeletion{} can be solved in time $2^{O(\ell^2)} n^{O(1)}$ where~$\ell$ is the vertex cover number.
\end{theorem}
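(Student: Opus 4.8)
The plan is to first compute a minimum vertex cover $C$ with $|C| = \ell$ in $2^{O(\ell)}(n+m)$ time, so that $S = V(G) \setminus C$ is an independent set. As in the edge-deletion case, we partition $S$ into at most $2^\ell$ twin classes $I_1, \ldots, I_{2^\ell}$, where two vertices lie in the same class exactly when they have the same neighborhood (which is a subset of $C$). By \cref{lem:fractions}, there is a densest subgraph of $G-F$ (for any solution $F$) that, for each class, either takes all surviving vertices of that class or none of them — adding/removing twins changes density monotonically. The key structural observation is that, unlike edge deletion, a vertex-deletion solution $F$ is fully described by (i) which subset $C' \subseteq C$ survives (at most $2^\ell$ choices) and (ii) how many vertices survive in each class $I_i$, i.e.\ a vector $(n_1, \ldots, n_{2^\ell})$ with $0 \le n_i \le |I_i|$. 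The cost of such a solution is $(|C| - |C'|) + \sum_i (|I_i| - n_i)$.

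First I would iterate over all $2^\ell$ choices of $C'$. Given $C'$, each class $I_i$ induces a reduced class $I_i' = I_i$ with effective neighborhood $N(I_i) \cap C'$; classes with $N(I_i) \cap C' = \emptyset$ contribute only isolated vertices of density $0$ and can be kept entirely. The remaining task is: choose $n_i \in \{0, \ldots, |I_i|\}$ minimizing $\sum_i(|I_i| - n_i)$ subject to the constraint that $G[C' \cup (\text{chosen survivors})]$ has density at most $\tau_\rho$. The crucial point is that the density of any subgraph $G'$ of the surviving graph is maximized (over the choice of which specific twin-classes and which subset of $C'$ to include) by a subgraph determined by a subset $\mathcal{J} \subseteq [2^\ell]$ of classes together with a subset $C'' \subseteq C'$; its edge count is $m_{C''} + \sum_{i \in \mathcal{J}} n_i |N(I_i) \cap C''|$ and its vertex count is $|C''| + \sum_{i \in \mathcal{J}} n_i$. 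So the feasibility condition is the family of linear inequalities
\[
m_{C''} + \sum_{i \in \mathcal{J}} n_i \,|N(I_i) \cap C''| \;\le\; \tau_\rho\Bigl(|C''| + \sum_{i \in \mathcal{J}} n_i\Bigr) \qquad \forall\, C'' \subseteq C',\ \forall\, \mathcal{J} \subseteq [2^\ell].
\]
Since $C'$ is fixed, $m_{C''}$ and $|C''|$ are constants, so each such inequality is linear in the integer variables $n_1, \ldots, n_{2^\ell}$; likewise the objective is linear. Thus for each fixed $C'$ we have an integer linear program in $2^\ell$ variables with $2^{2^\ell}$ constraints. Applying Lenstra-type ILP solvers (as cited in the edge-deletion proof) in $p^{O(p)}|\mathcal{I}|$ time with $p = 2^\ell$ would give $2^{O(\ell 2^\ell)}$, which is worse than the claimed $2^{O(\ell^2)}$.

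To hit the stated bound $2^{O(\ell^2)} n^{O(1)}$ I would avoid ILP entirely and exploit the limited "resolution" of $\tau_\rho$ together with the structure of $C$. The hard part — and where the argument must be sharpened — is showing that, once $C'$ is fixed, the optimal choice of $(n_i)$ can be found combinatorially in $2^{O(\ell)}$ or $\mathrm{poly}$ time rather than via general ILP. The idea is: a subgraph's density as a function of how many class-$I_i$ vertices it contains is monotone, so the only "dangerous" subgraphs are the maximal ones for each pair $(\mathcal{J}, C'')$; moreover, within a fixed $C''$ the worst $\mathcal{J}$ simply takes every class $I_i$ with $|N(I_i)\cap C''| > \tau_\rho$ (raising density) — so only $2^\ell$ values of $C''$ matter, not $2^{2^\ell}$ pairs. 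One then observes that whether a candidate survivor-count vector is feasible reduces to checking, for each of the $2^\ell$ sets $C''$, one inequality of the form $\sum_{i} a_i n_i \le b$ where $a_i = |N(I_i)\cap C''| - \tau_\rho$; and the objective wants each $n_i$ as large as possible. Because every class with $a_i \le 0$ for all relevant $C''$ can be kept in full, and the remaining "constrained" classes interact only through these $\le 2^\ell$ inequalities, one can argue that the problem decomposes per connected component of the "conflict structure" on $C$, bounding the real search to $2^{O(\ell^2)}$ possibilities (essentially: guess, for each of the $\le 2^\ell$ subsets $C''$, which classes are fully present, partially present, or absent, encoded in $O(\ell)$ bits each, i.e.\ $2^{O(\ell) \cdot 2^\ell}$ — so in fact one should guess instead the *densest* surviving induced subgraph directly, which has at most $\ell$ vertices from $C$ and, by minimality, is determined by those $\le \ell$ vertices, giving only $2^{O(\ell)}$ guesses, then greedily maximize survivors subject to that subgraph and all its sub-subgraphs staying below $\tau_\rho$). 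I expect the main obstacle to be pinning down this greedy/structural argument so that the running time is genuinely $2^{O(\ell^2)}$ and not merely $2^{O(\ell 2^\ell)}$; the rest (twin classes, monotonicity via \cref{lem:fractions}, \cref{rr:low-degree} to clean up low-degree vertices) is routine and already established in the excerpt.
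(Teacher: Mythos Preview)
Your setup (vertex cover $C$, independent set $S$, twin classes $I_1,\dots,I_{2^\ell}$, monotonicity via \cref{lem:fractions}) matches the paper, but you miss the one observation that makes the whole problem trivial and delivers the $2^{O(\ell^2)}$ bound directly: if $k\ge \ell$ then deleting all of $C$ already yields density~$0$, so the instance is a yes-instance. Hence one may assume $k\le \ell-1$, i.e.\ the solution deletes \emph{at most $\ell-1$ vertices in total}. Since vertices inside a class $I_i$ are indistinguishable twins, a deletion set is specified by a subset of $C$ (at most $2^\ell$ choices) together with a multiset of at most $\ell-1$ class indices from $\{1,\dots,2^\ell\}$ (at most $(\ell\cdot 2^\ell)^\ell$ choices). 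Enumerating all of these and checking density in polynomial time gives the claimed $2^{O(\ell^2)}n^{O(1)}$ running time with no ILP and no structural analysis.

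Your approach instead treats the $n_i$ as unbounded integer variables and then tries to solve the resulting system, which is exactly what forces you into ILP territory and the $2^{O(\ell 2^\ell)}$ bound. The subsequent attempt to recover $2^{O(\ell^2)}$ by greedy/structural arguments is not an argument yet: you correctly note that for fixed $C''$ the worst $\mathcal J$ is determined, cutting the constraint count to $2^\ell$, but the optimization is still an ILP in $2^\ell$ variables with $2^\ell$ constraints, and nothing you wrote shows it decomposes or admits a greedy solution. The ``guess the densest surviving subgraph'' idea at the end also does not work as stated, because knowing one subgraph stays below $\tau_\rho$ does not certify that \emph{all} subgraphs do. The fix is not to sharpen this analysis but to discard it: once you know only $\ell-1$ vertices are deleted, brute force is already within budget.
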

\begin{proof}
	Let $(G,k,\tau_\rho)$ be an instance of \densestVertexDeletion{} where $G$ has vertex cover number $\ell$. 
	One can find a minimum vertex cover of size $\ell$ in time $O(2^{\ell} + n + m)$ \cite{CFKL2015}. 
	Denote by $C$ the set of vertices that belongs to the minimum vertex cover, and by $S=V(G)-C$ the set of vertices in the independent set. 
	We divide the vertices in $S$ into at most $2^\ell$ classes $I_1, \ldots, I_{2^{\ell}}$, where two vertices $v_1,v_2\in S$ are in the same class $I_i$ if they have the same neighbors in $C$. 

	Notice that for $k\geq \ell$ we always have a yes-instance, as deleting the $\ell$ vertices in the vertex cover yields a graph with density $0$. 
	Thus, we are interested in the case where we delete at most $\ell-1$ vertices, that is $k\leq \ell-1$.
	
	The vertices in each $I_i$ (with $i=1,\ldots, 2^\ell$) are all indistinguishable from each other, and we need to delete between $0$ and $\ell-1$ vertices $S$. 
	Thus we need to check at most $(\ell{2^\ell})^\ell$ sets of vertices from $S$ as candidates for deletion. 
	For vertices in $C$ we check all $2^{\ell}$ subsets of vertices from $C$ as candidates for deletion. 
	In total, we check at most $2^{\ell} (\ell{2^\ell})^\ell$ subsets as candidates for deletion, yielding a running time of $2^{\ell^2 + \ell\log \ell +\ell} n^{O(1)}$.
\end{proof}

\begin{remark}
	Note that if $\tau_\rho \ge \ell$, then nothing needs to be deleted:
	On the one hand, any vertex from the independent set has degree at most $\ell\le\tau_\rho$. By
	\cref{lem:nice-props} (\ref{niceprop-minneighbors-within-densest}.) no vertex of the independent set belongs to a subgraph of density greater than $\tau_\rho$. On the other hand, the vertex cover has density at most $(\ell-1)/2<\tau_\rho$.
\end{remark}

The following hardness result holds for large (non-constant) values of~$\tau_\rho$.
\begin{theorem}
	\label{thm:vertex-del-w2}
	\densestVertexDeletion{} is W[2]-hard with respect to the number~$k$ of vertices to remove.
\end{theorem}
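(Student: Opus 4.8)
The plan is to give a parameterized reduction from \textsc{Set Cover}, parameterized by the size~$\ell$ of the desired cover, which is W[2]-complete. Recall that an instance consists of a universe~$U$, a family~$\mathcal{F}=\{S_1,\dots,S_m\}$ of subsets of~$U$, and an integer~$\ell$; we may assume $\bigcup_{S\in\mathcal{F}}S=U$, since otherwise the instance is a trivial no-instance. The workhorse is the elementary fact that the complete bipartite graph~$K_{p,N}$ has density $pN/(p+N)$, which lies strictly between $p-1$ and~$p$ once $N$ is large enough: using \cref{lem:fractions}, $\rho(K_{p,N})>p-1/2$ iff $N>2p^2-p$, while $\rho(K_{p-1,N'})<p-1$ for every~$N'$ as soon as $p\ge 2$. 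Hence deleting a single vertex from the small side of such a gadget pushes its density below~$p-1$, and if the threshold~$\tau_\rho$ is placed at $p-1/2$ this cannot be undone by also deleting a few vertices from the (huge) other side.

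\emph{Construction.} For each element~$u\in U$ I would create a private independent set~$R_u$ of~$N$ vertices and a \emph{left side}~$L_u$ consisting of one \emph{set-vertex}~$v_S$ for every~$S\in\mathcal{F}$ with~$u\in S$, together with fresh private \emph{dummy} vertices so that $|L_u|=P$ for all~$u$, where $P:=1+\max_{u\in U}|\{\,S\in\mathcal{F}:u\in S\,\}|$. Make~$L_u$ completely adjacent to~$R_u$. The set-vertices~$v_S$ are shared among the gadgets of all elements of~$S$, while everything else is private to one gadget; in particular the resulting graph~$H$ is bipartite, one part being all set-vertices together with all dummy vertices, the other being $\bigcup_{u}R_u$. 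Finally set $N:=2P^2+n+m+\ell$ (with $n=|U|$, $m=|\mathcal{F}|$), $\tau_\rho:=P-1/2$, and $k:=\ell$. Note that each gadget $H[L_u\cup R_u]=K_{P,N}$ already has density $>\tau_\rho$ by the choice of~$N$, whereas any~$K_{P-1,N'}$ has density $<\tau_\rho$.

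\emph{Correctness.} First I would show that, for any vertex set~$S$ with~$|S|\le k$, we have $\rho^*(H-S)\le\tau_\rho$ if and only if $L_u\cap S\neq\emptyset$ for every~$u\in U$, i.e.\ every gadget is hit. If every~$L_u$ loses a vertex, then in~$H-S$ every vertex of~$\bigcup_u R_u$ has at most~$P-1$ neighbours; since all of these vertices lie on one side of the bipartition, any subgraph~$X$ with~$b\ge 1$ vertices in~$\bigcup_u R_u$ has at most~$(P-1)b$ edges and at least~$b$ vertices, so $\rho(X)\le P-1<\tau_\rho$, while subgraphs disjoint from~$\bigcup_u R_u$ are edgeless. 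Conversely, if some~$L_u$ is untouched then~$H-S$ still contains~$K_{P,N-|S|}\supseteq K_{P,N-k}$, whose density exceeds~$\tau_\rho$ because $N-k>2P^2-P$. It then remains to show that~$\mathcal{F}$ has a cover of size~$\le\ell$ iff some vertex set~$S$ with~$|S|\le\ell$ hits every~$L_u$. Given a cover~$\mathcal{F}'$, delete $\{\,v_S:S\in\mathcal{F}'\,\}$: covering~$u$ means some~$S\in\mathcal{F}'$ contains~$u$, hence $v_S\in L_u\cap S$. For the other direction, let~$A$ be the set of elements~$u$ whose gadget is hit \emph{only} through a (private) dummy vertex; since distinct gadgets have disjoint dummy sets, $S$ contains at least~$|A|$ dummy vertices, hence at most~$\ell-|A|$ set-vertices. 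For~$u\notin A$ the gadget is hit by some~$v_S$ with~$u\in S$, so the sets selected by~$S$ already cover~$U\setminus A$; adding one arbitrary set through each~$u\in A$ (possible since $\bigcup\mathcal{F}=U$) yields a cover of size at most $(\ell-|A|)+|A|=\ell$. As~$k=\ell$ and~$H$ is computable in polynomial time, this is a parameterized reduction.

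\emph{Main obstacle.} The delicate part will be the density bookkeeping: $N$ must be chosen large enough that a gadget stays dense even after up to~$k$ of its huge-side vertices are removed, yet a single small-side deletion must provably drop the density below~$\tau_\rho$, and—most importantly—no subgraph spanning several damaged gadgets may be dense. This last point is exactly where the bipartite structure is used: putting all $R_u$-vertices on one side makes them carry every edge, so capping their degree at $P-1$ caps the density of \emph{every} subgraph. The second subtlety is the dummy vertices: hitting a gadget "for free" through a private dummy must not allow one to cheat the set-cover correspondence, which is precisely why the dummy sets are made pairwise disjoint and why we assumed $\bigcup\mathcal{F}=U$.
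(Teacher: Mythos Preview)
Your proof is correct and constitutes a genuinely different reduction from the one in the paper, though both start from \textsc{Set Cover}. The paper embeds the set-vertices into a large clique $V_A\cup V_{\mathcal{F}}$ of size $h+\ell+1$ (with $\tau_\rho=h/2$), so that any feasible deletion set must remove exactly $\ell$ vertices from that clique just to bring its density down to~$\tau_\rho$; element-vertices are then wired (via a second, disjoint clique $V_B$) to have degree exactly $h/2+1$, so an uncovered element together with the resulting $h$-regular graph violates \cref{lem:nice-props}~(\ref{niceprop-maxneighbors-in-densest}.). Your construction instead glues per-element $K_{P,N}$ gadgets along shared set-vertices on the small side and argues via a one-line degree bound on the large side of the resulting bipartite graph. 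What your route buys is a bipartite instance and a very clean density certificate (once every $L_u$ is hit, every $R$-vertex has degree $\le P-1$, so \emph{every} subgraph has density $\le P-1$), at the price of a somewhat larger graph (size cubic in $|\mathcal{F}|$) and the need for the dummy-swap argument to extract a cover. The paper's route is more compact and leans on the lemmas about densest subgraphs already proved earlier, but the resulting graph contains large cliques and the analysis has more moving parts. One cosmetic remark: in your converse direction you silently allow $S$ to contain $R_u$-vertices when you write ``at most $\ell-|A|$ set-vertices''; this is still correct (the inequality only gets looser), but it would be cleaner to note explicitly that any $R_u$-vertices in~$S$ only help your count.
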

\appendixproof{thm:vertex-del-w2}
{
\begin{proof}
	We provide a parameterized reduction from \textsc{Set Cover}, which is W[2]-complete with respect to the solution size~$\ell$~\cite{CFKL2015}.
	\begin{problem}
		\problemtitle{\textsc{Set Cover}}
		\probleminput{A set family~$\mathcal{F} = \{S_1, \ldots, S_q\}$ over a universe~$U$ and an integer~$\ell$.}
		\problemquestion{Is there a size-$\ell$ set cover, that is, a set~$\mathcal{F}' \subseteq \mathcal{F}$ with~$|\mathcal{F}'| \le \ell$ and~$\bigcup_{S \in \mathcal{F}'}=U$?}
	\end{problem}
	
	Given an  instance~$(\mathcal{F},U,\ell)$ of \textsc{Set Cover}, we construct an instance~$(G, k = \ell, \tau_\rho)$ of \densestVertexDeletion{} as follows:
	We assume without loss of generality that~$|\mathcal{F}|$ is even (if not, duplicate any set). Graph $G$ is constructed as follows.
	For each element~$u \in U$ add a vertex~$w_u$.
	Let~$V_U = \{w_u \mid u \in U\}$ be the set of all ``element vertices''. 
	Next, for each set~$S \in \mathcal{F}$ add a vertex~$v_S$ and add for each $u \in U$ the edge~$\{v_S, w_u\}$ if $u \in S$. Call the vertices of all ``set vertices'' $V_{\mathcal{F}} = \{v_S \mid S \in \mathcal{F}\}$.
		Set~$h = 2(|U| + |\mathcal{F}|)$ and~$\tau_\rho = h/2$. 
	Add two sets $V_A$ and~$V_B$ of vertices. 
	The set~$V_A$   contains $h-|\mathcal{F}|+1+\ell$ vertices.
	Make~$V_A \cup V_{\mathcal{F}}$ a clique, that is, add all edges between the vertices in~$V_A \cup V_{\mathcal{F}}$. 
	Thus, $G[V_A \cup V_{\mathcal{F}}]$ is~$(h + \ell)$-regular.
	The set~$V_B$ is a clique of size~$h+1$.
	Thus, $G[V_B]$ is~$h$-regular.
	Finally, make each vertex in~$V_U$ adjacent to arbitrary vertices in~$V_B$ so that each vertex in~$V_U$ has exactly~$h/2 + 1$ neighbors in~$G$ (recall that~$h$ is even).
	This finishes the construction of $G$, see \cref{fig:w[2]-hard-vertex-removal-example} for an illustration. 
	
	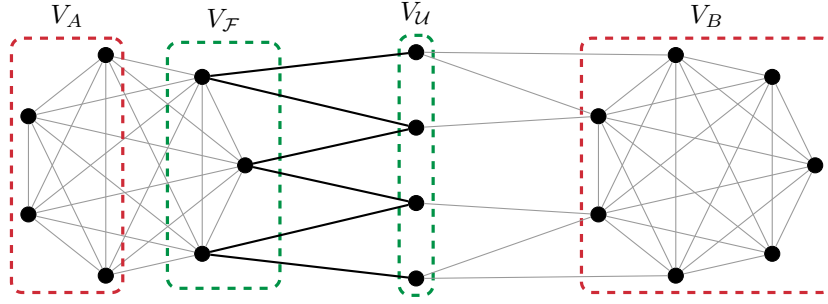
\begin{figure}[t]
		\centering
		\begin{tikzpicture}[scale = 1]
			\def\k{7}
			\def\radius{1.5}
			\pgfmathtruncatemacro{\kk}{\k - 1}
			\foreach \i in {1,...,\k}
			{
				\node[circle,draw,fill,inner sep=2pt] at ({\i * 360 / \k - 2 * 360 / \k}:\radius) (v-\i) {};
				\node[circle,draw,fill,inner sep=2pt,xshift=5*\radius cm] at ({\i * 360 / \k - 2 * 360 / \k}: \radius) (w-\i) {};
			}
			\foreach[count=\xi from 2] \i in {1,...,\kk}
			{
				\foreach \j in {\xi,...,\k}
				{
					\draw[black!40] (v-\i) to (v-\j);
					\draw[black!40] (w-\i) to (w-\j);
				}
			}
			\node[inner sep = 10pt,very thick, italyGreen, draw,rounded corners, dashed,fit=(v-1) (v-2) (v-3),label=above:$V_{\mathcal{F}}$] {};
			\node[very thick, italyRed, draw,rounded corners, dashed,fit=(v-4) (v-5) (v-6) (v-7),label=above:$V_{A}$] {};
			\node[very thick, italyRed, draw,rounded corners, dashed,fit=(w-1) (w-2) (w-3) (w-4) (w-5) (w-6) (w-7),label=above:$V_{B}$] {};
			
			\foreach \i in {1,...,4}
			{
				\node[circle,draw,fill,inner sep=2pt] at (2.5 * \radius,\i - 2.5) (u-\i) {};
			}
			\node[very thick, italyGreen, draw,rounded corners, dashed,fit=(u-1) (u-4),label=above:$V_{\mathcal{U}}$] {};
			
			\foreach \x / \y in {1/1, 1/2, 2/2, 2/3, 3/3, 3/4}
			{
				\draw[thick] (v-\x) to (u-\y);
			}
			\foreach \x / \y in {1/6, 1/7, 2/6, 3/5, 4/5, 4/4}
			{
				\draw[black!40] (u-\x) to (w-\y);
			}
		\end{tikzpicture}
		\caption{
			A schematic illustration of the reduction in \cref{thm:vertex-del-w2}. 
			The \textsc{Set Cover} instance is encoded in the displayed graph with the edges between~$V_{\mathcal{F}}$ and~$V_U$ encoding the sets and elements.
			The two red marked vertex sets~$V_A$ and~$V_B$ are dummy vertices to ensure certain degrees of the vertices in~$V_{\mathcal{F}}$ and~$V_U$.
		}
		\label{fig:w[2]-hard-vertex-removal-example}
	\end{figure}
	
	Clearly, the reduction runs in polynomial-time.
	It thus remains to prove the correctness, that is, show that~$(\mathcal{F},U,\ell)$ is a yes-instance of \textsc{Set Cover} if and only if~$(G, k = \ell, \tau_\rho)$ is a yes-instance of \densestVertexDeletion{}.
	
	``$\Rightarrow:$''
	Let~$\mathcal{F}'$ be a size-$\ell$ set cover. 
	We claim that deleting~$V_{\mathcal{F}'} = \{v_S \mid S \in \mathcal{F}'\}$ results in a graph with~$\rho^*(G-V_{\mathcal{F}'}) \le h/2$.
	Note that~$G - V_{\mathcal{F}'} - V_U$ is~$h$-regular and thus its densest subgraph has density~$h/2$ (to see this recall that the density of a graph is half its average degree).
	Moreover, the set~$V_U$ is an independent set with each vertex having degree at most~$\tau_\rho = h/2$ in~$G - V_{\mathcal{F}'}$ (as~$\mathcal{F}'$ is a set cover).
	Hence, it follows from \cref{lem:fractions} that adding any subset of~$V_U$ to~$G - V_{\mathcal{F}'}$ will not change~$\rho^*$.
	
	``$\Leftarrow:$'' 
	Let~$S$ be the set of vertices to remove with~$|S| \le k = \ell$. 
	Note that~$G[V_A \cup V_{\mathcal{F}}]$ is a clique on~$h + \ell + 1$ vertices.
	Thus~$S \subseteq V_A \cup V_{\mathcal{F}}$ and~$|S| = \ell$ as otherwise~$G[(V_A \cup V_{\mathcal{F}}) \setminus S]$ has density more than~$h/2$.
	Hence, no vertex in~$V_U$ or $V_B$ can be deleted.
	Set~$G' = G[(V_A \cup V_{\mathcal{F}} \cup V_B) \setminus S]$.
	Note that~$G'$ is~$h$-regular and thus of density exactly~$h/2$. 
	We claim~$V_{\mathcal{F'}} = V_{\mathcal{F}} \cap S$ encodes a set cover of size at most~$\ell$.
	Clearly~$|V_{\mathcal{F'}}| \le \ell$ as~$|S| = \ell$.
	Assume towards a contradiction that~$V_{\mathcal{F'}}$ does not encode a set cover, that is, there is an element~$u \in U$ such that~$u \notin \bigcup_{v_S \in V_{\mathcal{F'}}} S$.
	Thus, $v_u$ has~$h/2+1$ neighbors in~$G'$.
	This contradicts \cref{lem:nice-props} (\ref{niceprop-maxneighbors-in-densest}.).
\end{proof}
}

\section{Conclusion}
Our work provides a first (parameterized) analysis of \densestEdgeDeletion{} and \densestVertexDeletion.
Both problems turn out to be NP-hard even in restricted cases but polynomial-time solvable on trees and cliques.
While the W[1]-hardness for \densestEdgeDeletion{} with respect to the feedback edge number rules out fixed-parameter tractability with respect to many other parameterizations, the respective reduction relies on a very specific target density.
In fact, \cref{fig:density-edge-deletion} seems to indicate that the computational complexity of \densestEdgeDeletion{} might change when altering the target density~$\tau_\rho$ a bit.
Does this alternating pattern between tractable and intractable target density extend beyond target density 2?
For example the polynomial-time solvable $f$-\textsc{Factor} problem might be helpful in designing polynomial-time algorithms for \densestEdgeDeletion{} with~$\tau_\rho \in \NN$ (any $2\tau_\rho$-regular subgraph would be a valid resulting graph~$G-F$).
Could such behavior be exploited in approximation algorithms?

We leave also several open questions concerning the parameterized complexity of these problems.
Is \densestVertexDeletion{} fixed-parameter tractable with respect to the treewidth (plus the solution size)?
What is the parameterized complexity with respect to parameters that are smaller than the vertex cover number, e.\,g., vertex integrity or twin cover number?

\bibliography{cleandensest.bib}

\end{document}